\newcommand{\dashapprox}{\mid\hspace{-.07cm}\sim}
\begin{document}

\title{Handling Inconsistencies in Tables with Nulls and Functional Dependencies}

\titlerunning{Handling Inconsistencies in Tables with Nulls and Functional Dependencies}

%
%
\author{Dominique Laurent\and Nicolas Spyratos}

\institute{Dominique Laurent \at ETIS Laboratory - ENSEA, CY Cergy Paris University, CNRS\\F-95000 Cergy-Pontoise, France\\ \mailsa 
\and
Nicolas Spyratos \at LISN Laboratory - University Paris-Saclay, CNRS\\
F-91405 Orsay, France\\
\mailsb\\~\\
{\bf Acknowledgment:} Work conducted while the second author was visiting at FORTH Institute of Computer Science, Crete, Greece (https://www.ics.forth.gr/)}

%
\maketitle
\begin{abstract}
In this paper we address the problem of handling inconsistencies in tables with missing values (also called nulls) and functional dependencies. Although the traditional view is that table instances must respect all functional dependencies imposed on them, it is nevertheless relevant to develop theories about how to handle instances that violate some dependencies. Regarding missing values, we make no assumptions on their existence: a missing value exists only if it is inferred from the functional dependencies of the table. 

We propose a formal framework in which each tuple of a table is associated with a truth value among the following: true, false, inconsistent or unknown; and we show that our framework can be used to study important problems such as consistent query answering, table merging, and data quality measures - to mention just a few. In this paper, however, we focus mainly on consistent query answering, a problem that has received considerable attention during the last decades.

The main contributions of the paper are the following: (a) we introduce a new approach to handle inconsistencies in a table with nulls and functional dependencies, (b) we give algorithms for computing all true, inconsistent and false tuples, (c) we investigate the relationship between our approach and Four-valued logic in the context of data merging, and (d) we give a novel solution to the consistent query answering problem and compare our solution to that of table repairs. 

\begin{keywords}{Inconsistent database~.~Functional dependency~.~Null value~.~Data merging~.~Consistent query answering~}
\end{keywords}
\end{abstract}

\section{Introduction}\label{sec:intro}
In several applications today we encounter tables with missing values and functional dependencies.  Such a table is often the result of merging two or more other tables coming from different sources. Typical examples include recording the results of collaborative work, merging of tables during data staging in data warehouses or checking the consistency of a relational database. 

As an example of collaborative work consider two groups of researchers each studying three objects found in an archaeological site. The researchers of each group record in a table data regarding the following attributes of each object: 
\begin{itemize}
\item Identifier (here of the form $i_n$ where $n$ is an integer, distinct objects being associated with distinct identifiers)
\item Kind (such as statue, weapon, \ldots)
\item Material from which the object is made (such as iron, bronze, marble, \ldots)
\item Century in which the object is believed to have been made.
\end{itemize}
At the end of their work each group submits their findings to the site coordinator in the form of a table as shown in Figure~\ref{fig:ex-tables-intro} (tables $D_1$ and $D_2$). Each row of a table contains data recorded for a single object. For example, the row $(i_1, statue, marble, 1.BC)$ means that object $i_1$ is a statue made of marble and believed to have been made in the first century before Christ. Similarly the row $(i_2, statue,  , 2.BC)$ means that object $i_2$ is a statue of unknown material, believed to have been made during the second century before Christ. 
Note that, in this tuple, there is a missing value, meaning that the material from which object $i_2$ is made could not be determined. 

\begin{figure}[ht]
\begin{center}
{\footnotesize
\begin{tabular}{c|llll}
$D_1$&$Id$&$K$&$M$&$C$\\
\hline
&$i_1$&$k$&$m$&$c$\\
&$i_1$&&$m'$&\\
&$i_2$&$k'$&$m'$&$c$\\
&$i_2$&$k'$&$m''$&\\
&$i_3$&&$m$&\\
\end{tabular}
\qquad\qquad
\begin{tabular}{c|llll}
$D_2$&$Id$&$K$&$M$&$C$\\
\hline
&$i_1$&$k$&&$c$\\
&$i_2$&$k'$&&$c'$\\
&$i_2$&$k'$&$m''$&\\
&$i_3$&$k'$&\\
\end{tabular}

~\\~\\
\begin{tabular}{c|llll}
$D$&$Id$&$K$&$M$&$C$\\
\hline
&$i_1$&$k$&$m$&$c$\\
&$i_1$&&$m'$&\\
&$i_1$&$k$&&$c$\\
&$i_2$&$k'$&$m'$&$c$\\
&$i_2$&$k'$&$m''$&\\
&$i_2$&$k'$&&$c'$\\
&$i_3$&&$m$&\\
&$i_3$&$k'$&&\\
\end{tabular}
}
\end{center}
\caption{The tables prepared by the two groups and the merged table}
\label{fig:ex-tables-intro}
\end{figure}

Now, the data contained in the two tables can be merged into a single table $D$ containing all tuples from the two tables, without duplicates, as shown in Figure~\ref{fig:ex-tables-intro} (table $D$). In doing this merging, we may have discrepancies between tuples of $D$. For example, object $i_1$ appears in $D$ as being made from two different materials; and object $i_2$ appears as made from two different materials and in two different centuries. This kind of discrepancies may lead to `inconsistencies' that should be identified by the site coordinator and resolved in cooperation with the researchers of the two groups. 

It should be obvious from this example that the merging of two or more tables into a single table more often than not results in inconsistencies even if the individual tables are each consistent. For example, although each of the tables $D_1$ and $D_2$ shown in Figure~\ref{fig:ex-tables-intro} satisfies the functional dependencies $Id \to K$ and $Id \to C$, the merged table $D$ does not satisfy $Id \to C$. 

A similar situation arises in data warehouses where one tries to merge views of the underlying sources into a single materialized view to be stored in the data warehouse. 

As a last example, in a relational database, although each table may satisfy its functional dependencies, the database as a whole may violate some dependencies.
To determine whether the database is consistent with its dependencies, one proceeds as follows: all tables are merged by placing their tuples into a single universal table $D$ possibly with missing values (under certain assumptions discussed in \cite{Vardi88}); then all functional dependencies are applied on $D$ through the well known chase algorithm \cite{FaginMU82,Ullman}. If the algorithm terminates successfully ({\em i.e.,} no inconsistency is detected) then the database is consistent; otherwise the algorithm stops when a first inconsistency is detected and the database is declared inconsistent. 

So in general the question is: what should we do when a table is inconsistent? There are roughly three approaches: $(a)$ reject the table, $(b)$ try to correct or `repair' it so that to make it consistent (and therefore be able to work with the repaired table) and $(c)$ keep the table as is but make sure you know which part is consistent and which is not. 

The first approach is followed by database theorists when checking database consistency, as explained above. This approach is clearly not acceptable in practice as the universal table might contain a consistent set of tuples that can be useful to users ({\em e.g.,} users can still query the consistent part of the table).

The second approach tries to alleviate the impact of inconsistent data on the answers to a query by introducing the notion of repair: a repair is a minimally different consistent instance of the table and an answer is consistent if it is present in every repair. This approach, referred to as `consistent query answering', has motivated important research efforts during the past two decades and is still the subject of current research.  The reader is referred to Section~\ref{sec:query} for a brief overview of the related literature. However, this approach is always difficult to implement due to important issues related to computational complexity and/or to semantics (there is still no consensus regarding the definition of `consistent answer').

In our work we follow the third approach that is, we keep inconsistencies in the table but we determine which part of the table is consistent and which is not. More specifically, we use set theoretic semantics for tuples and functional dependencies that allow us to associate each tuple of the table with one truth value among the following: true, false, inconsistent or unknown. By doing so we can study a number of important problems including in particular the problem of consistent query answering, and the definition of data quality measures. 

Regarding consistent query answering, our model offers a fundamentally different and direct solution to the problem: the consistent answer is obtained by simply retrieving true tuples that fit the query requirements. 

Moreover our approach offers the possibility of defining meaningful data quality measures. For example if a table contains a hundred tuples of which only five are true while the remaining ones are inconsistent, then the quality of data contained in the table is five percent. Since we have polynomial algorithms for computing all true, false and inconsistent tuples, we can define several quality measures of the data contained in a table, inspired by the work in \cite{Parisi19}. We can then use such measures to accompany query answers so that users are informed of the quality of the answer they receive ({\em e.g.,} getting an answer from a table with ninety five per cent of true tuples is more reliable than if the table contained only five per cent of true tuples). 
However, defining and studying such measures lies outside the goals of the present paper. In this paper we focus on one important  application of our approach, namely consistent query answering. A complete account of data quality measures will be reported in a future paper. 

The main contributions of the present paper can be summarized as follows:
\begin{enumerate}
    \item 
We introduce a new approach to handle inconsistencies in a table with nulls and functional dependencies; we do so by adapting the set theoretic semantics of \cite{Spyratos87} to our context and by extending the chase algorithm so that all inconsistencies are accounted for in the table.
\item We give polynomial algorithms in the size of the table for computing all true and all inconsistent  tuples in the table.
\item We investigate the relationship of our approach with Four-valued logic in the context of data merging.
\item We propose a novel approach for consistent query answering and we investigate how our approach relates to existing approaches.
\end{enumerate}
The paper is organized as follows: In Section~\ref{sec:model} we recall basic definitions and notations regarding tables and we introduce the set theoretic semantics that we use in our work. In Section~\ref{sec:FD} we give definitions and properties regarding the truth values that we associate with tuples.  In Section~\ref{sec:chase} we study computational issues and give algorithms for computing the truth values of tuples. In Section~\ref{sec:four-logic}, we show how our approach relates to Four-value logic when merging two or more tables. In Section~\ref{sec:query} we present a novel solution to the problem of consistent query answering and compare it to existing approaches. Section~\ref{sec:conclusion} contains concluding remarks and suggestions for further research.

\section{The Model}\label{sec:model}
In this section we present the basic definitions regarding tuples and tables as well as the set theoretic semantics that we use for tuples and functional dependencies. Our approach builds upon earlier work on the partition model \cite{Spyratos87}.
\subsection{The Partition Model Revisited}
Following \cite{Spyratos87}, we consider a universe $U =\{A_1, \ldots , A_n\}$ in which every attribute $A_i$ is associated with a set of atomic values called the domain of $A_i$ and denoted by $dom(A_i)$. An element of $\bigcup_{A \in U}dom(A)$ is called a {\em domain constant} or a {\em constant}. We call {\em relation schema} (or simply {\em schema}) any nonempty subset of $U$ and we denote it by the concatenation of its elements; for example $\{A_1, A_2\}$ is simply denoted by $A_1A_2$. Similarly, the union of schemas $S_1$ and $S_2$ is denoted as $S_1S_2$ instead of $S_1 \cup S_2$.

We define a {\em tuple} $t$ to be a partial function from $U$ to $\bigcup_{A \in U} dom(A)$ such that, for every $A$ in $U$, if $t$ is defined over $A$ then $t(A)$ belongs to $dom(A)$. The domain of definition of $t$ is called the {\em schema} of $t$, denoted by $sch(t)$. We note that tuples in our approach satisfy the {\em First Normal Form} \cite{Ullman} in the sense that each tuple component is an atomic value from an attribute domain.

Regarding notation, we follow the usual convention that, whenever possible, lower-case characters denote domain constants and  upper-case characters denote the corresponding attributes. Following this convention the schema of a tuple $t=ab$ is $AB$ and more generally, we denote the schema of $t$ as $T$.

Assuming that the schema of a tuple $t$ is understood, $t$ is denoted by the concatenation of its values, that is: $t=a_{i_1} \ldots a_{i_k}$ means that for every $j=1, \ldots ,k$, $t(A_{i_j})=a_{i_j}$, $a_{i_j}$ is in $dom(A_{i_j})$, and $sch(t)=A_{i_1} \ldots  A_{i_k}$.

We assume that for any distinct attributes $A$ and $B$, we have either $dom(A)=dom(B)$ or $dom(A) \cap dom(B)=\emptyset$. However, this may lead to ambiguity when two attributes have the same domain. Ambiguity can be avoided by prefixing each value of an attribute domain with the attribute name. For example, if $dom(A)=dom(B)$ we can say `an $A$-value $a$' to mean that $a$ belongs to $dom(A)$, and `a $B$-value $a$' to mean that $a$ belongs to $dom(B)$. In order to keep the notation simple we shall omit prefixes whenever no ambiguity is possible.

Denoting by ${\cal T}$ the set of all tuples that can be built up given a universe $U$ and the corresponding attribute domains, a {\em table} $D$ is a {\em finite} sub-set of ${\cal T}$ where duplicates are {\em not allowed}.

\smallskip
Given a tuple $t$, for every $A$ in $sch(t)$, $t(A)$ is also denoted by $t.A$ and more generally, for every subset $S$ of $sch(t)$ the restriction of $t$ to $S$, also called {\em sub-tuple} of $t$, is denoted by $t.S$. In other words, if $S \subseteq sch(t)$, $t.S$ is the tuple such that $sch(t.S)=S$ and for every $A$ in $S$, $(t.S).A=t.A$.

 Moreover, $\sqsubseteq$ denotes the `sub-tuple' relation, defined over ${\cal T}$ as follows: for any tuples $t_1$ and $t_2$, $t_1 \sqsubseteq t_2$ holds if $t_1$ is a sub-tuple of $t_2$. It is thus important to keep in mind that whenever $t_1 \sqsubseteq t_2$ holds, it is understood that $sch(t_1) \subseteq sch(t_2)$ also holds.

The relation $\sqsubseteq$ is clearly a partial order over ${\cal T}$. Given a table $D$, the set of all sub-tuples of the tuples in $D$ is called the {\em lower closure} of $D$ and it is defined by: ${\sf LoCl}(D) = \{q \in {\cal T}~|~(\exists t \in D)(q \sqsubseteq t)\}$. We shall call a table {\em reduced} if it contains only maximal tuples ({\em i.e.,} if no tuple in the set is sub-tuple of some other tuple in the set). 

\smallskip
The notion of  {\em ${\cal T}$-mapping}, as defined below, generalizes that of interpretation  defined in \cite{Spyratos87}.
\begin{definition}\label{def:interpretation}
Let $U$ be a universe. A {\em ${\cal T}$-mapping} is a mapping $\mu$ defined from $\bigcup_{A \in U}dom(A)$ to $2^{\mathbb N}$. A {\em ${\cal T}$-mapping} $\mu$ can be extended to the set ${\cal T}$ as follows: for every $t=a_{i_1} \ldots a_{i_k}$ in ${\cal T}$,  $\mu(t)= \mu(a_{i_1}) \cap \ldots \cap \mu(a_{i_k})$.

A ${\cal T}$-mapping $\mu$ is an {\em interpretation} if $\mu$ satisfies the {\em partition constraint} stating that for every $A$ in $U$, and for all distinct $a$ and $a'$ in $dom(A)$, $\mu(a) \cap \mu(a') =\emptyset$.
\end{definition}
We emphasize that in \cite{Spyratos87} interpretations provide the basic tool for defining true tuples: a tuple $t$ is said to be true in an interpretation $\mu$ if $\mu(t)$ is nonempty. 

To see the intuition behind this definition consider a relational table $D$ over $U$ and suppose that each tuple is associated with a unique identifier, say an integer. Now, for every $A$ in $U$ and every $a$ in $dom(A)$, define $\mu(a)$ to be the set of all identifiers of the tuples in $D$ containing $a$. Then $\mu$ is an interpretation as it satisfies the partition constraint. Indeed, due to the fact that, for every attribute $A$ in $U$, a tuple $t$ can not have more than one $A$-value, it is then impossible that $\mu(a) \cap \mu(a')$ be nonempty for any distinct values $a$, $a'$ in $dom(A)$.

Incidentally, if for every $A$ in $U$ we denote by $dom^*(A)$ the set of all $A$-values such that $\mu(a) \ne \emptyset$, then the set $\{\mu(a)~|~a \in dom^*(A)\}$ is a {\em partition} of $\bigcup_{a \in dom^*(A)}\mu(a)$ (whence the name ``partition model"). The following example illustrates this important feature.
\begin{example}\label{ex:partition}
Considering $U=\{A,B,C\}$ and $D=\{ab, bc, ac, a'b', b'c', abc\}$, the tuples in $D$ can be respectively assigned the identifiers $1$, $2$, $3$, $4$, $5$ and $6$. In that case, we have $\mu(a)=\{1,3,6\}$, $\mu(a')=\{4\}$, $\mu(b)=\{1,2,6\}$, $\mu(b')=\{4,5\}$, $\mu(c)=\{2,3,6\}$, $\mu(c')=\{5\}$, and $\mu(\alpha) = \emptyset$ for any constant $\alpha$ different than $a$, $a'$, $b$, $b'$, $c$ and $c'$.

It is clear that the ${\cal T}$-mapping $\mu$ is an interpretation and, since $dom^*(A)$, $dom^*(B)$ and $dom^*(C)$ are respectively equal to $\{a,a'\}$, $\{b,b'\}$ and $\{c,c'\}$, it is easy to see that $\{\mu(\alpha)~|~\alpha \in dom^*(A)\}$ is a partition of $\{1,3,4,6\}$, $\{\mu(\beta)~|~\beta \in dom^*(B)\}$ is a partition of $\{1,2,4,5,6\}$, and $\{\mu(\gamma)~|~\gamma \in dom^*(C)\}$ is a partition of $\{2,3,5,6\}$.

Moreover, extending $\mu$ to non unary tuples yields the following regarding the tuples in $D$: $\mu(ab)=\{1,6\}$, $\mu(bc)=\{2,6\}$, $\mu(ac)=\{3,6\}$, $\mu(a'b')=\{4\}$, $\mu(b'c')=\{5\}$, and $\mu(abc)=\{6\}$.\hfill$\Box$
\end{example} 
Summarizing our discussion, when dealing with consistent tables in \cite{Spyratos87}, only interpretations are relevant. In the present work, we follow the same idea, but we also extend the work of \cite{Spyratos87} so that we can deal with inconsistencies. As we shall see, non satisfaction of the partition constraint in Definition~\ref{def:interpretation} is the key criterion to characterize inconsistent tuples.
\subsection{Functional Dependencies}
The notion of functional dependency in our approach is defined as in \cite{Spyratos87}.
\begin{definition}\label{def:fd}
Let $U$ be a universe. A {\em functional dependency} is an expression of the form $X \to Y$ where $X$ and $Y$ are nonempty sub-sets of $U$.

A ${\cal T}$-mapping $\mu$  {\em satisfies} $X \to Y$, denoted by $\mu \models X \to Y$, if  for all tuples $x$ and $y$, respectively over $X$ and $Y$, the following holds: if $\mu(x) \cap \mu(y) \ne \emptyset$ then  $\mu(x) \subseteq \mu(y)$.
\end{definition}
Based on Definition~\ref{def:fd}, for all $X$ and $Y$ such that $X \cap Y=\emptyset$, and for every ${\cal T}$-mapping $\mu$, the following holds:

\smallskip\centerline{$\mu\models X \to Y$ if and only if $\mu \models X \to A$ for every $A$ in $Y$.}

\smallskip\noindent
This is so because, for every $x$ and $y$ such that $\mu(x) \cap \mu(y) \ne \emptyset$, $\mu(x) \subseteq \mu(y)$ holds if and only if $\mu(x) \subseteq \mu(a)$ holds for every constant $a$ in $y$.

Therefore without loss of generality we can assume that all  functional dependencies are of the form $X \to A$ where $A$ is an attribute not in $X$. Under this assumption, we consider pairs $\Delta = (D, {\cal FD})$ where $D$ is a table over $U$ and ${\cal FD}$ a set of functional dependencies over $U$, and we say that a ${\cal T}$-mapping $\mu$ satisfies $\Delta$, denoted by $\mu \models \Delta$, if  $(i)$ for every $t$ in $D$, $\mu(t) \ne \emptyset$, and $(ii)$ $\mu$ satisfies every $X\to A$ in ${\cal FD}$.

\smallskip
To see how our notion of functional dependency relates to the standard one in relational databases \cite{Ullman},  recall first that a relation $r$ over universe $U$ satisfies $X \to A$ if for all tuples $t$ and $t'$ in $r$ such that $t.X=t'.X$, we have $t.A=t'.A$.

In our approach, let $\Delta=(D, {\cal FD})$ and consider two tuples $t$ and $t'$ in $D$ such that $XA$ is a subset of $sch(t)$ and of $sch(t')$ and let $t.X=t'.X=x$. Then  for every ${\cal T}$-mapping $\mu$ such that $\mu \models \Delta$, $\mu(t)$ and $\mu(t')$ are nonempty, implying that $\mu(x) \cap \mu(t.A)$ and $\mu(x) \cap \mu(t'.A)$ are also nonempty. By Definition~\ref{def:fd}, this implies that $\mu(x)$ is a sub-set of $\mu(t.A)$ and of  $\mu(t'.A)$. As a consequence, assuming that $t.A \ne t'.A$ ({\em i.e.,} that $X \to A$ is not satisfied in the sense of the relational model), means that $\mu(t.A) \cap \mu(t'.A)$ is nonempty, and therefore {\em $\mu$ can not be an interpretation.}

Therefore if we restrict ${\cal T}$-mappings to be interpretations then the notion of functional dependency satisfaction in our approach is {\em the same} as that of relational databases. As we shall see, this observation supports the notion of consistency for $\Delta$, to be given later (in Definition~\ref{def:db-consistency}).

\smallskip
Given $\Delta=(D, {\cal FD})$ and tuples $t$, $t'$, $t''$, the following notations are extensively used in the remainder of the paper.

\smallskip\noindent
$-$ $\Delta \vdash t$, denotes that if $\mu \models \Delta$ then $\mu(t) \ne \emptyset$.
\\
$-$ $\Delta \vdash (t \sqcap t')$, denotes that if $\mu \models \Delta$ then $\mu(t) \cap \mu(t') \ne \emptyset$.
\\
$-$ $\Delta \vdash (t \preceq t')$ denotes that if $\mu \models \Delta$ then $\mu(t) \subseteq \mu(t')$.
\\
$-$ $\Delta \vdash (t \preceq t' \sqcap t'')$ denotes that if $\mu \models \Delta$ then $\mu(t) \subseteq \mu(t') \cap \mu(t'')$.

\smallskip\noindent
Given $\Delta=(D, {\cal FD})$, we now build  a particular ${\cal T}$-mapping $\mu$ such  that $\mu \models \Delta$ as follows: Let $\left(\mu_i\right)_{i \geq 0}$ be the sequence defined by the steps below:
\begin{enumerate}
\item
Associate each tuple $t$ with an identifier, $id(t)$, called the {\em tuple identifier} of $t$ (this can be an integer that identifies $t$ uniquely).
\item
Let $\mu_0$ be the mapping defined for every domain constant $a$ by:\\ $\mu_0(a) =\{id(t)~|~t \in D \mbox{ and }a \sqsubseteq t\}$.
\item
While there exists $X \to A$ in ${\cal FD}$, $x$ over $X$ and $a$ in $dom(A)$ such that $\mu_i(xa) \ne \emptyset$ and $\mu_i(x) \not\subseteq \mu_i(a)$, define $\mu_{i+1}$ by: $\mu_{i+1}(a) = \mu_i(a) \cup \mu_i(x)$ and $\mu_{i+1}(\alpha) = \mu_i(\alpha)$ for any other constant $\alpha$.
\end{enumerate}
\begin{lemma}\label{lemma:least-model}
For every $\Delta = (D, {\cal FD})$, the sequence $\left(\mu_i\right)_{i \geq 0}$ has a unique limit $\mu^*$ such that  $\mu^* \models \Delta$.
Moreover:
\begin{enumerate}
\item
For all $a_1$ and $a_2$ in the {\em same attribute domain} $dom(A)$, if $\mu^*(a_1) \cap \mu^*(a_2) \ne \emptyset$ then there exist $X \to A$ in ${\cal FD}$ and $x$ over $X$ such that $\mu^*(x) \ne \emptyset$ and $\mu^*(x) \subseteq \mu^*(a_1) \cap \mu^*(a_2)$.
\item
For all $\alpha$ and $\beta$, $\Delta \vdash (\alpha \sqcap \beta)$ holds if and only if $\mu^*(\alpha) \cap \mu^*(\beta) \ne \emptyset$ holds.
\end{enumerate}
\end{lemma}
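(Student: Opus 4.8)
The plan is to treat the main claim and the two numbered items separately, the only genuinely delicate point being the ``only if'' direction of item~2. For the \emph{main claim}, the finiteness observation driving everything is that every value produced stays inside the fixed finite set $I=\{id(t)\mid t\in D\}$: one has $\mu_0(a)\subseteq I$ for each constant $a$, and step~3 replaces a value only by a union of values already present. Moreover a constant acquires a nonempty value only if it already had one, since the guard $\mu_i(xa)\ne\emptyset$ forces $\mu_i(a)\ne\emptyset$; hence only the finitely many constants occurring in $D$ are ever relevant. As each applied step strictly enlarges some $\mu_i(a)$ (the guard $\mu_i(x)\not\subseteq\mu_i(a)$ adds at least one identifier), the quantity $\sum_a|\mu_i(a)|$ strictly increases while bounded, so the sequence stabilises at a fixpoint $\mu^*$. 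Then $\mu^*\models\Delta$ is free: $id(t)\in\mu_0(t)\subseteq\mu^*(t)$ for $t\in D$ gives $(i)$, and the inapplicability of any further step to $\mu^*$ is precisely satisfaction of every $X\to A$, giving $(ii)$. For uniqueness of the limit (the chosen triple $(X\to A,x,a)$ is nondeterministic) I would show $\mu^*$ is the \emph{least} ${\cal T}$-mapping dominating $\mu_0$ pointwise and satisfying ${\cal FD}$: an induction proves $\mu_i\le\nu$ for every such $\nu$, the crucial step being that $\mu_{i+1}(a)=\mu_i(a)\cup\mu_i(x)$ together with $\mu_i(x)\cap\mu_i(a)\ne\emptyset$ and the hypothesis $\mu_i\le\nu$ forces $\nu(x)\cap\nu(a)\ne\emptyset$, so $\nu\models X\to A$ gives $\mu_i(x)\subseteq\nu(x)\subseteq\nu(a)$. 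Since one checks this class is closed under pointwise intersection it has a least element, and every terminating run returns a fixpoint of the class that is also $\le$ every member, hence equals that least element.

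For \emph{item~1}, take distinct $a_1,a_2\in dom(A)$ with $\mu^*(a_1)\cap\mu^*(a_2)\ne\emptyset$. Since a tuple is a function, no tuple of $D$ carries both values, so $\mu_0(a_1)\cap\mu_0(a_2)=\emptyset$ and there is a \emph{first} index $i+1$ making the intersection nonempty. As a step changes one constant, it must update $a_1$ or $a_2$, say $a_1$, through some $X\to A$ and witness $x$; minimality forces the fresh overlap to arise from $\mu_i(x)\cap\mu_i(a_2)\ne\emptyset$, so $\mu^*(x)\ne\emptyset$. Monotonicity then gives $\mu^*(x)\cap\mu^*(a_j)\ne\emptyset$ for $j=1,2$ (using $\mu_i(x)\subseteq\mu_{i+1}(a_1)$ for $j=1$), and $\mu^*\models X\to A$ upgrades these to $\mu^*(x)\subseteq\mu^*(a_1)\cap\mu^*(a_2)$, as required.

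For \emph{item~2}, the direction $\Rightarrow$ is immediate, since $\Delta\vdash(\alpha\sqcap\beta)$ holds in every model and $\mu^*$ is one. The converse is the heart of the lemma, and I would reduce it to the provenance statement: for $t\in D$, $id(t)\in\mu^*(a)\Rightarrow\Delta\vdash(t\preceq a)$, proved by induction on the step at which $id(t)$ enters the value of $a$. The base case ($a\sqsubseteq t$) is clear. The only nonroutine inductive case is $id(t)\in\mu_i(x)$ when $a$ is updated via $X\to A$ with witness $x$: the hypothesis gives $\Delta\vdash(t\preceq x)$, while the guard yields a witness $t'\in D$ with $id(t')\in\mu_i(x)\cap\mu_i(a)$, hence $\Delta\vdash(t'\preceq x)$ and $\Delta\vdash(t'\preceq a)$; then in any $\mu\models\Delta$ the nonemptiness of $\mu(t')$ forces $\mu(x)\cap\mu(a)\ne\emptyset$, so $\mu\models X\to A$ gives $\mu(x)\subseteq\mu(a)$ and therefore $\mu(t)\subseteq\mu(x)\subseteq\mu(a)$. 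Granting this, for $\Leftarrow$ I pick $n\in\mu^*(\alpha)\cap\mu^*(\beta)$; since all values lie in $I$, $n=id(t_0)$ for some $t_0\in D$, and applying the provenance statement to each constant of $\alpha$ and of $\beta$ gives $\Delta\vdash(t_0\preceq\alpha)$ and $\Delta\vdash(t_0\preceq\beta)$, whence $\emptyset\ne\mu(t_0)\subseteq\mu(\alpha)\cap\mu(\beta)$ in every model, i.e. $\Delta\vdash(\alpha\sqcap\beta)$.

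The main obstacle is precisely this provenance induction: the construction manipulates one \emph{concrete} assignment of identifiers, whereas $\Delta\vdash$ ranges over \emph{all} models with unrelated identifier sets, so each chase step must be transported into an arbitrary model. The device that makes this work is to use condition~$(i)$ of $\mu\models\Delta$ (nonempty images of tuples of $D$) to manufacture, from the guard of each step, the nonempty intersection that licenses an application of the functional dependency in that model. Termination, the confluence argument for uniqueness, and item~1 are comparatively routine bookkeeping.
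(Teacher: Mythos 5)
Your proof is correct, and although the overall skeleton (a monotone bounded sequence for existence, a first-violation analysis for item~1, and an induction along the chase for the converse of item~2) coincides with the paper's, two ingredients are genuinely different. For the converse of item~2 the paper's induction invariant is \emph{pairwise}: it assumes that whenever $\mu_{i_0}(\zeta)\cap\mu_{i_0}(\eta)\ne\emptyset$ then $\mu(\zeta)\cap\mu(\eta)\ne\emptyset$ in every model $\mu$, and then works through a three-case analysis of the cross terms of $(\mu_{i_0}(\alpha)\cup M(\alpha))\cap(\mu_{i_0}(\beta)\cup M(\beta))$; you instead isolate the element-level provenance lemma $id(t)\in\mu^*(a)\Rightarrow\Delta\vdash(t\preceq a)$ and obtain the converse in one line by picking any identifier in the intersection. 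Your invariant is stronger, the case analysis disappears, and the argument covers arbitrary tuples $\alpha,\beta$ uniformly, whereas the paper's written proof is phrased only for constants $\alpha\in dom(A)$ and $\beta\in dom(B)$. The key transport device is the same in both proofs: the guard of a chase step, together with condition $(i)$ of $\mu\models\Delta$, manufactures a nonempty intersection $\mu(x)\cap\mu(a)$ in an \emph{arbitrary} model, which licenses the functional dependency there. Second, you explicitly resolve the nondeterminism of the while-loop by characterizing $\mu^*$ as the least ${\cal T}$-mapping dominating $\mu_0$ and satisfying ${\cal FD}$ (the class being closed under pointwise intersection); the paper only notes that the sequence is increasing and bounded, which proves convergence of each run but leaves run-independence implicit, so your treatment of uniqueness is the more complete one. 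The remaining differences (e.g.\ the paper bundles all applicable updates of a step into the sets $M(a_j)$, which forces it to handle a third ``both sides updated simultaneously'' case in item~1 that your one-update-per-step reading avoids) are matters of formalization rather than substance.
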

\begin{proof}
See Appendix~\ref{append:least-model}.
\hfill$\Box$
\end{proof}
Given $\Delta=(D, {\cal FD})$, Lemma~\ref{lemma:least-model} shows the following:
\begin{enumerate}
\item There always exists a ${\cal T}$-mapping $\mu$ such that $\mu \models \Delta$.
\item When two constants from the same domain have common identifiers with respect to $\mu^*$ then this is due to a functional dependency.
\item For every tuple $t$, $\Delta \vdash t$ if and only if $\mu^*(t) \ne \emptyset$.
\end{enumerate}
It is important to note that the ${\cal T}$-mapping  $\mu^*$ as defined in Lemma~\ref{lemma:least-model} is not necessarily an interpretation as the following example shows.
\begin{example}\label{ex:closure0}
Let $U=\{A,B,C\}$ and $\Delta =(D, {\cal FD})$ where $D=\{ab, bc, abc'\}$ and ${\cal FD}=\{B \to C\}$.
Associating $ab$, $bc$ and $abc'$ respectively with 1, 2 and 3, $\mu^*$ is obtained as follows:
\\
$\bullet$ First, we have $\mu_0(a)=\{1,3\}$, $\mu_0(b)=\{1,2,3\}$, $\mu_0(c)=\{2\}$ and $\mu_0(c')=\{3\}$ and $\mu_0(\alpha)=\emptyset$ for any other domain constant $\alpha$.
\\
$\bullet$ Then, considering $B \to C$, we have $\mu_1(a)=\{1,3\}$, $\mu_1(b)=\mu_1(c)=\mu_1(c')=\{1,2,3\}$ and $\mu_1(\alpha)=\emptyset$ for any other domain constant $\alpha$.

Hence, $\mu^*=\mu_1$ and we remark that $\mu^*(c)\cap \mu^*(c') \ne \emptyset$, thus that $\mu^*$ is not an interpretation. Nevertheless, as stated by Lemma~\ref{lemma:least-model}, it is easy to see that $\mu^* \models \Delta$.
\hfill$\Box$
\end{example}
We note here that the authors of \cite{SpyratosL87} use a construction similar to that of Lemma~\ref{lemma:least-model} to define a minimal model of $\Delta$, called `query model', assuming that $D$ is consistent with ${\cal FD}$.

Now, in order to characterize when $\Delta \vdash (t \preceq a)$ holds, we introduce the notion of  {\em closure of a tuple $t$ in $\Delta$} inspired by the well known relational notion of closure of a relation scheme with respect to a set of functional dependencies \cite{Ullman}.
\begin{definition}\label{def:closure}
Given a database $\Delta=(D, {\cal FD})$ and a tuple $t$, the {\em closure of  $t$ in $\Delta$} (or {\em closure of $t$} for short, when $\Delta$ is understood), denoted by $t^+$, is the set of all domain constants $a$ such that $\Delta \vdash (t \preceq a)$ holds.
\end{definition}
We notice that, based on Definition~\ref{def:closure}, for every constant $a$ occurring in a tuple $t$ ({\em i.e.,} if $a \sqsubseteq t$ holds) then $a$ is in $t^+$, because, in this case, $\mu(t) \subseteq \mu(a)$ holds for every ${\cal T}$-mapping $\mu$. However constants not occurring in $t$ may also appear in $t^+$ due to functional dependencies, as shown in the following example.
\begin{example}\label{ex:closure}
Continuing Example~\ref{ex:closure0} where $U=\{A, B, C\}$ and $\Delta =(D,{\cal FD})$ with $D=\{ab, bc, abc'\}$ and ${\cal FD}=\{B \to C\}$, we show that $c$ belongs to $(ab)^+$.

Indeed,  for every $\mu$ such that $\mu \models \Delta$, we have $\mu(ab) \subseteq \mu(b)$ (since $b \sqsubseteq ab$) and $\mu(b) \subseteq \mu(c)$ (due to $B \to C$ and the fact that $\mu(b) \cap \mu(c) \ne \emptyset$ must hold). Hence, by transitivity, $\mu(ab) \subseteq \mu(c)$ holds, implying that $\Delta \vdash (ab \preceq c)$ holds, which  by Definition~\ref{def:closure}, means that $c$ belongs to  $(ab)^+$. It should also be noticed that a similar argument shows that $c'$ also belongs to $(ab)^+$.
\hfill$\Box$
\end{example}
Clearly computing the closure directly from its definition is inefficient. Algorithm~\ref{algo:closure} gives a method for computing the closure, since the following lemma states that  this algorithm correctly computes the closure.

{\small
\algsetup{indent=1.5em}
\begin{algorithm}[h!t]
\caption{Closure of $t$ \label{algo:closure}}
\begin{algorithmic}[1]
\REQUIRE
$\Delta = (D, {\cal FD})$ and a tuple $t$.

\ENSURE The closure $t^+$ of $t$

\STATE{$\Delta_t := (D_t, {\cal FD})$ where $D_t = D \cup \{t\}$}\label{line:cl-db-assign}
\STATE{$t^+ := \{a~|~a \sqsubseteq t\}$}\label{line:init-cl}
\WHILE{$t^+$ changes}\label{line:loop-cl}
    \FORALL{$X \to A \in {\cal FD}$}
        \FORALL{$x$ such that for every $b$ in $x$, $b \in t^+$ and $\Delta_t \vdash xa$}\label{line:test}
            \STATE{$t^+:= t^+ \cup \{a\}$}
        \ENDFOR
     \ENDFOR
\ENDWHILE
\RETURN{$t^+$}
\end{algorithmic}
\end{algorithm}
}

\begin{lemma}\label{lemma:inclusion}
Let $\Delta = (D, {\cal FD})$ and $t$ a tuple. Then Algorithm~\ref{algo:closure} computes correctly the closure $t^+$ of $t$. 
\end{lemma}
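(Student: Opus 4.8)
The plan is to prove the two inclusions $t^+_{\mathrm{alg}} \subseteq t^+$ (\emph{soundness}) and $t^+ \subseteq t^+_{\mathrm{alg}}$ (\emph{completeness}) separately, where $t^+_{\mathrm{alg}}$ denotes the set returned by Algorithm~\ref{algo:closure} and $t^+$ is the closure of Definition~\ref{def:closure}. Termination is immediate: a constant $a$ can enter $t^+_{\mathrm{alg}}$ only if $\Delta_t \vdash xa$ for some $x$, which by Lemma~\ref{lemma:least-model} forces $\mu^*(a)\ne\emptyset$ and hence $a$ to occur in the finite table $D_t = D \cup \{t\}$; so $t^+_{\mathrm{alg}}$ is a growing subset of a finite set of constants and the \textbf{while} loop stabilises.

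For soundness I would fix an arbitrary $\mu$ with $\mu \models \Delta$ and prove $\mu(t) \subseteq \mu(a)$ for every $a$ added to $t^+_{\mathrm{alg}}$, by induction on the order of insertion. If $\mu(t) = \emptyset$ the inclusion is trivial, so I may assume $\mu(t) \ne \emptyset$, which is exactly $\mu \models \Delta_t$. The base case covers the initialisation $a \sqsubseteq t$ on Line~\ref{line:init-cl}: then $\mu(a)$ is one of the intersected factors defining $\mu(t)$, so $\mu(t) \subseteq \mu(a)$. For the inductive step, suppose $a$ is inserted because of some $X \to A \in {\cal FD}$ and some $x$ over $X$ whose components already lie in $t^+_{\mathrm{alg}}$ and with $\Delta_t \vdash xa$. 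By the induction hypothesis $\mu(t) \subseteq \mu(b)$ for every component $b$ of $x$, hence $\mu(t) \subseteq \mu(x)$; and $\Delta_t \vdash xa$ together with $\mu \models \Delta_t$ gives $\mu(x) \cap \mu(a) \ne \emptyset$, so Definition~\ref{def:fd} applied to $X \to A$ yields $\mu(x) \subseteq \mu(a)$. Chaining the two inclusions gives $\mu(t) \subseteq \mu(a)$; since $\mu$ was arbitrary this is precisely $\Delta \vdash (t \preceq a)$, i.e. $a \in t^+$.

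For completeness I would use the canonical model to exhibit a witness. Let $\mu^*$ be the least model of $\Delta_t$ furnished by Lemma~\ref{lemma:least-model}, and note that $id(t) \in \mu^*(t)$ since $t \in D_t$. The heart of the argument is the claim that $id(t) \in \mu^*(a)$ implies $a \in t^+_{\mathrm{alg}}$, which I would prove by induction on the first index $k$ at which $id(t)$ appears in $\mu_k(a)$ along the construction of $\mu^*$. For $k=0$ we have $a \sqsubseteq t$ (identifiers being unique), so $a$ enters $t^+_{\mathrm{alg}}$ at initialisation. For the inductive step, the chase move introducing $id(t)$ into $\mu_k(a)$ uses some $X \to A$ and some $x$ over $X$ with $id(t) \in \mu_{k-1}(x)$ and $\mu_{k-1}(xa) \ne \emptyset$; thus every component of $x$ receives $id(t)$ at a strictly earlier index and lies in $t^+_{\mathrm{alg}}$ by the induction hypothesis, while monotonicity $\mu_{k-1} \subseteq \mu^*$ and Lemma~\ref{lemma:least-model} give $\mu^*(xa) \ne \emptyset$, i.e. $\Delta_t \vdash xa$. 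Since $t^+_{\mathrm{alg}}$ is a fixpoint of the loop, the rule on Line~\ref{line:test} would otherwise still fire, so $a \in t^+_{\mathrm{alg}}$, proving the claim. Completeness follows: if $a \in t^+$ then $\Delta \vdash (t \preceq a)$, and applying this to $\mu^* \models \Delta$ gives $\mu^*(t) \subseteq \mu^*(a)$; since $id(t) \in \mu^*(t)$ we get $id(t) \in \mu^*(a)$ and hence $a \in t^+_{\mathrm{alg}}$.

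The step I expect to be the main obstacle is the inductive claim in the completeness part, precisely because the algorithm tests the \emph{global} condition $\Delta_t \vdash xa$ against the final model $\mu^*$, whereas the chase only guarantees the \emph{intermediate} fact $\mu_{k-1}(xa) \ne \emptyset$. Bridging the two requires the monotonicity of the sequence $(\mu_i)_{i\geq 0}$ together with the equivalence between $\Delta_t \vdash \cdot$ and nonemptiness in $\mu^*$ supplied by Lemma~\ref{lemma:least-model}; some care is also needed to phrase the induction on the first index of appearance of $id(t)$ so that the components of the triggering tuple $x$ are genuinely covered by the induction hypothesis.
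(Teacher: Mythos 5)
Your proposal is correct. The soundness direction ($t^+_{\mathrm{alg}} \subseteq t^+$) is essentially the paper's own argument: the same preliminary observation that $\mu \models \Delta$ and $\mu(t)\ne\emptyset$ give $\mu \models \Delta_t$, followed by the same induction on the order of insertion. The completeness direction, however, is genuinely different. The paper argues by contraposition: assuming $a \notin t^+_{\mathrm{alg}}$, it builds a separating model $\mu_t$ by taking the least model $\mu^*_t$ of $\Delta_t$ and adjoining one fresh identifier $k$ to exactly the constants in $t^+_{\mathrm{alg}}$, then verifies (with a case analysis on whether $\mu^*_t(y)\cap\mu^*_t(b)$ is empty, and reusing the already-proved soundness inclusion) that $\mu_t$ still satisfies $\Delta$ while $\mu_t(t)\not\subseteq\mu_t(a)$. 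You instead prove the direct implication $a \in t^+ \Rightarrow id(t)\in\mu^*(a) \Rightarrow a \in t^+_{\mathrm{alg}}$, tracing the single identifier $id(t)$ through the fixpoint construction of the least model of $\Delta_t$ by induction on the first index at which it appears; the only bridging facts needed are monotonicity of $(\mu_i)$ and Lemma~\ref{lemma:least-model}, and these do suffice as you anticipate. Your route avoids constructing and re-verifying a new model, and as a by-product gives the clean characterization $t^+ = \{a \mid id(t)\in\mu^*_{\Delta_t}(a)\}$; the paper's route is the more standard ``Armstrong-style'' separating-model argument and exhibits an explicit witness refuting $\Delta \vdash (t \preceq a)$ for each $a$ outside the computed set. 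Your termination remark (constants entering the closure must occur in the finite table $D_t$) is also sound, and is something the paper leaves implicit.
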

\begin{proof}
See Appendix~\ref{append:lemma-inclusion}.
\hfill$\Box$
\end{proof}
We draw attention on the fact that the database involved in Algorithm~\ref{algo:closure} is not $\Delta$ but the database $\Delta_t$ that can be seen as $\Delta$ in which the tuple $t$ has been added.

It should however be noticed that in case $\Delta \vdash t$, this distinction is not necessary because in this case, for every tuple $q$, $\Delta \vdash q$ holds if and only if $\Delta_t \vdash q$ holds. This is a consequence of the fact that, as seen in Appendix~\ref{append:lemma-inclusion}, if $\Delta \vdash t$ then for every ${\cal T}$-mapping $\mu$, $\mu \models \Delta$ holds if and only if $\mu \models \Delta_t$.

On the other hand, the following example shows that when $\Delta \not\vdash t$, the introduction of $\Delta_t$ instead of $\Delta$ is necessary for correctly computing $t^+$.
\begin{example}\label{ex:closure1}
Let $U=\{A,B,C\}$ and $\Delta =(D, {\cal FD})$ where $D=\{ac, b\}$ and ${\cal FD}=\{A \to B, B \to C\}$.

It is easy to see that when numbering the tuples in $D$ by $1$ for $ac$ and $2$ for $b$, the ${\cal T}$-mapping $\mu^*$ for $\Delta$ is defined by: $\mu^*(a) =\mu^*(c)=\{1\}$, $\mu^*(b)=\{2\}$ and $\mu^*(\alpha)=\emptyset$ for any other constant $\alpha$.

For $t=ab$, we argue that $c$ is in  $t^+$, that is, for every $\mu$ such that $\mu \models \Delta$, $\mu(ab) \subseteq \mu(c)$ holds. Indeed, this trivially holds if $\mu(ab) =\emptyset$ (as is the case with $\mu^*$), and otherwise the following proof can be done:
\\
$\bullet$ As $\mu(ab) \ne \emptyset$, $A \to B$ implies that $\mu(a) \subseteq \mu(b)$.
\\
$\bullet$ As $\mu(ac) \ne \emptyset$, $\mu(a) \subseteq \mu(b)$ implies $\mu(bc) \ne \emptyset$. Thus, $\mu(b) \subseteq \mu(c)$, due to $B \to C$.
\\
$\bullet$ Therefore, $\mu(a) \subseteq \mu(c)$, and since $\mu(ab) \subseteq \mu(a)$, we have $\mu(ab) \subseteq \mu(c)$.

\smallskip
On the other hand, computing $(ab)^+$ using a modified version of Algorithm~\ref{algo:closure} where $\Delta_t$ is replaced by $\Delta$ would output $a$ and $b$ in the closure. It should also be noticed that computing $(ab)^+$ using Algorithm~\ref{algo:closure} is as follows: by the statement on line~\ref{line:init-cl}, $a$ and $b$ are inserted into the closure, and then, since for $t=ab$, $\Delta_t \vdash ab$ the above reasoning shows that $\Delta_t \vdash bc$ as well. Therefore, $c$ is inserted into the closure because the test line~\ref{line:test} succeeds.\hfill$\Box$
\end{example}
The following example shows a case where the tuple $t$ of which the closure is computed is such that $\Delta \vdash t$.
\begin{example}\label{ex:fd0}
As seen in Example~\ref{ex:closure0}, if $U=\{A,B,C\}$ and $\Delta =(D, {\cal FD})$ where $D=\{ab, bc, abc'\}$ and ${\cal FD}=\{B \to C\}$, $\mu^*$ is defined by: $\mu^*(a)=\{1,3\}$, $\mu^*(b)=\mu^*(c)=\mu^*(c')=\{1,2,3\}$ and $\mu^*(\alpha)=\emptyset$ for any other domain constant $\alpha$.

In this case, the computation of $(ab)^+$ according to Algorithm~\ref{algo:closure} is as follows:
\\
$\bullet$ As $ab$ is in $D$, $\Delta_t =\Delta$. We thus run Algorithm~\ref{algo:closure} with $\Delta$ instead of $\Delta_t$.
\\
$\bullet$ $(ab)^+$ is first set to $\{a, b\}$.
\\
$\bullet$ Considering $B \to C$, since $b$ is in $(ab)^+$, and since $\Delta \vdash bc$ and $\Delta \vdash bc'$ (this holds because $\mu^*(c)$ and $\mu^*(c')$ are nonempty), $c$ and $c'$ are inserted in $(ab)^+$.

As no further step is processed, $(ab)^+=\{a, b, c, c'\}$, as seen in Example~\ref{ex:closure}. Thus $\Delta \vdash (ab \preceq c)$ and $\Delta \vdash (ab \preceq c')$ hold, implying $\Delta \vdash (ab \preceq c \sqcap c')$.
\hfill$\Box$
\end{example}
\section{Semantics}\label{sec:FD}
In this section we provide basic definitions and properties regarding the truth value associated with a tuple. The following definition is borrowed from  \cite{Spyratos87}.
\begin{definition}\label{def:db-consistency}
$\Delta$ is said to be {\em consistent} if there exists an {\em interpretation} $\mu$ such that $\mu \models \Delta$.
\end{definition}
Since in our approach, inconsistent tables are {\em not} discarded, it is crucial to be able to provide semantics to any $\Delta = (D, {\cal FD})$, being it consistent or not. To this end, inspired by Belnap's Four-valued logic \cite{Belnap}, we consider {\em four} possible truth values for a given tuple $t$ in $\Delta$. The notations of truth values for tuples in our approach and their intuitive meaning are as follows, for a given tuple $t$:
\begin{itemize}
    \item Truth value ${\tt true}$: $t$ is true in $\Delta$.
    \item Truth value ${\tt false}$: $t$ is false in $\Delta$. This means that we do {\em not} follow the Closed World Assumption (CWA), according to which any non true tuple is false \cite{Reiter77}.
    \item Truth value ${\tt inc}$ ({\em i.e.,} inconsistent): $t$ is true {\em and} false in $\Delta$. This truth value is necessary for `safely' dealing with inconsistent tuples.
    \item Truth value ${\tt unkn}$ ({\em i.e.,} unknown): $t$ is not true, not false and not inconsistent  in $\Delta$. This truth value is necessary for dealing with tuples not falling in one of the previous three categories.
\end{itemize}
In order to formalize the exact meaning of these truth values in our approach, we introduce the following terminology and notation for a given tuple $t$:
\begin{itemize}
\item If $\Delta \vdash t$ holds, $t$ is said to be {\em potentially true} in $\Delta$. Notice here that by Lemma~\ref{lemma:least-model}, $t$ is potentially true if and only if $\mu^*(t) \ne \emptyset$. 
\item If $\Delta \vdash (t \preceq a \sqcap a')$ holds for some distinct $a$ and $a'$ in the same attribute domain, then we use the notation $\Delta \dashapprox t$, and in this case, $t$ is said to be {\em potentially false} to reflect that $\mu(a) \cap \mu(a')$ must be empty for $\mu$ to be an interpretation. By Definition~\ref{def:closure}, $\Delta \dashapprox t$ holds if and only if there exist $a$ and $a'$  in the same attribute domain such that $a$ and $a'$ are in $t^+$.
\end{itemize}
Consequently, if a tuple $t$ is such that $\Delta \vdash t$ and $\Delta \dashapprox t$, then for $\mu$ to be an interpretation, $\mu$ must associate $t$ with a set expected to be empty and nonempty, which is of course a case of inconsistency! This explains why, in our approach, `potentially true' and `potentially false', should respectively be understood as `true or inconsistent' and `false or inconsistent'.

Based on this intuition, each tuple is assigned one of the four truth values according to the following definition.
\begin{definition}\label{def:incons-tuple}
Given $\Delta = (D, {\cal FD})$ and a tuple $t$, the truth value of $t$ in $\Delta$, denoted by $v_\Delta(t)$, is defined as follows:

\smallskip\noindent
\begin{tabular}{ll}
$-$ $v_\Delta(t) ={\tt true}$&~~if $\Delta \vdash t$ and $\Delta \not\dashapprox t$; $t$ is said to be {\em true in $\Delta$}.\\
$-$ $v_\Delta(t) ={\tt false}$&~~if $\Delta \not\vdash t$ and $\Delta \dashapprox t$;
$t$ is said to be {\em false in $\Delta$}.\\
$-$ $v_\Delta(t) ={\tt inc}$&~~if $\Delta \vdash t$ and $\Delta \dashapprox t$;
$t$ is said to be  {\em inconsistent in $\Delta$}.\\
$-$ $v_\Delta(t) ={\tt unkn}$&~~if $\Delta \not\vdash t$ and $\Delta \not\dashapprox t$;
$t$ is said to be {\em unknown in $\Delta$}.
\end{tabular}
\end{definition}
We point out that the four truth values as defined above correspond exactly to the four truth values defined in the Four-valued logic \cite{Belnap}. The reader is referred to Section~\ref{sec:four-logic} for more details on this point. We illustrate Definition~\ref{def:incons-tuple} through the following example.
\begin{example}\label{ex:fd0-truth-value}
As in Example~\ref{ex:closure0}, let $U=\{A,B,C\}$ and $\Delta =(D, {\cal FD})$ where $D =\{ab, bc, abc'\}$ and ${\cal FD}=\{B \to C\}$.

It has been seen in Example~\ref{ex:fd0} that $(ab)^+=\{a, b, c, c'\}$. Thus $\Delta \dashapprox ab$ holds. Moreover, it is easy to see from Example~\ref{ex:closure0} that $\mu^*(ab)\ne \emptyset$, implying that $\Delta \vdash ab$ holds as well. As a consequence, by Definition~\ref{def:incons-tuple}, $v_\Delta(ab)={\tt inc}$, meaning that $ab$ is inconsistent in $\Delta$. We notice that similar arguments hold for $abc$, $abc'$, $bc$, $bc'$ and $b$, showing that these tuples are also inconsistent in $\Delta$.

Moreover, based on Definition~\ref{def:db-consistency}, we also argue that $\Delta$ is {\em not} consistent, because every $\mu$ such that $\mu \models \Delta$ cannot be an interpretation. This is so because $\mu^*(c) \cap \mu^*(c') \ne \emptyset$ and Lemma~\ref{lemma:least-model} imply that for  $\mu$ such that $\mu \models \Delta$, $\mu(c) \cap \mu(c')\ne \emptyset$.

\smallskip
Now, consider the tuple $bc''$ where $c''$ is a constant in $dom(C)$ distinct from $c$ and $c'$. To compute $(bc'')^+$ using Algorithm~\ref{algo:closure}, the database $\Delta_t = (D_t, {\cal FD})$ where $D_t=\{ab, bc, abc', bc''\}$ is first defined and then, the closure is first set to $\{b, c''\}$. The subsequent computation steps rely on $B \to C$ and on that $\Delta_t \vdash bc$ and $\Delta_t \vdash bc'$ to insert $c$ and $c'$ in the closure.

It therefore follows that $(bc'')^+=\{b, c'', c, c'\}$, thus that $\Delta \dashapprox bc''$ holds. Since $\Delta \not\vdash bc''$ (because $\mu^*(bc'')=\emptyset$ and $\mu^* \models \Delta$), it follows that $v_\Delta (bc'') ={\tt false}$. Hence $bc''$ and all its super-tuples are false in $\Delta$.

\smallskip
As an example of unknown tuple in $\Delta$, let $a'$ be in $dom(A)$ such $a' \ne a$, and consider $a'c$. Since $\mu^*(a'c)=\emptyset$, $\Delta \not\vdash a'c$. On the other hand, it can be seen that $(a'c)^+=\{a',c\}$, because $D \cup \{a'c\}$ does not allow any specific tuple derivation using $B \to C$. Hence, $\Delta \not\dashapprox a'c$, which shows that $v_\Delta (a'c) ={\tt unkn}$.\hfill$\Box$
\end{example}
The following example shows that computing {\em all} inconsistent tuples in $\Delta$ is not an easy task.
\begin{example}\label{ex:fds}
Let $\Delta =(D, {\cal FD})$ be defined over $U=\{A,B,C\}$ by $D=\{abc, ac'\}$ and ${\cal FD}=\{A\to B, B \to C\}$.

Here again, the tuples in $D$ along with the functional dependencies in ${\cal FD}$ show no explicit inconsistency. However computing $\mu^*$ yields the following:
\\
$\bullet$ To define $\mu_0$, we associate the tuples $abc$ and $ac'$ with the integers 1 and 2, respectively. It follows that $\mu_0(a) =\{1,2\}$, $\mu_0(b) =\{1\}$, $\mu_0(c) =\{1\}$, $\mu_0(c') =\{2\}$ and $\mu_0(\alpha)=\emptyset$ for any other domain constant $\alpha$.
\\
$\bullet$ The next steps modify $\mu_0$ so as to satisfy $A \to B$ and $B \to C$ as follows:
\begin{enumerate}
    \item Due to $A \to B$, $\mu_1$ is defined by: $\mu_1(a) =\{1,2\}$, $\mu_1(b) =\{1,2\}$, $\mu_1(c) =\{1\}$ and $\mu_1(c') =\{2\}$;
    \item Due to $B \to C$, $\mu_2$ is defined by: $\mu_2(a) =\{1,2\}$, $\mu_2(b) =\{1,2\}$, $\mu_2(c) =\{1, 2\}$ and $\mu_2(c') =\{1,2\}$.
\end{enumerate}
As $\mu_2 \models {\cal FD}$, $\mu^*=\mu_2$.  Moreover, we have $a^+=\{a, b, c, c'\}$ and $b^+=\{b,c,c'\}$ showing that, by Lemma~\ref{lemma:inclusion}, $\Delta \vdash a \preceq(c \sqcap c')$ and $\Delta \vdash (b \preceq  c \sqcap c')$, thus that $a$ and $b$ are inconsistent in $\Delta$. It can then be seen that, for example, $abc$, $bc'$ and $ac$ are also inconsistent in $\Delta$.

\smallskip
Now, let $\Delta_1=(D_1, {\cal FD})$ such that $D_1=\{ac, ac'\}$. In this case, $\mu^*$ is defined by $\mu^*(a)=\{1,2\}$, $\mu^*(c)=\{1\}$, $\mu^*(c')=\{2\}$ and $\mu^*(\alpha)=\emptyset$ for any other domain constant $\alpha$. Therefore, $a^+=\{a\}$, showing that $a$ is {\em not} inconsistent in $\Delta_1$. As a consequence, $ac$, $ac'$ along with all their sub-tuples are true in $\Delta_1$ and all other tuples are unknown in $\Delta_1$.
\hfill$\Box$
\end{example}
The following proposition shows that our notion of inconsistent tuple complies with Definition~\ref{def:db-consistency}.
\begin{proposition}\label{prop:consistent-db}
$\Delta=(D, {\cal FD})$ is consistent if and only if there exists no tuple $t$ such that $v_\Delta(t)={\tt inc}$.
\end{proposition}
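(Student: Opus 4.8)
My plan is to prove the two implications separately, using the canonical ${\cal T}$-mapping $\mu^*$ of Lemma~\ref{lemma:least-model} as the bridge between the syntactic truth value ${\tt inc}$ and the semantic notion of consistency from Definition~\ref{def:db-consistency}.

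For the forward direction (consistency implies no inconsistent tuple) I would argue by contradiction, and I expect this half to be routine. Suppose $\mu$ is an interpretation with $\mu \models \Delta$ and that some tuple $t$ satisfies $v_\Delta(t) = {\tt inc}$. By Definition~\ref{def:incons-tuple} this means $\Delta \vdash t$ together with $\Delta \dashapprox t$, and the latter yields distinct constants $a, a'$ in a common domain $dom(A)$ with $a, a' \in t^+$. Since $\mu \models \Delta$, the relation $\Delta \vdash t$ gives $\mu(t) \ne \emptyset$, while $a, a' \in t^+$ give $\mu(t) \subseteq \mu(a)$ and $\mu(t) \subseteq \mu(a')$. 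Hence $\emptyset \ne \mu(t) \subseteq \mu(a) \cap \mu(a')$, which contradicts the partition constraint that $\mu$ must satisfy as an interpretation.

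For the converse I would prove the contrapositive: if $\Delta$ is not consistent, then some tuple is inconsistent. Since $\mu^* \models \Delta$ always holds by Lemma~\ref{lemma:least-model} but $\Delta$ admits no interpretation, $\mu^*$ itself cannot be an interpretation, so the partition constraint fails and there are distinct $a, a' \in dom(A)$ with $\mu^*(a) \cap \mu^*(a') \ne \emptyset$. Applying part~1 of Lemma~\ref{lemma:least-model} then produces a dependency $X \to A$ in ${\cal FD}$ and a tuple $x$ over $X$ with $\mu^*(x) \ne \emptyset$ and $\mu^*(x) \subseteq \mu^*(a) \cap \mu^*(a')$. I claim $x$ is an inconsistent tuple. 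Since $\mu^*(x) \ne \emptyset$, part~3 of Lemma~\ref{lemma:least-model} gives $\Delta \vdash x$, so it only remains to establish $\Delta \dashapprox x$, for which it suffices to show $a, a' \in x^+$.

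Establishing $a \in x^+$ (and symmetrically $a' \in x^+$) is the hard part, and the step I expect to be the main obstacle. The subtlety is that $a \in x^+$ requires $\mu(x) \subseteq \mu(a)$ in \emph{every} model $\mu \models \Delta$, whereas Lemma~\ref{lemma:least-model} only supplies the inclusion $\mu^*(x) \subseteq \mu^*(a)$ for the single mapping $\mu^*$; such an inclusion for $\mu^*$ alone cannot in general be upgraded to all models (it may hold accidentally for a constant unrelated to $x$ by any dependency). The observation that makes the upgrade legitimate here is that the inclusion comes bundled with the dependency $X \to A$. From $\mu^*(x) \subseteq \mu^*(a)$ and $\mu^*(x) \ne \emptyset$ I get $\mu^*(xa) = \mu^*(x) \cap \mu^*(a) = \mu^*(x) \ne \emptyset$, hence $\Delta \vdash xa$ by part~3 of Lemma~\ref{lemma:least-model}. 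Consequently every $\mu \models \Delta$ satisfies $\mu(x) \cap \mu(a) \ne \emptyset$, which triggers the dependency $X \to A$ and forces $\mu(x) \subseteq \mu(a)$. Thus $\Delta \vdash (x \preceq a)$, i.e. $a \in x^+$, and the same reasoning with $a'$ gives $a' \in x^+$. Since $a$ and $a'$ are distinct elements of the same domain, $\Delta \dashapprox x$ holds, so $v_\Delta(x) = {\tt inc}$, which completes the contrapositive and hence the proposition.
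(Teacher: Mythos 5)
Your proof is correct and follows essentially the same route as the paper's: the forward direction unwinds the definition of ${\tt inc}$ against the partition constraint, and the converse works with $\mu^*$ and Lemma~\ref{lemma:least-model}(1) to manufacture an inconsistent tuple $x$ from any violation of the partition constraint by $\mu^*$. If anything, you are more careful than the paper at the one delicate step --- upgrading $\mu^*(x)\subseteq\mu^*(a)$ to an inclusion valid in \emph{every} model by first deriving $\Delta\vdash xa$ and then invoking the dependency $X\to A$ --- where the paper's proof simply appeals to Lemma~\ref{lemma:least-model}(2), which on its own only yields nonemptiness of intersections rather than the needed inclusions.
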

\begin{proof}
We first note that if there exists a tuple $t$ such that $v_\Delta(t)={\tt inc}$, then $\Delta \vdash t$ and $\Delta \dashapprox t$. Hence  there exist $a$ and $a'$ in the same attribute domain $dom(A)$ such that $\Delta \vdash (t \preceq a\sqcap a')$. Thus every ${\cal T}$-mapping $\mu$ such that $\mu \models \Delta$ satisfies that $\mu(t) \ne \emptyset$ and $\mu(t) \subseteq \mu(a) \cap \mu(a')$, implying that $\mu(a) \cap \mu(a')\ne \emptyset$. Hence, $\mu$ is not an interpretation, showing that, by Definition~\ref{def:db-consistency},  $\Delta$ is not consistent. 

Conversely, assuming that there is no tuple $t$ such that $v_\Delta(t)={\tt inc}$, that is such that $\Delta \vdash t$ and $\Delta \dashapprox t$, we prove that $\mu^*$ is an interpretation of $\Delta$.  Indeed, if $a_1$ and $a_2$ are two constants in the same attribute domain $A$ such that $\mu^*(a_1) \cap \mu^*(a_2) \ne \emptyset$, then by Lemma~\ref{lemma:least-model}(1), there exist $X \to A$ in ${\cal FD}$ and $x$ over $X$ such that $\mu^*(x) \ne \emptyset$ and $\mu^*(x) \subseteq \mu^*(a_1) \cap \mu^*(a_2)$. Thus by Lemma~\ref{lemma:least-model}(2), for every $\mu$ such that $\mu \models \Delta$, $\mu(x) \subseteq \mu(a_1) \cap \mu(a_2)$ and $\mu(x) \ne \emptyset$. We therefore obtain that $\Delta \dashapprox x$ and $\Delta \vdash x$, thus that  $v_\Delta(x)={\tt inc}$, which is  a contradiction. Therefore $\mu^*$ is an interpretation, and the proof is complete.
\hfill$\Box$
\end{proof}
Based on Definition~\ref{def:incons-tuple}, we stress the following important remarks about potentially true and potentially false tuples in a given $\Delta =(D, {\cal FD})$:
\begin{itemize}
\item
Let $t$ be a potentially true tuple. Since $\Delta \vdash t$ holds, as a consequence of Lemma~\ref{lemma:least-model}, we have that $\mu^*(t) \ne \emptyset$. Therefore true or inconsistent tuples are those tuples that are associated with a nonempty set by {\em every} ${\cal T}$-mapping $\mu$ such that $\mu \models \Delta$. This implies that potentially true tuples in $\Delta$ are built up with constants occurring in $D$, and thus are in finite number. We provide in this paper effective algorithms for computing the sets of true tuples and  inconsistent tuples. 
\item
As potentially false tuples $t$ are such that $\Delta \dashapprox t$, they may not satisfy $\Delta \vdash t$. Hence, Lemma~\ref{lemma:least-model} cannot be used to characterize them. Moreover, if $\Delta \dashapprox t$, then  every tuple $t'$ such that  $t \sqsubseteq t'$ also satisfies $\Delta \dashapprox t'$. This is so because in this case, if $\Delta \vdash (t \preceq a \sqcap a')$, then $\Delta \vdash (t' \preceq a \sqcap a')$ holds as well. Thus, the number of potentially false tuples may be infinite in case some of the attribute domains are infinite.
\item
Moreover, since every false tuple  $t$ is potentially false and does not satisfy $\Delta \vdash t$, it also follows as above that every tuple $t'$ such that  $t \sqsubseteq t'$ is also false. Thus,  the number of false tuples may be infinite in case some of the attribute domains are infinite. However,  the following proposition allows to characterize when a given tuple $t$ is false.
\end{itemize}
\begin{proposition}\label{prop:false-tuple}
Given $\Delta =(D, {\cal FD})$ and a tuple $t$, $v_\Delta(t) = {\tt false}$ if and only if $\Delta \not\vdash t$ and $v_{\Delta_t}(t)= {\tt inc}$, where $\Delta_t=(D_t, {\cal FD})$ and $D_t=D\cup\{t\}$.
\end{proposition}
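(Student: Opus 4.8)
The plan is to unfold both truth-value conditions via Definition~\ref{def:incons-tuple} and reduce the whole equivalence to a single fact about the closure of $t$. Writing out the two sides, $v_\Delta(t)={\tt false}$ means $\Delta \not\vdash t$ and $\Delta \dashapprox t$, while $v_{\Delta_t}(t)={\tt inc}$ means $\Delta_t \vdash t$ and $\Delta_t \dashapprox t$. The first thing I would record is that $\Delta_t \vdash t$ holds \emph{automatically}: since $t \in D_t$, condition $(i)$ in the definition of $\mu \models \Delta_t$ forces $\mu(t) \ne \emptyset$ for every $\mu \models \Delta_t$. Hence $v_{\Delta_t}(t)={\tt inc}$ collapses to $\Delta_t \dashapprox t$ alone, and the right-hand side of the proposition is exactly ``$\Delta \not\vdash t$ and $\Delta_t \dashapprox t$''. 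As both sides now carry the common conjunct $\Delta \not\vdash t$, it suffices to prove $\Delta \dashapprox t \iff \Delta_t \dashapprox t$.

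Next I would reduce this remaining equivalence to the equality of closures. By the remark following Definition~\ref{def:incons-tuple}, $\Delta \dashapprox t$ holds iff $t^+$ contains two distinct constants of the same attribute domain, and likewise $\Delta_t \dashapprox t$ holds iff the closure of $t$ computed in $\Delta_t$ contains two such constants. So it is enough to show that $t$ has the \emph{same} closure relative to $\Delta$ and relative to $\Delta_t$, i.e. for every constant $a$, $\Delta \vdash (t \preceq a)$ iff $\Delta_t \vdash (t \preceq a)$. The forward implication is immediate: every $\mu \models \Delta_t$ also satisfies $\mu \models \Delta$ (dropping the extra requirement on $t$), so $\mu(t)\subseteq\mu(a)$ for all models of $\Delta$ gives the same for all models of $\Delta_t$.

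The converse direction is the only real point to check, and I expect it to be the crux. Assume $\Delta_t \vdash (t \preceq a)$ and take an arbitrary $\mu \models \Delta$; I would argue $\mu(t)\subseteq\mu(a)$ by a case split on $\mu(t)$. If $\mu(t)=\emptyset$ the inclusion is trivial. If $\mu(t)\ne\emptyset$, then $\mu$ satisfies every tuple condition for $D$ together with $\mu(t)\ne\emptyset$, as well as all dependencies in ${\cal FD}$, so $\mu \models \Delta_t$; the hypothesis then yields $\mu(t)\subseteq\mu(a)$. Thus $\Delta \vdash (t \preceq a)$, establishing the equality of the two closures. Combining with the previous paragraph gives $\Delta \dashapprox t \iff \Delta_t \dashapprox t$, and hence the desired equivalence $v_\Delta(t)={\tt false} \iff (\Delta\not\vdash t \text{ and } v_{\Delta_t}(t)={\tt inc})$.

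Alternatively, the closure equality can be read off Algorithm~\ref{algo:closure} and Lemma~\ref{lemma:inclusion}: computing $t^+$ in either $\Delta$ or $\Delta_t$ runs the algorithm on $\Delta_t$, since $(\Delta_t)_t=\Delta_t$ as $t$ already belongs to $D_t$; but I prefer the semantic argument above, as it avoids reopening the algorithm's correctness and makes transparent why the empty-image case is exactly what distinguishes $\Delta$ from $\Delta_t$.
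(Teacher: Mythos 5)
Your proof is correct, and its skeleton matches the paper's: both arguments observe that $\Delta_t \vdash t$ holds automatically, so the claim reduces to transferring $\dashapprox$ between $\Delta$ and $\Delta_t$ under the shared conjunct $\Delta \not\vdash t$. The difference is in how that transfer is justified. The paper handles the two directions asymmetrically: the forward direction uses, as you do, that every model of $\Delta_t$ is a model of $\Delta$, but the converse direction routes through Algorithm~\ref{algo:closure} and Lemma~\ref{lemma:inclusion}, reading off that $\Delta_t \vdash (t \preceq a \sqcap a')$ puts $a$ and $a'$ into $t^+$ because the closure algorithm is itself defined over $\Delta_t$. You instead prove the single symmetric claim that $t$ has the same closure in $\Delta$ and in $\Delta_t$, with the converse resting on the case split $\mu(t)=\emptyset$ versus $\mu(t)\ne\emptyset$ (in the latter case $\mu \models \Delta$ upgrades to $\mu \models \Delta_t$). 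This is exactly the observation buried in the paper's proof of Lemma~\ref{lemma:inclusion}, so your version is a modestly more self-contained and transparent packaging of the same content: it buys independence from the correctness of the closure algorithm, at the cost of re-deriving a fact the paper already has on the shelf. Your closing remark correctly identifies this trade-off.
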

\begin{proof}
Assuming that $v_\Delta(t) = {\tt false}$ indeed entails that $\Delta \not\vdash t$ by Definition~\ref{def:incons-tuple}. Moreover, as $\Delta \dashapprox t$, there exist $A$ in $U$ and $a$ and $a'$ in $dom(A)$ such that $\Delta \vdash (t \preceq a \sqcap a')$. Now, given $\mu$ such that $\mu \models \Delta_t$,  it has been shown that  $\mu$ also satisfies that $\mu \models \Delta$ (see Appendix~\ref{append:lemma-inclusion}). Hence, $\mu(t) \subseteq \mu(a) \cap \mu(a')$ holds, which implies that $\Delta_t \vdash (t \preceq a \sqcap a')$, that is $\Delta_t \dashapprox t$. Since it holds that $\Delta_t \vdash t$, we obtain that $v_{\Delta_t}(t)= {\tt inc}$.

Conversely, if $v_{\Delta_t}(t)= {\tt inc}$ then, by Definition~\ref{def:incons-tuple}, $\Delta_t \vdash t$ and $\Delta_t \dashapprox t$ hold. Therefore, there exist $A$ in $U$ and $a$ and $a'$ in $dom(A)$ such that $\Delta_t \vdash (t \preceq a \sqcap a')$, which by Algorithm~\ref{algo:closure} and Lemma~\ref{lemma:inclusion}, implies that $a$ and $a'$ are in $t^+$. We thus obtain that  $\Delta \dashapprox t$, which combined with our hypothesis that $\Delta \not\vdash t$, implies that $v_\Delta(t) = {\tt false}$. The proof is therefore complete.
\hfill$\Box$
\end{proof}
\section{Computing the Semantics}\label{sec:chase}
Similarly to  standard two valued logic, where computing the semantics of $\Delta$ means computing the set of all tuples true in $\Delta$, in our approach, computing the semantics amounts to compute all true, inconsistent or false tuples, knowing that unknown tuples are the remaining ones. 

However, as mentioned above, the set of false tuples may be infinite, making it impossible to compute them all. In this work, the case of false tuples is only partially addressed, and we rather concentrate on potentially true tuples, with the goal of investigating  consistent query answering in our approach (see Section~\ref{sec:query}). 

\subsection{The Chase Procedure in our Approach}
We first propose an effective algorithm for the computation of all potentially true tuples in a given $\Delta$. This algorithm is in fact inspired by the standard chase algorithm \cite{Spyratos87,Ullman}, with the main difference that when a functional dependency cannot be satisfied, our algorithm does {\em not} stop.

{\small
\algsetup{indent=1.5em}
\begin{algorithm}[!t]
\caption{Chasing a table \label{algo:chase}}
\begin{algorithmic}[1]
\REQUIRE
$\Delta = (D, {\cal FD})$

\ENSURE The chased table $\Delta^*=(D^*,{\cal FD})$ and a set $inc({\cal FD})$ containing sets of tuples associated with each $X \to A$ in ${\cal FD}$

\STATE{$D^* := D$}\label{line:init-T}
\FORALL{$X \to A$ in ${\cal FD}$}
    \STATE{$inc(X \to A) := \emptyset$}
\ENDFOR
\WHILE{$D^*$ changes}\label{line:main-loop-chase}
    \FORALL{$X \to A \in {\cal FD}$}
        \FORALL{$t_1$ in $D^*$ such that $XA \subseteq sch(t_1)$}
            \FORALL{$t_2$ in $D^*$ such that $X \subseteq sch(t_2)$ and $t_1.X = t_2.X$}
               	 \IF{$A \not\in sch(t_2)$}
			\STATE{$D^* := D^* \cup \{t_2a_1\}$ where $a_1=t_1.A$ }\label{line:add-plus}
    	        \ENDIF
    	       	\IF{$A \in sch(t_2)$ and $t_1.A \ne  t_2.A$}\label{line:fd-violation}
			\STATE{Let $y_i= t_i.(sch(t_i)\setminus A)$ and $a_i=t_i.A$, for $i=1,2$}
			\STATE{$D^* := D^*  \cup \{y_2a_1\}$}\label{line:add-plus-bis}\\
			\COMMENT{$y_1a_2$ is inserted into $D^*$ when processing $t_2$ in place of $t_1$}\\ 
			\COMMENT{and $t_1$ in place of $t_2$}
    	          	\STATE{$inc(X\to A) := inc(X \to A) \cup \{x\}$ where $x = t_1.X = t_2.X$}\label{line:add-minus2}
    	       	\ENDIF
            \ENDFOR
        \ENDFOR
     \ENDFOR
\ENDWHILE
\COMMENT{Reduction: keep in $D^*$ only maximal tuples}
\FORALL{$t_1$ in $D^*$}\label{line:norm+}
    \FORALL{$t_2$ in $D^*$}
        \IF{$t_2 \sqsubseteq t_1$ and $t_1 \ne t_2$}
            \STATE{$D^*:= D^* \setminus \{t_2\}$}
        \ENDIF
    \ENDFOR
\ENDFOR
\STATE{$inc({\cal FD}) := \{inc(X \to A)~|~inc(X \to A) \ne \emptyset\}$}
\RETURN{$\Delta^*=(D^*,{\cal FD})$ and $ inc({\cal FD})$}
\end{algorithmic}
\end{algorithm}
}

Instead, our chasing algorithm carries on the computation, returning a database $\Delta^*=(D^*, {\cal FD})$ and a set $inc({\cal FD})$ based on which inconsistent and true tuples are shown to be efficiently computed. Before doing so, we illustrate Algorithm~\ref{algo:chase} in the context of our introductory example.

\begin{figure}[ht]
\begin{center}
{\small
\begin{tabular}{c|llll}
$D$&$Id$&$K$&$M$&$C$\\
\hline
&$i_1$&$k$&$m$&$c$\\
&$i_1$&&$m'$&\\
&$i_1$&$k$&&$c$\\
&$i_2$&$k'$&$m'$&$c$\\
&$i_2$&$k'$&$m''$&\\
&$i_2$&$k'$&&$c'$\\
&$i_3$&&$m$&\\
&$i_3$&$k'$&&\\
\end{tabular}
\qquad\qquad
\begin{tabular}{c|llll}
$D^*$&$Id$&$K$&$M$&$C$\\
\hline
&$i_1$&$k$&$m$&$c$\\
&$i_1$&$k$&$m'$&$c$\\
&$i_2$&$k'$&$m'$&$c$\\
&$i_2$&$k'$&$m''$&$c$\\
&$i_2$&$k'$&$m'$&$c'$\\
&$i_2$&$k'$&$m''$&$c'$\\
&$i_3$&$k'$&$m$&\\
\end{tabular}
}
\end{center}
\caption{The table $D$ and its chased version $D^*$}
\label{fig:ex-table-chased}
\end{figure}

\begin{example}\label{ex:runex-chase}
Running Algorithm~\ref{algo:chase} with the table $D$ shown in Figure~\ref{fig:ex-tables-intro}, and recalled in  Figure~\ref{fig:ex-table-chased}, produces the table $D^*$ shown in the right of Figure~\ref{fig:ex-table-chased} and the set $inc({\cal FD})=\{inc(Id \to M), inc(Id \to C)\}$ where $inc(Id \to M)=\emptyset$ and $inc(Id \to C) =\{i_2\}$. The main steps of the algorithm work as follows:
\\
$\bullet$
First, $D^*$ is assigned $D$, and  $inc(Id \to M)$ and $inc(Id \to C)$ are assigned $\emptyset$.
\\
$\bullet$
Due to the statement on line~\ref{line:add-plus}, the first two rows in $D$ (thus in $D^*$) generate the new tuples $(i_1,k,m')$ and $(i_1, m', c)$. Similarly, applying $Id \to K$ to the last two rows in $D$ generates the new tuple $(i_3,k',m)$.\\
The rows 4 and 5 in $D$ generate $(i_2, k', m'',c)$ and the rows 5 and 6 generate $(i_2, k', m'',c')$. Moreover, due to the statement on line~\ref{line:add-plus-bis}, the rows 4 and 6 generate $(i_2, k', m', c')$ and $(i_2, k',c)$ and  $i_2$ is inserted in $inc(Id \to C)$.
\\
$\bullet$
With these new tuples at hand, the loop on line~\ref{line:main-loop-chase} proceeds further, generating $(i_1,k,m',c)$ by the statement on line~\ref{line:add-plus}. No new tuple is generated at this stage.
\\
$\bullet$
The loop on line~\ref{line:main-loop-chase} is processed once again, producing no new tuple. When running the reduction step against the current state of $D^*$, the following tuples are removed: $(i_1, m')$, $(i_1,k,m')$, $(i_1, m', c)$, $(i_1,k,c)$, $(i_2,k',m'')$, $(i_2, k', c')$, $(i_2,k',c)$, $(i_3,m)$ and $(i_3,k')$.

Thus, the output of Algorithm~\ref{algo:chase} is indeed as expected. It is important to notice that, although tuples have been added in $D^*$ during the processing, the final number of tuples in $D^*$ is less than that in $D$. Although this particular result cannot be proven in general, it will be shown that in the worst case, the size of $D^*$ remains polynomial in the size of $D$.

We emphasize that some nulls present in $D$ have been replaced by actual values in $D^*$, thanks to the functional dependencies in ${\cal FD}$. For example the second tuple in $D$ with two nulls has been `completed' into a total tuple in $D^*$. However, such a completion has not been possible for {\em every} tuple in $D^*$. Namely, the $C$-value in the last tuple of $D^*$ is left as null.

Keeping in line with our statement that `a missing value exists only if it is inferred from the functional dependencies', this indicates that the $C$-value of this tuple could {\em not} be determined based on the content of $D$ and ${\cal FD}$, and no other conclusion can be drawn regarding this null.

To see why the two insertions mentioned in the statement on line~\ref{line:add-plus-bis} are needed, we first recall from \cite{Ullman} that, in the traditional case, the chased table characterizes the semantics of the input table, in case no inconsistency has been detected{\footnote{In traditional chase, the semantics of a table $D$ containing nulls is the set of all tuples true in every instance of $D$, {\em i.e.,} in every relation $R$ over $U$ with no nulls, that satisfies the functional dependencies and such that for every $t$ in $D$, there exists $r$ in $R$ such that $r.T=t$.}}. In this work our goal is similar, but has to be adapted to our context. Namely, we expect that the chased table $D^*$ can provide a syntactical characterization of all possibly true tuples in $\Delta$, that is of all tuples $t$ such that $\Delta \vdash t$ holds.

In the context of our example, if we assume that $(i_2, k',m',c')$ is not inserted during the processing then  Algorithm~\ref{algo:chase} would {\em not} fit our semantics. Indeed, for every ${\cal T}$-mapping $\mu$ such that $\mu \models \Delta$, $\mu(i_2) \subseteq \mu(c')$ holds because of $Id \to C$ applied to the seventh row in $D$. Thus, $\mu(i_2, k',m',c') =\mu(k',m',c')$, and since $\mu(k',m',c')$ is nonempty (due to the fifth row in $D$), $\Delta \vdash (i_2, k',m',c')$. Hence $(i_2, k',m',c')$ must appear in $D^*$ to fulfill our expectation.

Adding such `new' tuples when chasing a table is one of the main features of our approach, as compared with traditional chase. This step should be seen as a `by-product' of carrying on the computation even after encountering a violation of a functional dependency.
\hfill$\Box$ 
\end{example}
The following lemma shows that Algorithm~\ref{algo:chase} provides an operational means to characterize the tuples $t$ such that $\mu^*(t) \ne \emptyset$.
\begin{lemma}\label{lemma:chase}
Algorithm~\ref{algo:chase} applied to $\Delta=(D, {\cal FD})$ always terminates. Moreover, for every tuple $t$, $\mu^*(t) \ne \emptyset$ holds if and only if  $t$ is in ${\sf LoCl}(D^*)$.
\end{lemma}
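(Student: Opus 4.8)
The plan is to establish the two assertions in turn, and throughout I would work with the value $\widehat{D}$ of $D^*$ at the moment the main \textbf{while} loop of Algorithm~\ref{algo:chase} exits, \emph{before} the final reduction. Since reduction only discards tuples that are sub-tuples of retained ones, $\mathsf{LoCl}(\widehat{D}) = \mathsf{LoCl}(D^*)$, so reasoning about $\widehat D$ is harmless. For termination I would first note, by a one-line induction on the loop, that every tuple ever inserted is built exclusively from the finitely many constants occurring in $D$, since lines~\ref{line:add-plus} and~\ref{line:add-plus-bis} reuse only values $t_2.(\cdots)$ and $t_1.A$ already present. As $U$ is finite, the candidate tuples form a finite set, and since the loop body only adds tuples, $D^*$ grows monotonically inside this finite set; hence the loop halts after finitely many effective iterations. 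I would also record the \emph{closure property} of the fixpoint $\widehat D$: for every $X \to A \in {\cal FD}$ and all $t_1, t_2 \in \widehat D$ with $XA \subseteq sch(t_1)$, $X \subseteq sch(t_2)$ and $t_1.X = t_2.X$, the tuple $y_2 a_1$ with $y_2 = t_2.(sch(t_2)\setminus A)$ and $a_1 = t_1.A$ lies in $\widehat D$; this uniformly subsumes both branches (line~\ref{line:add-plus} when $A\notin sch(t_2)$, line~\ref{line:add-plus-bis} otherwise).

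For the direction $t \in \mathsf{LoCl}(D^*) \Rightarrow \mu^*(t) \ne \emptyset$, it suffices to show $\mu^*(s) \ne \emptyset$ for every $s \in \widehat D$, since $t \sqsubseteq s$ gives $\mu^*(t) \supseteq \mu^*(s)$. I would prove this by induction on the order in which tuples enter $\widehat D$. Initially $s \in D$, and $id(s) \in \mu_0(a)$ for every $a \sqsubseteq s$, hence $id(s) \in \mu_0(s) \subseteq \mu^*(s)$. In the inductive step a new tuple is $y_2 a_1$ as above, with $\mu^*(t_1) \ne \emptyset \ne \mu^*(t_2)$ by hypothesis. Since $x a_1 \sqsubseteq t_1$ we get $\mu^*(x a_1) \ne \emptyset$, and as $\mu^* \models X \to A$ (Lemma~\ref{lemma:least-model}) this forces $\mu^*(x) \subseteq \mu^*(a_1)$; combined with $\mu^*(y_2) \subseteq \mu^*(x)$ (from $x \sqsubseteq y_2$) I obtain $\mu^*(y_2 a_1) = \mu^*(y_2) \cap \mu^*(a_1) = \mu^*(y_2) \supseteq \mu^*(t_2) \ne \emptyset$, settling this direction.

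For the converse, $\mu^*(t) \ne \emptyset \Rightarrow t \in \mathsf{LoCl}(D^*)$, I would induct on the least stage $j$ with $\mu_j(t) \ne \emptyset$ (where $\mu^* = \mu_N$ for some finite $N$). If $j = 0$ then $t \sqsubseteq t_0$ for some $t_0 \in D$, so $t \in \mathsf{LoCl}(D)$. Otherwise the step $j-1 \to j$ applies some $X \to A$ with constant $a^*$, setting $\mu_j(a^*) = \mu_{j-1}(a^*) \cup \mu_{j-1}(x)$ under the precondition $\mu_{j-1}(x a^*) \ne \emptyset$; as $\mu_{j-1}(t)=\emptyset$ but $\mu_j(t)\ne\emptyset$, necessarily $a^* \in t$, so write $t = w' a^*$ with $A \notin sch(w')$. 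Picking $i \in \mu_j(t)$ gives $i \in \mu_{j-1}(w')$ (only $\mu(a^*)$ changed) and, since $i \in \mu_{j-1}(a^*)$ would contradict minimality of $j$, also $i \in \mu_{j-1}(x)$. When $w'$ and $x$ \emph{agree} on $X \cap sch(w')$, the tuple $w' x$ is valid with $i \in \mu_{j-1}(w' x)$, so the hypothesis yields $s_2 \in \widehat D$ with $w' x \sqsubseteq s_2$; applying the closure property to $s_2$ and to a chase tuple $s_1 \supseteq x a^*$ (supplied by the hypothesis from $\mu_{j-1}(x a^*)\ne\emptyset$) produces $y_2 a^* \in \widehat D$ with $w' a^* \sqsubseteq y_2 a^*$, as required.

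The crux, and the step I expect to be the main obstacle, is the \emph{conflicting} case where $w'.B \ne x.B$ for some $B \in X \cap sch(w')$: then no single chase tuple can contain both $w'$ and all of $x$, and the chase reaches $w'a^*$ only through a chain of value corrections, as the computations around Example~\ref{ex:fds} illustrate. I would resolve it by a nested induction on the number $d$ of attributes on which $w'$ disagrees with $x$, the base case $d=0$ being the agreeing case above. For $d>0$ I pick a disagreeing $B$ and let $w''$ be $w'$ with its $B$-value reset to $x.B$; then $i \in \mu_{j-1}(w'')$ still holds (as $x.B$ carries $i$), $w''$ disagrees with $x$ on only $d-1$ attributes, and the inner hypothesis gives a chase tuple containing $w'' a^*$, which matches the target except on $B$. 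The remaining task is to transfer the value $w'.B$ back onto this tuple by one further application of the closure property, using that $i \in \mu_{j-1}(w'.B)$ furnishes, through the hypothesis, a chase tuple carrying $w'.B$ on a matching context. Making this corrective step precise --- in particular pinning down the dependency and the anchoring attributes that justify it --- is the delicate heart of the proof and is where I would concentrate the effort.
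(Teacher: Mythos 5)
Your treatment of termination and of the direction $t \in {\sf LoCl}(D^*) \Rightarrow \mu^*(t) \ne \emptyset$ is correct and essentially identical to the paper's: finitely many constants bound the candidate tuples, and the forward direction goes by induction on insertion order using $\mu^* \models X \to A$ from Lemma~\ref{lemma:least-model}. Your unified ``closure property'' of the fixpoint, subsuming lines~\ref{line:add-plus} and~\ref{line:add-plus-bis}, is a clean reformulation and is valid.

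The converse direction, however, contains a genuine gap, and it is exactly the one you flag yourself. In the conflicting case you reduce to a chase tuple $s \supseteq w''a^*$ (with $s.B = x.B$) and then need to produce a chase tuple carrying $w'.B$ in place of $x.B$ ``by one further application of the closure property.'' But the closure property can only change the $B$-entry of $s$ if ${\cal FD}$ contains some $Y \to B$ and $\widehat D$ contains some $s'$ with $s'.Y = s.Y$ and $s'.B = w'.B$. Nothing in your argument identifies such a dependency, and nothing guarantees that the $Y$-values of $s$ and of whatever chase tuple witnesses $i \in \mu_{j-1}(w'.B)$ actually coincide; the fact that $i$ lies in both $\mu_{j-1}(w'.B)$ and $\mu_{j-1}(x.B)$ tells you only that \emph{some} dependency into $B$ fired at \emph{some} earlier stage on \emph{some} left-hand tuple, possibly requiring a chain of corrections rather than one. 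This is precisely the information the paper's proof keeps hold of: it decomposes $\mu_{i+1}(\alpha) = \mu_i(\alpha) \cup M(\alpha)$ and, for each constant $\beta_k$ of $t$ that newly acquires the witness, records the responsible dependency $Y_k \to B_k$ \emph{and} its left-hand tuple $y_k$, so that the two chase tuples to be combined by line~\ref{line:add-plus-bis} are exhibited with matching $Y_k$-values by construction. Your ``reset-and-correct'' induction on the number $d$ of disagreements discards exactly this provenance, so the corrective step cannot be justified as stated. To repair the argument you would need to strengthen the inner induction hypothesis to carry, for each disagreeing attribute, the dependency and left-hand-side tuple through which the witness $i$ reached $w'.B$ --- at which point you are essentially reconstructing the paper's decomposition.
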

\begin{proof}
See Appendix~\ref{append:lemma-chase}.\hfill$\Box$
\end{proof}
Recalling that ${\sf LoCl}(D^*)$ denotes the Lower Closure of $D^*$, that is the set of all sub-tuples of tuples in $D^*$, Lemma~\ref{lemma:chase} shows that $D^*$ is a `tabular' version of the set of all tuples $t$ such that $\mu^*(t) \ne \emptyset$, that is, by Lemma~\ref{lemma:least-model}, a `tabular' version of the set of all tuples $t$ such that $\Delta \vdash t$. Therefore, $D^*$ provides a syntactical characterization of the set of all tuples $t$ such that $\Delta \vdash t$, as expected in the previous example.
\subsection{Computing True Tuples and Inconsistent Tuples}
As mentioned just above, Lemma~\ref{lemma:least-model} and  Lemma~\ref{lemma:chase} show that, given $\Delta=(D, {\cal FD})$, a tuple $t$ is in ${\sf LoCl}(D^*)$ if and only if $\Delta \vdash t$ holds, that is, if and only if $t$ is potentially true in $\Delta$, that is if and only if $t$ is either true or inconsistent in $\Delta$.

\smallskip
To see how to compute the set of all inconsistent tuples, we first recall the notion of {\em closure of a relation scheme} as defined in relational database theory \cite{Ullman}.

Given a set ${\cal FD}$ of functional dependencies and a relation scheme $X$, the {\em closure of $X$ with respect to} ${\cal FD}$, or more simply the {\em closure of} $X$, denoted by $X^+$, is the set of all attributes $A$ in $U$ such that every table $D$ satisfying ${\cal FD}$ in the sense of relational tables, also satisfies $X \to A$.

\smallskip
It is well-known that $X^+$ is computed through the following two steps that are quite similar to the steps of Algorithm~\ref{algo:closure}:

{\small
\begin{itemize}
\item[]
$X^+:=X$
\item[]
{\bf while} $X^+$ changes {\bf do}

\hspace{.5cm} {\bf for all} $Y \to B$ in ${\cal FD}$ such that $Y \subseteq X^+$ {\bf do}

\hspace{1cm} $X^+:=X^+ \cup \{B\}$

{\bf return} $X^+$
\end{itemize}
}

\noindent
The following proposition shows a strong relationship between the closure of a relation scheme as recalled above and the closure of a tuple as stated in Definition~\ref{def:closure}.
\begin{proposition}\label{prop:closure}
Let $\Delta =(D, {\cal FD})$ and $t$ be such that $\Delta \vdash t$. For every tuple $q$ and every  $a$ in $dom(A)$ such that $q \sqsubseteq t$ and $a \sqsubseteq t$, we have: $a$ belongs to $q^+$ if and only if $A$ belongs to $Q^+$.
\end{proposition}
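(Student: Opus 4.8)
The plan is to establish the two implications separately, exploiting that the relational scheme-closure $Q^+$ is produced by the simple iterative algorithm recalled just before the statement, while the tuple-closure $q^+$ is produced by Algorithm~\ref{algo:closure}; the argument will essentially show that the two computations simulate each other on the values carried by $t$. Two preliminary observations organize everything. First, since $q \sqsubseteq t$ and $\Delta \vdash t$, every ${\cal T}$-mapping $\mu$ satisfies $\mu(t) \subseteq \mu(q)$, so $\mu^*(q) \supseteq \mu^*(t) \ne \emptyset$ and hence $\Delta \vdash q$; by the remark following Lemma~\ref{lemma:inclusion} this means $\Delta$ and $\Delta_q$ have exactly the same models, so $\Delta_q \vdash s$ iff $\Delta \vdash s$ for every tuple $s$. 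Second, any sub-tuple of $t$ is potentially true: if $s \sqsubseteq t$ then $\mu^*(s) \supseteq \mu^*(t) \ne \emptyset$, whence $\Delta \vdash s$.

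For the \emph{only if} part ($a \in q^+ \Rightarrow A \in Q^+$) I would run Algorithm~\ref{algo:closure} on $q$ and maintain the invariant that \emph{every} constant currently in $q^+$ is a value of some attribute lying in $Q^+$. Writing $\alpha(c)$ for the unique attribute (unique by the attribute-prefixing convention) with $c \in dom(\alpha(c))$, the invariant reads $\alpha(c) \in Q^+$ for all $c \in q^+$. It holds initially because the starting constants are those with $c \sqsubseteq q$, for which $\alpha(c) \in sch(q) = Q \subseteq Q^+$. For the inductive step, a constant $b \in dom(B)$ is inserted only through some $X \to B \in {\cal FD}$ together with a tuple $x$ over $X$ all of whose constants already belong to $q^+$; by the induction hypothesis $\alpha(x.C) = C \in Q^+$ for every $C \in X$, so $X \subseteq Q^+$, and then the scheme-closure step for $X \to B$ forces $B \in Q^+$, i.e. $\alpha(b) = B \in Q^+$. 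Applying the invariant to the particular constant $a = t.A \in q^+$ yields $A \in Q^+$. Note that this direction uses neither $\Delta \vdash t$ nor the potential-truth test on line~\ref{line:test}; it is purely structural.

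For the \emph{if} part ($A \in Q^+ \Rightarrow a \in q^+$) I would instead induct on the construction of $Q^+$, maintaining the dual invariant: for every attribute $B$ placed in $Q^+$ that belongs to $sch(t)$, the value $t.B$ lies in $q^+$. The base case is immediate since $Q = sch(q)$ and the constants of $q$, which are exactly the values $t.B$ for $B \in Q$, initialize $q^+$. In the inductive step, when $B$ is added to $Q^+$ through an FD $Y \to B$ with $Y \subseteq Q^+$, the induction hypothesis gives $t.C \in q^+$ for every $C \in Y$; thus $x = t.Y$ has all its constants in $q^+$, and the extended sub-tuple $x(t.B) = t.(YB)$ is a sub-tuple of $t$, so by the preliminaries $\Delta \vdash t.(YB)$ and therefore $\Delta_q \vdash t.(YB)$. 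Consequently the test on line~\ref{line:test} succeeds for $Y \to B$, $x = t.Y$ and the value $t.B$, so $t.B$ enters $q^+$. Taking $B = A$ (which lies in $sch(t)$ because $a \sqsubseteq t$) gives $a = t.A \in q^+$.

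The delicate point, and the step I expect to be the real obstacle, is precisely this simulation of a scheme-level FD application by a tuple-level one in the \emph{if} part: the rule of Algorithm~\ref{algo:closure} fires only when the matching values $t.Y$ are already present in $q^+$ \emph{and} the extended tuple $t.(YB)$ is potentially true. Both facts hinge on the hypothesis $\Delta \vdash t$, which makes every sub-tuple of $t$ potentially true and identifies the models of $\Delta$ with those of $\Delta_q$, and on the attributes of $Y \cup \{B\}$ being carried by $t$; keeping the derivation of $Q^+$ tied to the values actually present in $t$ is the part that requires care, whereas the converse inclusion falls out for free from the way constants are allowed to enter $q^+$.
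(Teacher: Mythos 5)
Your proof follows essentially the same route as the paper's: the same preliminary reduction to $\Delta_q$ via $\Delta \vdash q$, the same induction on the steps of Algorithm~\ref{algo:closure} for the only-if direction, and the same induction on the scheme-closure loop with the dual invariant for the if direction. The delicate point you flag --- that the intermediate attributes in the derivation of $Q^+$ must be carried by $t$ for the tuple-level rule to fire --- is likewise left implicit in the paper's own proof, which writes $q.X$ and $t.XA$ without justifying that $X \subseteq sch(t)$.
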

\begin{proof}
See Appendix~\ref{append:proof-prop-closure}.\hfill$\Box$
\end{proof}

{\small
\algsetup{indent=1.5em}
\begin{algorithm}[th]
\caption{Inconsistent tuples in  $\Delta=(D, {\cal FD})$ \label{algo:incons}}
\begin{algorithmic}[1]
\REQUIRE
The output of Algorithm~\ref{algo:chase}, that is $\Delta^* = (D^*, {\cal FD})$ and  $inc({\cal FD})$.

\ENSURE The set ${\sf Inc}(\Delta)$

\STATE{${\sf Inc}(\Delta) := \emptyset$}
\FORALL{$t$ in $D^*$}
    \FORALL{$X \to A$ in ${\cal FD}$ such that $XA \subseteq T$}
       \IF{$x= t.X$ is in $inc(X \to A)$}
            \FORALL{$q$ such that $q \sqsubseteq t$}\label{line:loop-inc}
            	\IF{$X \subseteq Q^+$}
			\STATE{${\sf Inc}(\Delta) := {\sf Inc}(\Delta) \cup \{t.Q\}$}\label{line:ins-inc}
                \ENDIF
            \ENDFOR
         \ENDIF
    \ENDFOR
\ENDFOR
\RETURN{${\sf Inc}(\Delta)$}
\end{algorithmic}
\end{algorithm}
}

\noindent
Using the notion of relation scheme closure, we introduce Algorithm~\ref{algo:incons} which computes the set of inconsistent tuples in $\Delta$. The correctness of this algorithm is shown in  Lemma~\ref{lemma:incons}.
\begin{lemma}\label{lemma:incons}
Given $\Delta=(D, {\cal FD})$, a tuple $t$ is inconsistent in $\Delta$ if and only if $t \in  {\sf Inc}(\Delta)$.
\end{lemma}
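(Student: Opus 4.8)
The plan is to prove the two inclusions separately, using the characterization of Definition~\ref{def:incons-tuple}: a tuple $t$ is inconsistent in $\Delta$ exactly when $\Delta \vdash t$ and $\Delta \dashapprox t$. By Lemma~\ref{lemma:chase} and Lemma~\ref{lemma:least-model} the first condition is equivalent to $t \in {\sf LoCl}(D^*)$, i.e.\ to the existence of a maximal tuple $s \in D^*$ with $t \sqsubseteq s$ and $t = s.Q$ where $Q = sch(t)$; by Definition~\ref{def:closure} the second condition means that $t^+$ contains two distinct constants $a, a'$ of the same domain $dom(A)$. I would first isolate two auxiliary facts on which both directions rest. (i) For $s \in D^*$, $s.X \in inc(X \to A)$ holds precisely when the chase has seen two tuples with common $X$-value $s.X$ but distinct $A$-values; since those tuples lie in ${\sf LoCl}(D^*)$, this yields distinct $a_1,a_2 \in dom(A)$ with $\Delta \vdash (s.X \preceq a_1)$ and $\Delta \vdash (s.X \preceq a_2)$, i.e.\ $a_1,a_2 \in (s.X)^+$. (ii) A maximality fact: if $c \in dom(C)$ lies in $s^+$ for a maximal $s \in D^*$, then $C \in sch(s)$, for otherwise $\mu^*(s) \subseteq \mu^*(c)$ would make $s$ extended by the $C$-value $c$ a tuple of ${\sf LoCl}(D^*)$ strictly above $s$, contradicting maximality. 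I expect to prove both directly from Algorithm~\ref{algo:chase} and Lemma~\ref{lemma:chase}.

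For the direction $t \in {\sf Inc}(\Delta) \Rightarrow t$ inconsistent, suppose $t$ is inserted on line~\ref{line:ins-inc} as $s.Q$ from some $s \in D^*$, some $X \to A \in {\cal FD}$ with $XA \subseteq sch(s)$ and $s.X \in inc(X \to A)$, and some $q \sqsubseteq s$ with $sch(q)=Q$ and $X \subseteq Q^+$. Then $t = s.Q \sqsubseteq s$ gives $\mu^*(t) \supseteq \mu^*(s) \ne \emptyset$, hence $\Delta \vdash t$. By fact (i) there are distinct $a_1,a_2 \in dom(A)$ with $a_1,a_2 \in (s.X)^+$. Applying Proposition~\ref{prop:closure} with the tuple $s$ (for which $\Delta \vdash s$), the sub-tuple $t = s.Q$, and each attribute $B \in X \subseteq Q^+$, I get $s.B \in t^+$; hence $\Delta \vdash (t \preceq s.X)$, and transitivity with $\Delta \vdash (s.X \preceq a_i)$ yields $a_1,a_2 \in t^+$. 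Thus $\Delta \dashapprox t$, so $t$ is inconsistent.

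The harder direction is $t$ inconsistent $\Rightarrow t \in {\sf Inc}(\Delta)$, and the real work is to localise the conflict inside one maximal tuple containing $t$. I would fix a maximal $s \in D^*$ with $t = s.Q$ and distinct $a,a' \in dom(A) \cap t^+$. Since $t^+ \subseteq s^+$, fact (ii) gives $A \in sch(s)$; writing $\bar a = s.A$, at least one of $a,a'$, say $a'$, differs from $\bar a$, so $a'$ is a derived (non-original) constant of $t^+$. Tracing the computation of $t^+$ by Algorithm~\ref{algo:closure}, $a'$ is added through some $X \to A \in {\cal FD}$ witnessed by an $x$ over $X$ all of whose components already lie in $t^+$. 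A straightforward induction on this computation shows that every constant of $t^+$ has its attribute in the relation-scheme closure $Q^+$; in particular $X \subseteq Q^+$, and since each component of $x$ lies in $t^+ \subseteq s^+$, fact (ii) also gives $X \subseteq sch(s)$, whence $XA \subseteq sch(s)$. Finally, to certify $s.X \in inc(X \to A)$: because $a' \in s^+$ we have $\mu^*(s) \subseteq \mu^*(a')$, so replacing the $A$-component of $s$ by $a'$ produces a tuple of ${\sf LoCl}(D^*)$, hence a tuple $s^\dagger \in D^*$ agreeing with $s$ on $sch(s) \setminus A$ but with $s^\dagger.A = a' \ne \bar a$. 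The pair $(s,s^\dagger)$ is a violation of $X \to A$ detected on the terminating pass of the chase, so $s.X = s^\dagger.X$ is placed in $inc(X \to A)$. With $s$, this $X \to A$, and $q = t$, every guard leading to line~\ref{line:ins-inc} is met, so $t = s.Q \in {\sf Inc}(\Delta)$.

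I expect the crux to be exactly this last direction, namely the simultaneous control of three requirements on one and the same maximal tuple $s$: that $X \subseteq Q^+$ (from the closure-attribute induction), that $XA \subseteq sch(s)$ (from the maximality fact applied both to the determinant constants and to $A$), and that the conflict is actually recorded, i.e.\ $s.X \in inc(X \to A)$ (from manufacturing the conflicting twin $s^\dagger$). The delicate point is that the obvious witness supplied by Lemma~\ref{lemma:least-model}(1) need not live inside $s$, so I would avoid it and instead read off the determinant directly from the closure derivation of $a'$, which is what guarantees $X \subseteq Q^+ \cap sch(s)$ and lets the twin construction place the right value in $inc(X \to A)$.
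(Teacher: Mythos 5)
Your proof is correct. The forward direction (membership in ${\sf Inc}(\Delta)$ implies inconsistency) is essentially the paper's argument: both extract distinct $a_1,a_2\in (s.X)^+$ from $s.X\in inc(X\to A)$ and push them into $t^+$ via Proposition~\ref{prop:closure} together with $X\subseteq Q^+$. The converse is where you take a genuinely different route. The paper starts from $a,a'\in q^+$, uses Lemma~\ref{lemma:chase} to obtain two rows $t,t'$ of $D^*$ with $qa\sqsubseteq t$ and $qa'\sqsubseteq t'$, argues $A\in Q^+\setminus Q$, picks some $X\to A$ with $X\subseteq Q^+$, and then simply asserts $t.XA=xa$ and $t'.XA=xa'$ --- i.e.\ that the two rows agree on $X$ --- without really arguing it; it also forces the case split so that $A\notin Q$. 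You instead localise everything in one maximal row $s$ containing $t$: you read the dependency $X\to A$ off the closure derivation of the non-original conflicting constant $a'$, use your maximality fact (ii) to get $XA\subseteq sch(s)$, and manufacture the conflicting twin $s^\dagger$ (via $\mu^*(s)\subseteq\mu^*(a')$ and Lemma~\ref{lemma:chase}) to certify $s.X\in inc(X\to A)$. What this buys is a rigorous justification of exactly the two points the paper glosses over --- why the conflict is recorded on a \emph{common} $X$-value of two rows of $D^*$, and why $A\in sch(t)$ causes no trouble --- at the cost of some extra machinery (the attribute-of-closure induction, which is the easy half of Proposition~\ref{prop:closure} redone without the hypothesis $a\sqsubseteq t$, and the maximality argument resting on the reduction step of Algorithm~\ref{algo:chase}). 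Every step of that machinery checks out, so I have no objection; your version of the converse is arguably the more complete of the two.
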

\begin{proof}
See Appendix~\ref{append:proof-prop-chase}.\hfill$\Box$
\end{proof}
The following proposition characterizes inconsistent and true tuples in $\Delta$ based on  Algorithm~\ref{algo:chase} and  Algorithm~\ref{algo:incons}.
\begin{proposition}\label{prop:chase}
Given $\Delta=(D, {\cal FD})$ and a tuple $t$:
\\
1. $t$ is inconsistent in $\Delta$ if and only if $t \in  {\sf Inc}(\Delta)$.
\\
2. $t$ is true in $\Delta$ if and only if $t \in {\sf LoCl}(D^*) \setminus {\sf Inc}(\Delta)$.
\end{proposition}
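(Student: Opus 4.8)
The plan is to reduce both items to results already established, so that the proof becomes a short chain of equivalences rather than a fresh argument.

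For item~1 there is nothing new to do: the assertion ``$t$ is inconsistent in $\Delta$ if and only if $t \in {\sf Inc}(\Delta)$'' is exactly Lemma~\ref{lemma:incons}, so I would simply invoke it.

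For item~2 I would first record the characterization of potential truth. By the third consequence listed immediately after Lemma~\ref{lemma:least-model}, $\Delta \vdash t$ holds if and only if $\mu^*(t) \ne \emptyset$, and by Lemma~\ref{lemma:chase}, $\mu^*(t) \ne \emptyset$ holds if and only if $t \in {\sf LoCl}(D^*)$. Composing these gives
\[
\Delta \vdash t \iff t \in {\sf LoCl}(D^*).
\]
Next I would unfold Definition~\ref{def:incons-tuple}: $v_\Delta(t) = {\tt true}$ precisely when $\Delta \vdash t$ and $\Delta \not\dashapprox t$, whereas $v_\Delta(t) = {\tt inc}$ precisely when $\Delta \vdash t$ and $\Delta \dashapprox t$.

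The final step is a purely Boolean combination. Using item~1 (equivalently Lemma~\ref{lemma:incons}) together with the definition of ${\tt inc}$, membership $t \in {\sf Inc}(\Delta)$ is equivalent to ``$\Delta \vdash t$ and $\Delta \dashapprox t$''. Hence $t \in {\sf LoCl}(D^*) \setminus {\sf Inc}(\Delta)$ means ``$\Delta \vdash t$'' together with ``not ($\Delta \vdash t$ and $\Delta \dashapprox t$)''; since the first conjunct already asserts $\Delta \vdash t$, the negated clause collapses to $\Delta \not\dashapprox t$. This is exactly the condition $\Delta \vdash t$ and $\Delta \not\dashapprox t$, that is $v_\Delta(t) = {\tt true}$, as claimed. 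As a sanity check, every inconsistent tuple satisfies $\Delta \vdash t$, so ${\sf Inc}(\Delta) \subseteq {\sf LoCl}(D^*)$ and the set difference behaves as intended.

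I do not expect a genuine obstacle: all the analytic content is carried by Lemma~\ref{lemma:least-model}, Lemma~\ref{lemma:chase} and Lemma~\ref{lemma:incons}, and Proposition~\ref{prop:chase} only repackages them. The single point requiring care is the last Boolean simplification — one must exploit the available conjunct $\Delta \vdash t$ to discard it from the negated clause, so that ``not ($\Delta \vdash t$ and $\Delta \dashapprox t$)'' reduces to $\Delta \not\dashapprox t$, rather than to the weaker ``$\Delta \not\vdash t$ or $\Delta \not\dashapprox t$''.
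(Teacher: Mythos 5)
Your proposal is correct and follows exactly the route the paper takes: the paper's proof of Proposition~\ref{prop:chase} is stated as an immediate consequence of Definition~\ref{def:incons-tuple}, Lemma~\ref{lemma:chase} and Lemma~\ref{lemma:incons}, which are precisely the ingredients you chain together. You merely make explicit the Boolean simplification (and the use of Lemma~\ref{lemma:least-model} to link $\Delta \vdash t$ with $\mu^*(t)\ne\emptyset$) that the paper leaves implicit.
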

\begin{proof}
Immediate consequence of Definition~\ref{def:incons-tuple}, Lemma~\ref{lemma:chase} and Lemma~\ref{lemma:incons}. \hfill$\Box$
\end{proof}
The following examples illustrate Algorithm~\ref{algo:incons} and  Proposition~\ref{prop:chase}.
\begin{example}\label{ex:chase1}
As in Example~\ref{ex:fds}, let $\Delta =(D, {\cal FD})$ over $U=\{A,B,C\}$ where $D=\{abc, ac'\}$ and ${\cal FD}=\{A\to B, B \to C\}$. The tabular version of $D$ is shown on the left below, whereas $D^*$ is shown on the right.

\begin{center}
{\small
\begin{tabular}{c|lll}
$D$~~&$A$&$B$&$C$\\
\hline
&$a$&$b$&$c$\\
&$a$&&$c'$\\
\end{tabular}
\qquad\qquad\qquad\qquad
\begin{tabular}{c|lll}
$D^*$&$A$&$B$&$C$\\
\hline
&$a$&$b$&$c$\\
&$a$&$b$&$c'$\\
\end{tabular}
}
\end{center}
Running Algorithm~\ref{algo:chase}, $D^*$ is first set to $D$ and $abc'$ is inserted in $D^*$ by the statement line~\ref{line:add-plus} due to $A \to B$. Then, $b$ is inserted in $inc(B \to C)$ by the statement line~\ref{line:add-minus2}, due to the tuples $abc$ and $abc'$. Thus, the table $D^*$ output by Algorithm~\ref{algo:chase} is as shown above and $inc({\cal FD})= \{inc(A \to B), inc(B \to C)\}$ where $inc(A \to B)=\emptyset$ and $inc(B \to C)=\{b\}$.

When running Algorithm~\ref{algo:incons} for $abc$ in $D^*$, since $b$ is in $inc(B \to C)$, $b$, $ab$, $bc$ and $abc$ are inserted into ${\sf Inc}(\Delta)$, due to the statement on line~\ref{line:ins-inc}. This is so because the schema $Q$ of each of these tuples contains $B$, and so, satisfies $B \subseteq Q^+$.

Moreover, for $q=a$, due to $A \to B$, we have $A^+=ABC$ and thus, $B \subseteq A^+$ holds, showing that $a$ is inserted in  ${\sf Inc}(\Delta)$ on line~\ref{line:ins-inc}. A similar reasoning holds for $q=ac$ because $B \in (AC)^+$. Thus, $ac$ is also inserted in  ${\sf Inc}(\Delta)$ on line~\ref{line:ins-inc}. The only remaining possibility is $q=c$,  and does not modify ${\sf Inc}(\Delta)$ because $B \not\subseteq C^+$. A similar computation is performed with $abc'$ in $D^*$, adding $bc'$, $abc'$ and $ac'$ in ${\sf Inc}(\Delta)$. As no other tuple can be inserted in ${\sf Inc}(\Delta)$, Algorithm~\ref{algo:incons} returns

\smallskip
${\sf Inc}(\Delta)=\{abc,$ $abc',$ $ab$, $ac,$ $ac',$ $bc,$ $bc',$ $a,$ $b\}$, 

\smallskip\noindent
which, by Proposition~\ref{prop:chase}(1), is the set of all inconsistent tuples in $\Delta$. As a consequence,  by Proposition~\ref{prop:chase}(2), $c$ and $c'$ are the only true tuples in $\Delta$.

\smallskip
Now, as in Example~\ref{ex:fds}, referring to $\Delta_1 =(D_1 ,{\cal FD})$ with $D_1=\{ac, ac'\}$, it is easy to see that $D_1^*=D_1$. This implies that $\Delta_1$ is consistent, and that $ac$, $ac'$, $a$, $c$ and $c'$ are true in $\Delta_1$.
\hfill$\Box$
\end{example}
\subsection{The Case of False Tuples}
As already noticed, computing all tuples false in a given $\Delta=(D, {\cal FD})$ is not feasible in case of infinite attribute domains. However, given a tuple $t$ and assuming that $\Delta^*$ and ${\sf Inc}(\Delta)$ have been computed, Algorithm~\ref{algo:truth-value} allows to compute $v_\Delta(t)$. In this way, instead of being systematically identified, false tuples are identified on demand.

{\small
\algsetup{indent=1.5em}
\begin{algorithm}[t]
\caption{Tuple truth value in  $\Delta=(D, {\cal FD})$ \label{algo:truth-value}}
\begin{algorithmic}[1]
\REQUIRE
A tuple $t$, $\Delta^* = (D^*, {\cal FD})$ and  ${\sf Inc}(\Delta)$

\ENSURE The truth value of $t$ as one of the truth values {\tt true}, {\tt false}, {\tt inc} or {\tt unkn} 

\STATE{$v := {\tt unkn}$}
\IF{$t \in {\sf LoCl}(\Delta^*)$}\label{line:truth-value-test}
    \IF{$t \in {\sf Inc}(\Delta)$}
              \STATE{$v := {\tt inc}$}\label{line:i}
    \ELSE
        \STATE{$v := {\tt true}$}\label{line:t}
    \ENDIF
\ELSE
    \STATE{Compute $D^*_t$ using Algorithm~\ref{algo:chase} applied to $D^*\cup\{t\}$ and ${\cal FD}$}
    \STATE{Compute ${\sf Inc}(\Delta_t)$ using Algorithm~\ref{algo:incons} applied to $\Delta_t =(D^*\cup\{t\},{\cal FD})$}
    \IF{$t \in {\sf Inc}(\Delta_t)$}
    	  \STATE{$v := {\tt false}$}\label{line:f}
    \ENDIF
\ENDIF
\RETURN{$v$}
\end{algorithmic}
\end{algorithm}
}

\begin{proposition}\label{prop:truth-value}
Given $\Delta= (D, {\cal FD})$, and a tuple $t$ and assuming that $\Delta^*=(D^*, {\cal FD})$ and ${\sf Inc}(\Delta)$ have been computed,  the truth value returned by Algorithm~\ref{algo:truth-value} is equal to $v_\Delta(t)$.
\end{proposition}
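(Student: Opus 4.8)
The plan is to verify that Algorithm~\ref{algo:truth-value} returns each of the four possible truth values exactly when Definition~\ref{def:incons-tuple} prescribes it, by splitting the analysis along the main branch of the algorithm, namely the test $t \in {\sf LoCl}(\Delta^*)$ on line~\ref{line:truth-value-test}. By Lemma~\ref{lemma:chase}, this membership test is equivalent to $\mu^*(t) \ne \emptyset$, which by Lemma~\ref{lemma:least-model}(3) is in turn equivalent to $\Delta \vdash t$. So the top-level branch of the algorithm exactly separates the potentially true tuples ($\Delta \vdash t$) from those satisfying $\Delta \not\vdash t$, which is precisely the first coordinate of the classification in Definition~\ref{def:incons-tuple}.

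First I would treat the \emph{then}-branch, where $\Delta \vdash t$ holds. Here the tuple is either true or inconsistent in $\Delta$, the distinction being whether $\Delta \dashapprox t$ holds. The algorithm resolves this by testing $t \in {\sf Inc}(\Delta)$. By Proposition~\ref{prop:chase}(1), $t \in {\sf Inc}(\Delta)$ holds if and only if $t$ is inconsistent in $\Delta$, so line~\ref{line:i} correctly returns ${\tt inc}$ in that case. Otherwise, since $\Delta \vdash t$ holds and $t \notin {\sf Inc}(\Delta)$, Proposition~\ref{prop:chase}(2) gives that $t \in {\sf LoCl}(D^*) \setminus {\sf Inc}(\Delta)$, hence $t$ is true in $\Delta$, matching the value ${\tt true}$ returned on line~\ref{line:t}. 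Thus the entire then-branch is justified directly by Proposition~\ref{prop:chase}.

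Next I would handle the \emph{else}-branch, where $\Delta \not\vdash t$. Here $t$ is either false or unknown in $\Delta$, the distinction being whether $\Delta \dashapprox t$ holds. The algorithm forms $\Delta_t = (D^* \cup \{t\}, {\cal FD})$, recomputes the chase and the inconsistent set, and tests $t \in {\sf Inc}(\Delta_t)$. The key observation is Proposition~\ref{prop:false-tuple}: given $\Delta \not\vdash t$, we have $v_\Delta(t) = {\tt false}$ if and only if $v_{\Delta_t}(t) = {\tt inc}$. Applying Lemma~\ref{lemma:incons} (equivalently Proposition~\ref{prop:chase}(1)) to the database $\Delta_t$, the condition $v_{\Delta_t}(t) = {\tt inc}$ is equivalent to $t \in {\sf Inc}(\Delta_t)$, which is exactly the test the algorithm performs. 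Hence line~\ref{line:f} correctly returns ${\tt false}$ in that case, and otherwise the initial value ${\tt unkn}$ is returned, which is correct since $\Delta \not\vdash t$ together with the failure of $\Delta \dashapprox t$ yields $v_\Delta(t) = {\tt unkn}$ by Definition~\ref{def:incons-tuple}.

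The one subtlety I would flag, and the step I expect to require the most care, is the else-branch, where the algorithm chases $D^* \cup \{t\}$ rather than the original $D \cup \{t\}$. I would need to confirm that building $\Delta_t$ on top of the already-chased $D^*$ yields the same inconsistent set (restricted to sub-tuples of $t$) as building it on top of $D$, as used in Proposition~\ref{prop:false-tuple}. This follows because, by Lemma~\ref{lemma:chase}, $D$ and $D^*$ induce the same $\mu^*$ and hence the same relation $\Delta \vdash$; since the whole truth-value machinery depends on $\Delta$ only through $\vdash$ and $\dashapprox$ (both defined via all models $\mu \models \Delta$, and characterized through $\mu^*$ and closures), adding $t$ to $D^*$ produces a database equivalent to $\Delta_t$ for the purpose of evaluating $v(t)$. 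Once this equivalence is noted, the correctness in all four cases follows by assembling Lemma~\ref{lemma:chase}, Proposition~\ref{prop:chase}, and Proposition~\ref{prop:false-tuple} as above.
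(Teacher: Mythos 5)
Your proof is correct and follows essentially the same route as the paper, whose own argument is simply the observation that the claim is an immediate consequence of Definition~\ref{def:incons-tuple} and Proposition~\ref{prop:false-tuple}; you spell out the two branches via Lemma~\ref{lemma:chase} and Proposition~\ref{prop:chase} exactly as intended. The one point where you go beyond the paper is in explicitly justifying that chasing $D^*\cup\{t\}$ rather than $D\cup\{t\}$ is harmless; this is a real subtlety the paper leaves implicit, and your resolution (the two databases have the same models, hence the same $\vdash$, closures, and truth values) is sound.
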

\begin{proof}
Immediate consequence of Definition~\ref{def:incons-tuple} and Proposition~\ref{prop:false-tuple}. \hfill$\Box$
\end{proof}
The following example illustrates the algorithm.

\begin{example}\label{ex:truth-value}
It has been seen in Example~\ref{ex:runex-chase} that in the context of our introductory example, Algorithm~\ref{algo:chase} returns the table $D^*$  as shown in Figure~\ref{fig:ex-table-chased}, and  $inc(\Delta)=\{inc(Id \to K), inc(Id \to C)\}$ where $inc(Id \to K)=\emptyset$ and $inc(Id \to C)=\{i_2\}$. Thus, by Algorithm~\ref{algo:incons}, the set ${\sf Inc}(\Delta)$ is defined by:

\smallskip\begin{tabular}{rl}
${\sf Inc}(\Delta)=$&$\{t~|~i_2 \sqsubseteq t \sqsubseteq (i_2,k',m',c)\} \cup \{t~|~i_2 \sqsubseteq t \sqsubseteq (i_2,k',m',c')\}\, \cup$\\
&$\{t~|~i_2 \sqsubseteq t \sqsubseteq (i_2,k',m'',c)\} \cup \{t~|~i_2 \sqsubseteq t \sqsubseteq (i_2,k',m'',c')\}$\\
\end{tabular}

\smallskip\noindent
Applying now Algorithm~\ref{algo:truth-value}, we have the following:
\begin{itemize}
    \item $v_{\Delta}(i_1,a,m,c)=v_{\Delta}(i_1, k, m',c)={\tt true}$, because  line~\ref{line:t} changes the value of $v$, since these tuples are in $D^*$ but not in ${\sf Inc}(\Delta)$.
    \item $v_{\Delta}(i_2)=v_{\Delta}(i_2,c)=v_{\Delta}(i_2,c')={\tt inc}$, because  line~\ref{line:i} changes the value of $v$, since these tuples are in ${\sf Inc}(\Delta)$.
    \item $v_{\Delta}(i_1,k')=v_{\Delta}(i_1, c')={\tt false}$. Indeed, none of these tuples is in ${\sf LoCl}(D^*)$, thus implying that neither line~\ref{line:i} nor  line~\ref{line:t} applies. Moreover, when running Algorithm~\ref{algo:truth-value} with $(i_1,k')$ as input, $(i_1,k,m,c)$ and $(i_1,k')$ are in $D^*\cup\{(i_1,k')\}$. Hence $(i_1,k')$ is in ${\sf Inc}(\Delta_t)$, because of $Id \to K$, and by line~\ref{line:f}, $v_\Delta(k'm)$ is set to ${\tt false}$.
A similar reasoning holds for $(i_1, c')$, but using $Id \to C$. 
        \item $v_{\Delta}(k',m)={\tt unkn}$. Indeed, as above, when running Algorithm~\ref{algo:truth-value} with $(k',m)$ as input, lines~\ref{line:i} and \ref{line:t} do not change the value of $v$ (as $(k',m)$ does not occur in ${\sf LoCl}(D^*)$). Moreover, as  $D^* \cup \{(k',m)\}$ is consistent, the value of $v$ is not changed by the statement line~\ref{line:f}. Consequently, ${\tt unkn}$ is returned.\hfill$\Box$
\end{itemize}
\end{example}
\subsection{Complexity Issues}
We argue that the computation of inconsistent and true tuples in  $\Delta=(D, {\cal FD})$ is polynomial in the size of the table $D$ and in the order of the `number of conflicts with respect to functional dependencies' (to be defined shortly). To see this, denoting by $|E|$ the cardinality of a set $E$, we investigate the complexities of Algorithm~\ref{algo:chase} and of Algorithm~\ref{algo:incons}.

Regarding Algorithm~\ref{algo:chase}, we first notice that, contrary to the standard chase algorithm \cite{Ullman}, rows are added in the table during the computation, and some others are then removed by the reduction statement of line~\ref{line:norm+}. To assess the size of the table $D^*$ during the processing, we point out the following:
\begin{itemize}
    \item If no inconsistency is found during the processing of the while-loop on line~\ref{line:main-loop-chase}, at most one tuple is added in $D^*$ as the `join' of two tuples in $D$ by statement line~\ref{line:add-plus}. Therefore, the cardinality of $D^*$ remains in the same order as that of $D$. Notice in this respect that, upon reduction, {\em one} `join' tuple replaces {\em two} tuples in $D$, which reduces the size of the table $D^*$ output by the algorithm.
    \item However, when inconsistent tuples occur, the  statement line~\ref{line:add-plus}  adds more than one tuple and statement line~\ref{line:add-plus-bis} inserts tuples resulting from a cross-product.
\end{itemize}
To find an upper bound of the size of $D^*$, for every $X \to A$ in ${\cal FD}$, let $N(x)$ be the number of different $A$-values $a$ such that $\Delta \vdash xa$ and $x$ belongs to $inc(X \to A)$. We denote by $\delta$ the maximal value of all $N(x)$ for all $x$ in $inc({\cal FD})$; in other words $\delta= \max(\{N(x)~|~x \in inc({\cal FD})\})$. $\delta$ is precisely what was earlier referred to as the  `number of conflicts with respect to functional dependencies'.

Given a tuple in $D$ and a functional dependency $X \to A$ in ${\cal FD}$, each of the  statements line~\ref{line:add-plus} and line~\ref{line:add-plus-bis}  generates at most $\delta$ tuples. Since several functional dependencies may apply to $t$, at most $\delta^{|{\cal FD}|}$ tuples are generated for the given tuple $t$. Hence, the number of tuples generated by  the  statements lines~\ref{line:add-plus} and \ref{line:add-plus-bis}  is in ${\cal O}\left(|D|.\delta^{|{\cal FD}|}\right)$. We therefore obtain that the size of the table $D^*$ when running Algorithm~\ref{algo:chase} is in ${\cal O}\left(|D|.\left(1+ \delta^{|{\cal FD}|}\right)\right)$, that is in ${\cal O}\left(|D|.\delta^{|{\cal FD}|}\right)$.

\smallskip
Since the number of runs of the while-loop on line~\ref{line:main-loop-chase} is at most equal to the number of tuples added into $D^*$,  this number is in ${\cal O}\left(|D|.\delta^{|{\cal FD}|}\right)$. Since moreover one run of the while-loop is quadratic in the size of $D^*$,  the computational complexity  of this while-loop is in ${\cal O}\left(|D|^3.\delta^{3.|{\cal FD}|}\right)$.

The last point to be mentioned here is that the reduction processing on line~\ref{line:norm+} is performed through a scan $D^*$ whereby for every  $t$ in $D^*$ every sub-tuple of $t$ is removed. Such a processing being quadratic in the size of $D^*$, the overall computational complexity of Algorithm~\ref{algo:chase} is in ${\cal O}\left(|D|^3.\delta^{3.|{\cal FD}|}\right)$.

As the computational complexity of Algorithm~\ref{algo:incons} is clearly linear in the size of $D^*$, the global complexity of the computation of inconsistent and true tuples in $\Delta$ is as stated just above, and therefore {\em polynomial in the size of $D$}. 

\smallskip
Regarding Algorithm~\ref{algo:truth-value}, we notice that its complexity is in ${\cal O}\left(|D|^3.\delta^{3.|{\cal FD}|}\right)$ as well, because  it requires a scan of $D^*$ and then, in case the test line~\ref{line:truth-value-test} fails, Algorithm~\ref{algo:chase} is applied to a table whose cardinality is that of $D^*$ plus 1. It should however be kept in mind that, in this case, the algorithm has to be run {\em once for each tuple}, which shows that computing false tuples is not feasible even if all attribute domains are finite. Indeed, in this case, denoting by $DOM$ the maximal cardinality of attribute domains, the cardinality of  ${\cal T}$ is in ${\cal O}\left(|U|^{DOM}\right)$, thus yielding a computation in ${\cal O}\left(|U|^{DOM}.|D|^3.\delta^{3.|{\cal FD}|}\right)$.

We draw attention on the following important points regarding these complexity results:
\begin{enumerate}
\item Regarding the computation of false tuples, the above result has to be further investigated in the following two directions: first the computation of $D^*_t$ processed in Algorithm~\ref{algo:truth-value} is likely to be optimized using an {\em incremental} algorithm instead of Algorithm~\ref{algo:chase}, and second, it is expected that there exist interesting and relevant cases whereby the computation of $D^*_t$ is {\em not necessary}. We indeed suspect that this holds in  the case of a star schema. This is an important issue that lies out of the scope of the present paper, but that will be investigated in the next future.
    \item When the database is consistent, $\delta$ is equal to $1$, thus yielding a complexity in ${\cal O}(|D|^3)$. This result can be shown independently from the above computations as follows: In the case of traditional chase the maximum of nulls in $D$ being bounded by $|U|.|D]$, the number or iterations when running the algorithm is also bounded by $|U|.|D|$. Since the run of one iteration is in $|D|^2$, the overall complexity is in ${\cal O}(|U|.|D|^3)$, or in ${\cal O}(|D|^3)$, as $|U|$ is independent from $|D|$. 
    \item The above complexity study should be further investigated in order to provide more accurate results regarding the estimation of the number of actual tests necessary to the computation of $D^*$. The results in \cite{CKS86} are likely to be useful for such a more  thorough study of this complexity.
\end{enumerate}

\section{Four-Valued Logic and Table Merging}\label{sec:four-logic}
In this section, we first give a brief overview of  Belnap's Four-valued logic and then we show that our approach has a strong relationship with this formalism in the context of merging two or more tables.
\subsection{Basics of Four-Valued Logic}\label{subsec:four}
Four-valued logic was introduced by Belnap in \cite{Belnap}, who argued that his formalism is of interest when integrating data  from various data sources. To this end, he introduced four truth values denoted by {\tt t}, {\tt b}, {\tt n} and {\tt f} and read as {\em true}, {\em both true and false}, {\em neither true nor false} and {\em false}, respectively.  An important feature of this Four-valued logic is that its truth values can be compared according to two partial orderings, known as {\em truth ordering} and {\em knowledge ordering}, respectively denoted by $\preceq_t$ and $\preceq_k$ and defined as follows:

\begin{figure}[t]
\begin{center}
{\footnotesize
\begin{tabular}{c|c}
\,$\varphi$\,&\,$\neg \varphi$\,\\
\hline
${\tt t}$&${\tt f}$\\
${\tt b}$&${\tt b}$\\
${\tt n}$&${\tt n}$\\
${\tt f}$&${\tt t}$\\
\end{tabular}
\qquad
\begin{tabular}{c|llll}
\,$\vee$\,&\,{\tt t}&{\tt b}&{\tt n}&{\tt f}\\
\hline
{\tt t}&\,{\tt t}&{\tt t}&{\tt t}&{\tt t}\\
{\tt b}&\,{\tt t}&{\tt b}&{\tt t}&{\tt b}\\
{\tt n}&\,{\tt t}&{\tt t}&{\tt n}&{\tt n}\\
{\tt f}&\,{\tt t}&{\tt b}&{\tt n}&{\tt f}\\
\end{tabular}
\qquad
\begin{tabular}{c|llll}
\,$\wedge$\,&\,{\tt t}&{\tt b}&{\tt n}&{\tt f}\\
\hline
{\tt t}&\,{\tt t}&{\tt b}&{\tt n}&{\tt f}\\
{\tt b}&\,{\tt b}&{\tt b}&{\tt f}&{\tt f}\\
{\tt n}&\,{\tt n}&{\tt f}&{\tt n}&{\tt f}\\
{\tt f}&\,{\tt f}&{\tt f}&{\tt f}&{\tt f}\\
\end{tabular}
\\~\\~\\
\begin{tabular}{c|llll}
\,$\oplus$\,&\,{\tt t}&{\tt b}&{\tt n}&{\tt f}\\
\hline
{\tt t}&\,{\tt t}&{\tt b}&{\tt t}&{\tt b}\\
{\tt b}&\,{\tt b}&{\tt b}&{\tt b}&{\tt b}\\
{\tt n}&\,{\tt t}&{\tt b}&{\tt n}&{\tt f}\\
{\tt f}&\,{\tt b}&{\tt b}&{\tt f}&{\tt f}\\
\end{tabular}
\qquad
\begin{tabular}{c|llll}
\,$\otimes$\,&\,{\tt t}&{\tt b}&{\tt n}&{\tt f}\\
\hline
{\tt t}&\,{\tt t}&{\tt t}&{\tt n}&{\tt n}\\
{\tt b}&\,{\tt t}&{\tt b}&{\tt n}&{\tt f}\\
{\tt n}&\,{\tt n}&{\tt n}&{\tt n}&{\tt n}\\
{\tt f}&\,{\tt n}&{\tt f}&{\tt n}&{\tt f}\\
\end{tabular}
}
\end{center}
\caption{Truth tables of basic connectors}
\label{fig:truth-tables-con}
\end{figure}

\smallskip\centerline{
${\tt n}\preceq_k {\tt t}\preceq_k {\tt b}$~;~${\tt n}\preceq_k {\tt f} \preceq_k {\tt b}$ \qquad and \qquad 
${\tt f}\preceq_t {\tt n}\preceq_t {\tt t}$~;~${\tt f}\preceq_t {\tt b} \preceq_t {\tt t}$.}

\smallskip\noindent
As a consequence, two new connectors were introduced, denoted by $\oplus$ and $\otimes$, in addition to the standard connectors $\vee$ (disjunction) and $\wedge$ (conjunction). The corresponding truth tables, along with that for negation, are displayed in Figure~\ref{fig:truth-tables-con} and show that $\vee$ and $\oplus$ correspond to the least upper bound (lub) with respect to $\preceq_t$ and $\preceq_k$, respectively; whereas $\wedge$ and $\otimes$, correspond to the geatest lower bound (glb) with respect to $\preceq_t$ and $\preceq_k$, respectively . It is also shown in \cite{Belnap,Fitting91} that the set $\{{\tt t}, {\tt b}, {\tt n}, {\tt f}\}$ equipped with the two orderings $\preceq_t$ and $\preceq_k$ has a distributive bi-lattice structure.

Not surprisingly, some basic properties holding in standard logic do not hold in this setting. For example, Figure~\ref{fig:truth-tables-con} shows that formulas of the form $\Phi \vee \neg\Phi$ are not always true, independently of the truth value of $\Phi$. The reader is referred to the literature \cite{ArieliA98,Belnap,Fitting91,Laurent19,Tsoukias} for more details on the properties of Four-valued logic.

\smallskip
Based on the truth tables shown in Figure~\ref{fig:truth-tables-con}, it turns out that the connector $\oplus$ plays a key role in the context of data integration. Indeed, considering $n$ data sources $S_1, \ldots , S_n$ and a fact $\varphi$, for every $i=1, \ldots ,n$, $\varphi$ is assigned one truth value $v_i$, among ${\tt t}$,  ${\tt b}$, ${\tt n}$, or ${\tt f}$ in each $S_i$. The `integrated' truth value of $\varphi$, denoted by $v$ is then obtained as the expression $v=v_1\oplus \ldots \oplus v_n$, due to the following intuition:
\begin{itemize}
\item
The third row (or third column) of the truth table of $\oplus$ shows that every $v_i$ such that $v_i ={\tt n}$ plays no role in the resulting truth value $v$, provided that one of them be distinct from ${\tt n}$ (otherwise the `integrated' truth value of $\varphi$ is obviously ${\tt n}$). This fits our intuition that a source in which the truth value of $\varphi$ is unknown does not provide any piece of information regarding the `integrated' truth value of $\varphi$. We thus assume hereafter that for every $i=1, \ldots ,n$, $v_i \ne {\tt n}$.
\item
For every ${\tt v}$ among {\tt t}, {\tt b}, {\tt n} or {\tt f}, if $v_1= \ldots =v_n ={\tt v}$, then $v = {\tt v}$. The intuition here is that, since all sources agree on truth value ${\tt v}$, it is obvious to expect  $v$ to be this common value ${\tt v}$. For example, if for every $i=1, \ldots ,n$, $v_i ={\tt t}$, then it should be obvious that $v$ must be ${\tt t}$ as well!
\item
Now, if there exists $i_0$ such that $v_{i_0}={\tt b}$, then $v = {\tt b}$. This fits the intuition that if $\varphi$ is inconsistent in at least one data source, then  $\varphi$ remains inconsistent in the integrated source.
\item
The last case is when there exist distinct $i$ and $j$ in $\{1, \ldots ,n\}$ such that $v_i \ne v_j$, and no $v_i$ is equal to ${\tt b}$. In this case we have $v_i={\tt t}$ and $v_j={\tt f}$ (or equivalently  $v_i={\tt f}$ and $v_j={\tt t}$), which is the standard case of conflicting data sources in practice. In this case, it holds that $v={\tt b}$ (since ${\tt t} \oplus {\tt f}={\tt b}$). This result again fits our intuition that in case of conflicting data sources, the `integrated' truth value in inconsistent.
\end{itemize}
In the next sub-section, we show that, in our approach, the four truth values as defined in Definition~\ref{def:incons-tuple} also follow this intuition when it comes to merging two or more tables over the same universe $U$.
\subsection{Merging two or more Tables}\label{subsec:integration}
Data merging consists in collecting data from multiple, possibly heterogeneous sources and putting them in a single destination. The data from each source usually comes in the form of a CSV file, along with some hints on the data, referred to as metadata \cite{MEDES,RavatZ19}. During this process, different data sources are put together, or merged, into a single data store. Data merging is also related to data consolidation and to data integration.

When data comes from a broad range of sources, consolidation allows organizations to more easily present data, while also facilitating effective data analysis. Data consolidation techniques reduce inefficiencies, like data duplication, costs related to reliance on multiple databases and multiple data management points. 
 
In this section, we consider a simplified, relational scenario of $n$ sources $\Delta_1= (D_1, {\cal FD}_1), \ldots , \Delta_n= (D_n, {\cal FD}_n)$, where each source $\Delta_i= (D_i, {\cal FD}_i)$ consists of a table $D_i$ over a fixed universe $U$, possibly with nulls, and functional dependencies ${\cal FD}_i$. We then explain how to merge these sources in our approach under the following assumptions:

\begin{enumerate}
    \item All source tables are over the {\em same} universe $U$.
    \item Merging is done in the simplest possible way, namely $(a)$ the merged table is the {\em union} (in the set theoretic sense) of the source tables and $(b)$ the set of functional dependencies of the merged table is the union of the sets of functional dependencies  of the source tables. That is, the sources are merged through the pair: 
    $\Delta =(D, {\cal FD})$, where $D = \bigcup_{i=1}^{i=n}D_i$ and ${\cal FD}=\bigcup_{i=1}^{i=n}{\cal FD}_i$.
\end{enumerate}
Relying on Belnap's  Four-valued logic, we investigate the relationship between the truth values a tuple $t$ has in the source tables and the truth value the tuple $t$ has in the merged table.

First, notice that a `natural' one-to-one mapping $h$ from our set ${\sf Four}= \{{\tt true},$ ${\tt inc},$ ${\tt unkn},$ ${\tt false}\}$ to Belnap's set ${\cal FOUR}=\{{\tt t},$ $ {\tt b},$  ${\tt n},$ ${\tt f}\}$, can be defined by: $h({\tt true})={\tt t}$, $h({\tt inc})={\tt b}$, $h({\tt unkn})={\tt n}$ and $h({\tt false})={\tt f}$. Then, the connector $\oplus$ defined on ${\cal FOUR}$ induces a connector $\overline{\oplus}$ over ${\sf Four}$ defined by:
${\tt v}_1 \overline{\oplus}{\tt v}_2 = h^{-1}(h({\tt v}_1) \oplus h({\tt v}_2))$ for all ${\tt v}_1$ and ${\tt v}_2$ in ${\sf Four}$.

Moreover, we can define a partial ordering on ${\sf Four}$ isomorphic to the knowledge ordering of ${\cal FOUR}$ that allows us to compare truth values in ${\sf Four}$. Denoting this partial ordering by $\triangleleft$, we have:

\smallskip\centerline{${\tt unkn}\,\triangleleft\,{\tt false}\,\triangleleft\,{\tt inc}$\quad and\quad ${\tt unkn}\,\triangleleft\,{\tt true}\,\triangleleft\,{\tt inc}$}

\smallskip\noindent
The following proposition shows that the truth value of a tuple $t$ in the merged table is always greater (with respect to $\triangleleft$) than any of the truth values that $t$ has in the source tables in which it appears. In other words, when merging tables, the knowledge about tuples always increases, compared to the knowledge we have about tuples in the source tables.
\begin{proposition}\label{prop:integration}
Let $\Delta_i = (D_i,{\cal FD}_i)$ ($i=1, \ldots ,n$) be $n$ data sources over the same universe, and let $\Delta =(D,{\cal FD})$ be defined by $D = \bigcup_{i=1}^{i=n}D_i$ and ${\cal FD}=\bigcup_{i=1}^{i=n}{\cal FD}_i$. For every tuple $t$ the following holds:

\smallskip\centerline{
$ \overline{\bigoplus}\,_{i=1}^{i=n}\,v_{\Delta_i}(t)  ~\triangleleft~ v_\Delta(t)$.}
\end{proposition}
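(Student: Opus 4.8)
The plan is to reduce the statement to two monotonicity facts — one about potential truth ($\Delta \vdash t$) and one about potential falsity ($\Delta \dashapprox t$) — and then to derive both from a single observation: merging only \emph{adds} constraints, and therefore shrinks the class of ${\cal T}$-mappings satisfying the database.

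First I would record the order structure of $({\sf Four}, \triangleleft)$. Via the isomorphism $h$, the ordering $\triangleleft$ corresponds to the knowledge ordering $\preceq_k$ and $\overline{\oplus}$ to its least upper bound. I would encode each truth value of a database $\Gamma$ by the pair of Boolean flags $\bigl(\Gamma \vdash t,\ \Gamma \dashapprox t\bigr)$: by Definition~\ref{def:incons-tuple} the values ${\tt unkn}, {\tt true}, {\tt false}, {\tt inc}$ correspond respectively to $(0,0), (1,0), (0,1), (1,1)$. Under this encoding $({\sf Four}, \triangleleft)$ is exactly the product lattice of two copies of the chain $0 \le 1$, so the $n$-ary operator $\overline{\bigoplus}$ is componentwise disjunction: the pair encoding $\overline{\bigoplus}_{i=1}^{n} v_{\Delta_i}(t)$ is $\bigl(\bigvee_i (\Delta_i \vdash t),\ \bigvee_i (\Delta_i \dashapprox t)\bigr)$. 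Consequently the claimed inequality $\overline{\bigoplus}_{i=1}^{n} v_{\Delta_i}(t) \triangleleft v_\Delta(t)$ is equivalent to the conjunction of two implications: $(1)$ if $\Delta_i \vdash t$ for some $i$ then $\Delta \vdash t$; and $(2)$ if $\Delta_i \dashapprox t$ for some $i$ then $\Delta \dashapprox t$.

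The key lemma I would establish next is that every satisfying mapping of $\Delta$ satisfies each $\Delta_i$, that is $\{\mu \mid \mu \models \Delta\} \subseteq \{\mu \mid \mu \models \Delta_i\}$ for every $i$. This is immediate from the definition of $\mu \models (D, {\cal FD})$ together with $D_i \subseteq D$ and ${\cal FD}_i \subseteq {\cal FD}$: any $\mu$ with $\mu(t') \ne \emptyset$ for all $t' \in D$ in particular satisfies this for all $t' \in D_i$, and any $\mu$ satisfying every dependency of ${\cal FD}$ satisfies every dependency of ${\cal FD}_i$. Both implications then follow because $\vdash$ and $\dashapprox$ are defined by universal quantification over satisfying mappings, and such universals are antitone in the class of mappings. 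For $(1)$, $\Delta_i \vdash t$ means $\mu(t) \ne \emptyset$ for all $\mu \models \Delta_i$; since every $\mu \models \Delta$ also satisfies $\mu \models \Delta_i$, we obtain $\mu(t) \ne \emptyset$ for all $\mu \models \Delta$, i.e. $\Delta \vdash t$. For $(2)$, $\Delta_i \dashapprox t$ yields distinct $a, a'$ in a common attribute domain with $\Delta_i \vdash (t \preceq a \sqcap a')$, i.e. $\mu(t) \subseteq \mu(a) \cap \mu(a')$ for all $\mu \models \Delta_i$; the same containment restricts to all $\mu \models \Delta$, giving $\Delta \vdash (t \preceq a \sqcap a')$ and hence $\Delta \dashapprox t$.

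I do not expect a genuinely hard step: once the bit-encoding of $\triangleleft$ and of $\overline{\bigoplus}$ is in place, the argument reduces to the antitonicity of entailment. The one point requiring care — and where intuition can mislead — is the direction of the inclusion: merging \emph{enlarges} the set of constraints and therefore \emph{shrinks} the class of satisfying mappings, which is exactly what forces both $\Delta \vdash t$ and $\Delta \dashapprox t$ to hold as soon as the corresponding property holds in any single source. Note finally that the argument treats every index $i$ uniformly, so it establishes the inequality for the full aggregate $\overline{\bigoplus}_{i=1}^{n}$ and not merely for the sources in which $t$ explicitly occurs.
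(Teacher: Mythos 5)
Your proof is correct, but it takes a genuinely different route from the paper's. The paper argues operationally: it introduces the intermediate databases $\Delta'_i=(D_i,{\cal FD})$ and proves $v_{\Delta_i}(t)\,\triangleleft\,v_{\Delta'_i}(t)\,\triangleleft\,v_\Delta(t)$ in two stages (first enlarging the dependency set, then the table), each stage comparing the chased tables and the induced tuple closures $[q^+]_i \subseteq [q^+]'_i \subseteq q^+$, and then concludes by transitivity of $\triangleleft$ and the asserted fact that $\overline{\oplus}$ is the lub for $\triangleleft$. You instead work purely semantically: since $D_i \subseteq D$ and ${\cal FD}_i \subseteq {\cal FD}$, every ${\cal T}$-mapping satisfying $\Delta$ satisfies $\Delta_i$, and because $\Delta \vdash t$ and $\Delta \dashapprox t$ are both universally quantified over satisfying mappings, both are monotone under merging; your bit-encoding of $({\sf Four},\triangleleft)$ as the product of two two-element chains turns the final step into a componentwise-disjunction check (which is easily verified against the $\oplus$ truth table) rather than an appeal to the bilattice structure. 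Your version is shorter, needs only the definitions of $\models$, $\vdash$ and $\dashapprox$ rather than Lemma~\ref{lemma:chase} and the closure machinery, and makes explicit the one fact the paper leaves implicit; the paper's version has the side benefit of exhibiting the concrete relationship between the chased source tables and the chased merged table, which it then reuses in the discussion of when equality holds. Both proofs reduce to the same pair of monotonicity implications, so the approaches are equivalent in substance but not in technique.
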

\begin{proof}
For every $i=1,\ldots ,n$, let $\Delta'_i=(D_i, {\cal FD})$. We first prove that for every tuple $t$, $v_{\Delta_i}(t)  ~\triangleleft~ v_{\Delta'_i}(t)$ holds. Indeed, for every $i=1,\ldots ,n$, let $D_i^*$, respectively $(D_i')^*$, the chased table of $D_i$ with respect to ${\cal FD}_i$, respectively ${\cal FD}$. Since ${\cal FD}_i \subseteq {\cal FD}$ holds, it is easy to see that for every $q_i$ in $(D_i')^*$ there exists $q$ in $D_i^*$ such that $q_i \sqsubseteq q$. Hence, for every $q$ in ${\cal T}$, $[q^+]_i \subseteq [q^+]'_i$, where $[q^+]_i$, respectively $[q^+]'_i$, denotes the closure of $q$ in $\Delta_i$, respectively $\Delta'_i$. Therefore, if $\Delta_i \vdash t$, respectively $\Delta_i \dashapprox t$, then $\Delta'_i \vdash t$, respectively $\Delta'_i \dashapprox t$, and so, for every $i=1, \ldots ,n$, $v_{\Delta_i}(t) \triangleleft v_{\Delta'_i} (t)$.

Considering $\Delta'_i$ ($i=1, \ldots ,n$) and $\Delta$, it can be seen that for every $i=1, \ldots ,n$ and every $q_i$ in $D_i'^*$ there exists $q$ in $D^*$ such that $q_i \sqsubseteq q$. Consequently, for every $i=1, \ldots ,n$, and every $q$ in ${\cal T}$, $[q^+]'_i \subseteq q^+$, where  $q^+$ denotes the closure of $q$ in $\Delta$. Therefore,  if for some $i$, $\Delta'_i \vdash t$, respectively $\Delta'_i \dashapprox t$, then $\Delta \vdash t$, respectively $\Delta \dashapprox t$, and so, for every $i=1, \ldots ,n$, $v_{\Delta'_i}(t) \triangleleft v_\Delta (t)$. The proposition follows from the transitivity of $\triangleleft$ and from the fact that $\overline{\oplus}$ defines the least upper bound (lub) with respect to $\triangleleft$, in the same way as $\oplus$ defines the lub with respect to $\preceq_k$.\hfill$\Box$
\end{proof}
In what follows, we identify cases where the equality $\overline{\bigoplus}\,_{i=1}^{i=n}\,v_{\Delta_i}(t)  = v_\Delta(t)$ holds and cases where it does not. To simplify, we assume that $n=2$. 

First, if for $i=1$ or $i=2$, $v_{\Delta_i}(t)={\tt inc}$, then the proposition implies that $v_\Delta(t)={\tt inc}$, because ${\tt inc}$ is maximal with respect to $\triangleleft$. In this case, the equality always holds. Another case where the equality holds is if $v_{\Delta_1}(t)={\tt true}$ and $v_{\Delta_2}(t)={\tt false}$. Indeed, in this case we have $\Delta \vdash t$ and $\Delta \dashapprox t$, showing that $v_\Delta(t)={\tt inc}$. Therefore, $v_\Delta(t)=v_{\Delta_1}(t)\,\overline{\oplus}\,v_{\Delta_2}(t)$.

To see cases where the equality $v_{\Delta_1}(t)\,\overline{\oplus}\,v_{\Delta_2}(t)  = v_\Delta(t)$ does not hold, let $U=\{A,B,C\}$, $\Delta_1=(\{abc\}, \emptyset)$ and $\Delta_2=(\{bc'\},\{B \to C\})$.

In this case, $\Delta =(D, {\cal FD})$ where $D= \{abc, bc'\}$ and ${\cal FD}=\{B \to C\}$. Hence, $D^*=\{abc,abc'\}$ and ${\sf Inc}(\Delta) =\{b, bc, bc', abc,abc'\}$, and so:
\begin{itemize}
\item $v_{\Delta_1}(b)=v_{\Delta_2}(b)={\tt true}$, whereas $v_\Delta(b)={\tt inc}$.
\item $v_{\Delta_1}(bc')= {\tt unkn}$, $v_{\Delta_2}(bc')={\tt true}$, thus implying that $v_1 \oplus v_2={\tt true}$, whereas $v_\Delta(bc')={\tt inc}$.
\end{itemize}
We further illustrate Proposition~\ref{prop:integration} in the the context of our introductory example.
\begin{example}\label{ex:truth-value-intro}
We recall that in our introductory example, we have two data sources $\Delta_1=(D_1, {\cal FD})$ and $\Delta_2=(D_2, {\cal FD})$, where ${\cal FD}=\{ID \to K, ID \to C\}$.

Based on $D_1$ and $D_2$ as shown in Figure~\ref{fig:ex-tables-intro} and displayed in Figure~\ref{fig:chase-tables-intro}, applying Algorithm~\ref{algo:chase} produces $D_1^*$ and $D_2^*$ also shown in Figure~\ref{fig:chase-tables-intro}, and returns ${\sf Inc}(\Delta_1)={\sf Inc}(\Delta_2)=\emptyset$.

\begin{figure}[ht]
\begin{center}
{\small
\begin{tabular}{c|llll}
$D_1$&$Id$&$K$&$M$&$C$\\
\hline
&$i_1$&$k$&$m$&$c$\\
&$i_1$&&$m'$&\\
&$i_2$&$k'$&$m'$&$c$\\
&$i_2$&$k'$&$m''$&\\
&$i_3$&&$m$&\\
\end{tabular}
\qquad\qquad
\begin{tabular}{c|llll}
$D_2$&$Id$&$K$&$M$&$C$\\
\hline
&$i_1$&$k$&&$c$\\
&$i_2$&$k'$&&$c'$\\
&$i_2$&$k'$&$m''$&\\
&$i_3$&$k'$&\\
\end{tabular}
~
\\~\\~\\~\\
\begin{tabular}{c|llll}
$D_1^*$&$Id$&$K$&$M$&$C$\\
\hline
&$i_1$&$k$&$m$&$c$\\
&$i_1$&$k$&$m'$&$c$\\
&$i_2$&$k'$&$m'$&$c$\\
&$i_2$&$k'$&$m''$&$c$\\
&$i_3$&&$m$&\\
\end{tabular}
\qquad\qquad
\begin{tabular}{c|llll}
$D_2^*$&$Id$&$K$&$M$&$C$\\
\hline
&$i_1$&$k$&&$c$\\
&$i_2$&$k'$&$m''$&$c'$\\
&$i_3$&$k'$&\\
\end{tabular}
}
\end{center}
\caption{The source tables of our introductory example and their chased versions}
\label{fig:chase-tables-intro}
\end{figure}

Hence, as already mentioned, $\Delta_1$ and $\Delta_2$ are consistent. Referring to Example~\ref{ex:truth-value} and Figure~\ref{fig:ex-table-chased}, applying Proposition~\ref{prop:integration} entails the following:
\begin{itemize}
    \item $v_{\Delta_1}(i_1,k,m,c)={\tt true}$, $v_{\Delta_2}(i_1,k,m,c)={\tt unkn}$ and $v_{\Delta}(i_1,k,m,c)={\tt true}$.\\
    $v_{\Delta_1}(i_1,k,m',c)={\tt true}$, $v_{\Delta_2}(i_1,k,m',c)={\tt unkn}$ and $v_{\Delta}(i_1,k,m',c)={\tt true}$.\\
    These are cases of equality because ${\tt true}\, \overline{\oplus}\, {\tt unkn}={\tt true}$.
    \item $v_{\Delta_1}(i_2, c)={\tt true}$, $v_{\Delta_2}(i_2,c) ={\tt false}$ and $v_{\Delta}(i_2,c) ={\tt inc}$.\\
    This is another case of equality because ${\tt true}\, \overline{\oplus}\, {\tt false}={\tt inc}$.
    \item $v_{\Delta_1}(i_2)={\tt true}$, 
    $v_{\Delta_2}(i_2)={\tt true}$ and $v_{\Delta}(i_2)={\tt inc}$.\\
    This is a case where equality does not hold because ${\tt true}\, \overline{\oplus}\, {\tt true}\ne {\tt inc}$. Notice however that ${\tt true}\,\triangleleft\,{\tt inc}$ holds.\hfill$\Box$
\end{itemize}
\end{example}
\section{Consistent Query Answering}\label{sec:query}
In this section, considering true tuples and false tuples only ({\em i.e.,} forgetting about false tuples), we address the important problem of {\em consistent query answering}. We first provide a brief review  of the abundant related literature, and then, we show that our approach provides new insights in the problem of consistent query answering. Moreover, we also argue that in our approach, the `quality' of such consistent answers can be assessed, based on the notion of tuple truth value. However, this issue lies out of the scope of the present paper, and will be the subject of further research in the next future.
\subsection{Related Work}
The problem of query answering in presence of inconsistencies has motivated important research efforts during the past two decades and is still the subject of current research. As mentioned in the introductory section, the most popular approaches in the literature are based on the notion of `repair', a repair of ${\cal D}$ being intuitively {\em a consistent database ${\cal R}$ `as close as possible' to ${\cal D}$}; and an answer to a query $Q$ is consistent if it is present in {\em every repair} ${\cal R}$ of ${\cal D}$. 

However, it has been recognized that generating {\em all} repairs is difficult to implement - if not unfeasible. This is a well known problem in practice which explains, for instance, why data cleansing is a very important but tedious task in the management of databases and data warehouses \cite{RahmD00}. This issue has been thoroughly investigated in \cite{LivshitsKR20}, where it has been shown that computing repairs of a given relational table in the presence of functional dependencies is either polynomial or APX-complete{\footnote{Roughly, APX is the set of NP optimization problems that allow polynomial-time approximation algorithms (source: Wikipedia).}}, depending on the form of the functional dependencies. The reader is referred to \cite{AfratiK09} for theoretical results on the complexity of testing whether ${\cal R}$ is a repair of ${\cal D}$, when considering a more generic context than we do in this work (more than one table and constraints other than functional dependencies). A Prolog based approach for the generation of repairs can be found in \cite{ArieliDNB06}. 

Dealing with repairs without generating them is thus an important issue, also known as {\em Consistent Query Answering in Inconsistent Databases}. One of the first works in this area is \cite{Bry97} and the problem has since been addressed in the context of various database models (mainly the relational model or deductive database models) and under various types of constraints (first order constraints, key constraints, key foreign-key constraints). Seminal papers in this area are \cite{ArenasBC99} and \cite {Wijsen09}, while an overview of works in this area can be found in \cite{Bertossi2011}. 

The problem considered in all these works can be stated as follows: Given a database ${\cal D}$ with integrity constraints ${\cal IC}$, assume that ${\cal D}$ is inconsistent with respect to ${\cal IC}$. Under this assumption, given a query $Q$ against ${\cal D}$, what is the {\em consistent answer} to $Q$? 
The usual approach to alleviate the impact of inconsistent data on the answers to a query is to consider that an answer to $Q$ is consistent if it is present in {\em every repair} ${\cal R}$ of ${\cal D}$. 

Complexity results regarding the computation of the consistent answer have been widely studied in \cite{CaliLR03}. For example one important case is when ${\cal IC}$ consists in having one key constraint per database relation and $Q$ is a conjunctive query containing no self-join ({\em i.e.,} no join of a relation with itself). In this case computing the consistent answer is polynomial whereas if self-joins occur then the problem is co-NP-complete.

Another important problem in considering repairs is that there are many ways of defining the notion of repair. This is so because there are many ways of defining a distance between two database instances, and there is no consensus as to the `best' definition of distance. Although the distance based on symmetric difference seems to be the most popular, other distances exist as well based for example on sub-sets, on cardinality, on updates or on homomorphism \cite{Wijsen05}. Notice in this respect that the results in \cite{LivshitsKR20} are set for two distances: one based on sub-sets and one based on updates. 

\subsection{Consistent Query Answering in our Approach}
In our work we do {\em not} use any notion of repair, thus we avoid the above problem of choosing among all possible ways of defining repairs. Instead, we use set theoretic semantics for tuples and functional dependencies that allow us to associate each tuple with one truth value among true, false, inconsistent or unknown.

In what follows, we outline the process of consistent query answering in our approach, and then compare it to the approaches based on repairs. In doing so we follow the intuition of the repairs-approach where an answer to a query is consistent if it is present in every repair; and we transpose it in our approach by considering that a tuple is in the consistent answer to the query if its truth value is {\tt true} in the sense of our model. 

\smallskip
As usual when dealing with a single table with nulls, a query $Q$ is an SQL-like expression of one of the following two forms:

\smallskip\centerline{
$Q:$ {\sf SELECT} $X$ \qquad or \qquad $Q:$ {\sf SELECT} $X$ {\sf WHERE $\Gamma$}}

\smallskip\noindent
In either of these forms, $X$ is an attribute list seen as a relation schema, and in the second form, the {\sf WHERE} clause specifies a selection condition $\Gamma$. It should thus be clear that, as in SQL, the where clause in a query is optional. The generic form of a query $Q$ is denoted by $Q: {\sf SELECT}~X~{\sf [WHERE~}\Gamma{\sf ]}$.

A selection condition $\Gamma$ is a well formed formula involving the usual connectors $\neg$, $\vee$ and $\wedge$ and built up from atomic boolean comparisons of one of the forms $A \,\theta\, a$ or $A \,\theta\, A'$, where $\theta$ is a comparison predicate, $A$ and $A'$ are attributes in $U$ whose domain elements are comparable through $\theta$, and $a$ is in $dom(A)$.

Moreover, a tuple $t$ satisfies $A\,\theta\, a$ if $A$ is in $sch(t)$ and if $t.A \,\theta\, a$ holds, and $t$ satisfies $A \,\theta\, A'$ if $A$ and $A'$ are  in $sch(t)$ and if $t.A \,\theta\, t.A'$ holds. Based on this, determining whether $t$ satisfies $\Gamma$ follows the rules usual in First Order Logic regarding connectors. For instance, referring to our introductory example, the tuple $t=(k,m)$ such that $sch(t)=KM$ satisfies the conditions $(K=k)$ and $(M=m \vee C=c')$ but does not satisfy the condition $(M=K)$, assuming that $m$ and $k$ are comparable but distinct constants. 

\smallskip
Given  $\Delta = (D, {\cal FD})$, the {\em answer to $Q$ in $\Delta$} is the set of the restrictions to $X$ of all tuples $t$ in $D^*$ such that $X \subseteq sch(t)$ and such that $t$ satisfies $\Gamma$, when present in $Q$. It follows that answers to queries contain only tuples without nulls.

Now, roughly speaking, the {\em consistent answer} to $Q$ is the set of all {\em true} tuples defined over $X$ that satisfy the condition in $Q$. However, as the following example shows, this rough definition should be carefully stated in particular with regard to the functional dependencies to be taken into account for tuple truth value.
\begin{example}\label{ex:new-ex-query}
In the context of our introductory example, let $\Delta =(D, {\cal FD})$ where  ${\cal FD}=\{Id \to K, Id \to C\}$ and  where $D$ is displayed in Figure~\ref{fig:ex-tables-intro}. As seen in Example~\ref{ex:runex-chase}, Algorithm~\ref{algo:chase} returns   $D^*$ as  shown below and $inc({\cal FD})=\{inc(Id \to K), inc(Id \to C)\}$ where $inc(Id \to K)=\emptyset$ and $inc(Id \to C)=\{i_2\}$.

\begin{center}
{\small
\begin{tabular}{c|llll}
$D^*$&$Id$&$K$&$M$&$C$\\
\hline
&$i_1$&$k$&$m$&$c$\\
&$i_1$&$k$&$m'$&$c$\\
&$i_2$&$k'$&$m'$&$c$\\
&$i_2$&$k'$&$m''$&$c$\\
&$i_2$&$k'$&$m'$&$c'$\\
&$i_2$&$k'$&$m''$&$c'$\\
&$i_3$&$k'$&$m$&\\
\end{tabular}
}
\end{center}
We also recall from Example~\ref{ex:truth-value} that ${\sf Inc}(\Delta)$ is defined by:

\smallskip\noindent
\begin{tabular}{rl}
${\sf Inc}(\Delta)=$&$\{t~|~i_2 \sqsubseteq t \sqsubseteq (i_2,k',m',c)\} \cup \{t~|~i_2 \sqsubseteq t \sqsubseteq (i_2,k',m'',c)\} \, \cup$\\
&$\{t~|~i_2 \sqsubseteq t \sqsubseteq (i_2,k',m',c')\} \cup \{t~|~i_2 \sqsubseteq t \sqsubseteq (i_2,k',m'',c')\}$\\
\end{tabular}

\smallskip\noindent
Let $Q_1$ and $Q_2$ be two queries (without conditions) as defined below:

\smallskip\centerline{
\begin{tabular}{lll}
$Q_1 : {\sf SELECT}$ $Id,K,C$
&and&
$Q_2 : {\sf SELECT}$ $Id,K,M$
\\
\end{tabular}}

\smallskip\noindent
Projecting the tuples in $D^*$ over  the attributes $Id$, $K$, $C$ for $Q_1$ and over $Id$, $K$, $M$ for $Q_2$ produces the tables $\Pi_1$ and $\Pi_2$ shown below.

\begin{center}
{\small
\begin{tabular}{c|lll}
$\Pi_1$&$Id$&$K$&$C$\\
\hline
&$i_1$&$k$&$c$\\
&$i_2$&$k'$&$c$\\
&$i_2$&$k'$&$c'$\\
\end{tabular}
\qquad\qquad
\begin{tabular}{c|lll}
$\Pi_2$&$Id$&$K$&$M$\\
\hline
&$i_1$&$k$&$m$\\
&$i_1$&$k$&$m'$\\
&$i_2$&$k'$&$m'$\\
&$i_2$&$k'$&$m''$\\
&$i_3$&$k'$&$m$\\
\end{tabular}
}
\end{center}

\noindent
Since in these two tables, the tuples whose $Id$-value is $i_2$, are inconsistent in $\Delta$, it seems justified to exclude them from any consistent answer. In other words, according to this intuition, the expected consistent answers to $Q_1$ and $Q_2$ are respectively $\{(i_1,k,c)\}$ and $\{(i_1,k,m),(i_1,k,m'), (i_3, k',m)\}$. 

We explain below why it makes sense to exclude the two tuples in the case of $Q_1$, whereas the removal in the case of $Q_2$ is debatable.
\begin{enumerate}
\item
Regarding $Q_1$, the tuples $(i_2,k',c)$ and $(i_2, k', c')$  in $\Pi_1$ clearly violate $Id \to C$ from ${\cal FD}$, and thus can not occur in the {\em consistent} answer to $Q_1$.
\item
Regarding $Q_2$ however, no functional dependency is violated by the tuples in $\Pi_2$, and thus, there is no reason for removing any of them when producing the {\em consistent} answer to $Q_2$.
\end{enumerate}
Another way of explaining this situation is to notice that, in $D^*$, the only non satisfied functional dependency is $Id \to C$ and that
\begin{enumerate}
\item
attributes $Id$ and $C$ occur in the {\sf SELECT} clause of $Q_1$, making it necessary to check functional dependency satisfaction;
\item
attribute $C$ does not occur in the {\sf SELECT} clause of $Q_2$, implying that checking functional dependency satisfaction makes no sense.
\end{enumerate}
Another important point to take into account is the impact of selection conditions on tuple truth value in the answer to a query. To illustrate this point, first notice that, when considering the query $Q_1$ the only functional dependency to be checked is  $Id \to C$, with respect to which the table $\Pi_1$ shows inconsistencies regarding $i_2$. However, let now $Q'_1$ be the query defined by:

\smallskip\centerline{
$Q'_1 : {\sf SELECT}$ $Id,K,C$ {\sf WHERE} $C=c'$}

\smallskip\noindent
Only the fifth and sixth tuples in $D^*$ satisfy the selection condition and thus, the only possible tuple in the consistent answer to $Q'_1$ is $(i_2, k', c')$, which alone, trivially satisfies  the functional dependency $Id \to C$.

However, the consistency of the answer to $Q'_1$ may seem counter-intuitive, since the tuple $(i_2, k', c')$ is seen as {\em inconsistent} in the answer to $Q_1$, where the same attributes are involved. To cope with this counter-intuitive situation, we rather consider that the consistent answer of $Q'_1$ is {\em empty}, {\em i.e.,} that consistency has to be checked independently from selection conditions, based only on the functional dependencies involving only attributes from the {\sf SELECT} clause in the query.

In what follows, we provide the formalism and the definitions to account for these remarks.\hfill$\Box$
\end{example}
Given a table $D$ over $U$, a subset $X$ of $U$ and a selection condition $\Gamma$, we denote by $\sigma_{\Gamma}(D)$, $\pi_X(D)$ and $\pi_X({\cal FD})$ the following sets:
\begin{itemize}
\item
$\sigma_{\Gamma}(D)$ is the set of all tuples $t$ in $D$ such that $t$ satisfies $\Gamma$.
\item
$\pi_X(D)$ is the set of the restrictions to $X$ of all tuples in $D$ whose schema contains $X$; that is $\pi_X(D) = \{t~|~(\exists q \in D)(X \subseteq sch(q),~t = q.X)\}$.
\item
$\pi_X({\cal FD})$ is the set of all functional dependencies that involve attributes in $X$ only; that is $\pi_X({\cal FD}) = \{(Y\to B) \in {\cal FD}~|~YB \subseteq X\}$.
\end{itemize}
These notation are used in the following definition where the notion of {\em consistent answer to a query} is introduced.
\begin{definition}\label{def:answer}
Given $\Delta=(D, {\cal FD})$ and $Q: {\sf SELECT}$ $X$ {\sf [WHERE $\Gamma$]}, let $\Delta_X$ be defined by  $\Delta_X=(\pi_X(D^*), \pi_X({\cal FD}))$.

The {\em answer to $Q$ in $\Delta$}, denoted by $ans_\Delta(Q)$, is the set  $\pi_X(\sigma_\Gamma(D^*))$. Moreover, for every tuple $x$ in $ans_\Delta(Q)$, the {\em truth value of $x$ in $ans_\Delta(Q)$} is defined by $v_{\Delta_X}(x)$.

The {\em consistent answer} to $Q$ in $\Delta$, denoted by $ans^+_\Delta(Q)$, is the set of all tuples $x$ in $ans_\Delta(Q)$ such that $v_{\Delta_X}(x)={\tt true}$.
\end{definition}
It is important to notice that, according to Definition~\ref{def:answer}, given $\Delta$ and $Q$, {\em two distinct} truth values may be given to a tuple $t$, namely, its truth value in $\Delta$, {\em i.e.,} $v _\Delta(t)$, and its truth value in $\Delta_X$, {\em i.e.,} $v_{\Delta_X}(t)$. Since these truth values are not determined using the {\em same} set of functional dependencies, they might be distinct.

Referring to Example~\ref{ex:new-ex-query}, based on the notation introduced in Definition~\ref{def:answer}, 
for $X_1= Id\,K\,C$, $\pi_{X_1}({\cal FD})={\cal FD}$, and so, $\Delta_{X_1} =(\Pi_1, {\cal FD})$. In this case, for every tuple $x$ over $X_1$, $v_{\Delta_{X_1}}(x)=v_\Delta(x)$.
On the other hand, for $X_2=Id\,K\,M$, $\pi_{X_2}({\cal FD})=\{Id \to K\}$, and so, $\Delta_{X_2} =(\Pi_2, \{Id \to K\})$. Since $\Pi_2$ satisfies $Id \to K$, for $x=(i_2,k',m')$, $v_{\Delta_{X_2}}(x)={\tt true}$. However, as $x$ is a super-tuple of $i_2$, we have $v_\Delta(x)={\tt inc}$, showing that $v_{\Delta_{X_2}}(x) \ne v_\Delta(x)$.

{\small
\algsetup{indent=1.5em}
\begin{algorithm}[t]
\caption{Consistent answer $ans^+_\Delta(Q)$\label{algo:answer}}
\begin{algorithmic}[1]
\REQUIRE
A query $Q: {\sf SELECT}$ $X$ {\sf [WHERE $\Gamma$]}, $\Delta^* = (D^*, {\cal FD})$ and  $inc({\cal FD})$

\ENSURE The set $ans^+_\Delta(Q)$
\STATE{$ans^+_\Delta(Q):= \emptyset$}
\FORALL{$t$ in $D^*$}
    \IF{$sch(t)$ contains all attributes in $X$}
        \IF{for every $Y \to B$ in $\pi_X({\cal FD})$, $t.Y$ is not in $inc(Y \to B)$}\label{line:test0}
            \IF{$t$ satisfies $\Gamma$}\label{line:test-Gamma}
		        \STATE{\COMMENT{This test always succeeds if $Q$ involves no selection condition}}
        		\STATE{$ans^+_\Delta(Q) := ans^+_\Delta(Q) \cup \{t.X\}$}
		    \ENDIF
	    \ENDIF
    \ENDIF
\ENDFOR
\RETURN{$ans^+_\Delta(Q)$}
\end{algorithmic}
\end{algorithm}
}

\smallskip
The following proposition shows that $ans^+_\Delta(Q)$ is computed from $D^*$ and $inc({\cal FD})$, using  Algorithm~\ref{algo:answer}.

\begin{proposition}\label{prop:cons-ans}
Given $\Delta=(D, {\cal FD})$ and~$Q: {\sf SELECT}$ $X$ {\sf [WHERE $\Gamma$]}, Algorithm~\ref{algo:answer} correctly computes $ans^+_\Delta(Q)$.
\end{proposition}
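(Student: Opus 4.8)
The plan is to prove the two inclusions by matching each test of Algorithm~\ref{algo:answer} against Definition~\ref{def:answer}. First I would observe that the outer test ``$sch(t)$ contains all attributes in $X$'' together with the innermost test on line~\ref{line:test-Gamma} (``$t$ satisfies $\Gamma$'') select exactly those $t \in D^*$ whose restriction $t.X$ contributes to $ans_\Delta(Q) = \pi_X(\sigma_\Gamma(D^*))$: a tuple $x$ lies in $ans_\Delta(Q)$ iff some $t \in D^*$ with $X \subseteq sch(t)$ satisfies $\Gamma$ and has $t.X = x$. Hence the whole burden reduces to showing that the middle test on line~\ref{line:test0} is equivalent to $v_{\Delta_X}(t.X) = {\tt true}$, for $\Delta_X=(\pi_X(D^*), \pi_X({\cal FD}))$ as in Definition~\ref{def:answer}. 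A useful preliminary remark is that this middle test depends only on $x = t.X$ and not on the particular $t$: since each $Y \to B$ in $\pi_X({\cal FD})$ has $Y \subseteq X$, we have $t.Y = x.Y$, so all preimages $t$ of a given $x$ pass or fail the test together. This is what lets the existential quantification over $t$ in the algorithm coincide with a well-defined property of $x$.

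Next I would characterize $v_{\Delta_X}(x) = {\tt true}$. Because $x = t.X$ belongs to $\pi_X(D^*)$, it is a tuple of the table of $\Delta_X$, so $\Delta_X \vdash x$ holds and $x$ is potentially true; therefore $v_{\Delta_X}(x) \in \{{\tt true}, {\tt inc}\}$ and, by Definition~\ref{def:incons-tuple}, $v_{\Delta_X}(x) = {\tt true}$ iff $x \notin {\sf Inc}(\Delta_X)$. I would then invoke Lemma~\ref{lemma:incons} applied to $\Delta_X$ and trace through Algorithm~\ref{algo:incons}: since $x$ has the full schema $X$, the only chased tuple admitting $x$ as a sub-tuple is $x$ itself, and taking $q = x$ makes the side condition $Y \subseteq Q^+$ (here $Y \subseteq X \subseteq X^+$) trivially true. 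Consequently $x \in {\sf Inc}(\Delta_X)$ holds iff there exists $Y \to B$ in $\pi_X({\cal FD})$ with $x.Y \in inc_{\Delta_X}(Y \to B)$, where $inc_{\Delta_X}$ denotes the conflict sets produced by running Algorithm~\ref{algo:chase} on $\Delta_X$.

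The hard part, and the step I expect to be the main obstacle, is to justify that these sets coincide with the ones the algorithm actually has in hand, namely $inc_{\Delta_X}(Y \to B) = inc(Y \to B)$ for every $Y \to B \in \pi_X({\cal FD})$, where $inc(Y \to B)$ comes from chasing the \emph{original} $\Delta$. This amounts to showing that projection onto $X$ commutes suitably with the chase restricted to $\pi_X({\cal FD})$. The inclusion $inc_{\Delta_X} \subseteq inc$ is the easy direction, since any conflict detected when chasing $\pi_X(D^*)$ lifts to a conflict in $D^*$. For the reverse inclusion I would argue that, because both sides of every $Y \to B \in \pi_X({\cal FD})$ lie inside $X$, any witness of such a conflict in $D^*$ consists of two tuples agreeing on $Y$ and disagreeing on $B$ whose values survive the projection $\pi_X$ intact; thus the conflict is already visible directly in $\pi_X(D^*)$ and is detected in the first pass of the chase of $\Delta_X$. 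Dually, the smaller dependency set $\pi_X({\cal FD})$ cannot manufacture a new conflict. Making this precise is cleanest via Lemma~\ref{lemma:chase}: I would show that $\pi_X(D^*)$ is already closed, at the level of lower closures, under the chase rules for $\pi_X({\cal FD})$, so that chasing $\Delta_X$ adds nothing beyond ${\sf LoCl}(\pi_X(D^*))$ and records exactly the projected conflicts.

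Combining the three steps finishes the argument: for any $t \in D^*$ with $X \subseteq sch(t)$ the test on line~\ref{line:test0} reads ``$x.Y \notin inc(Y \to B)$ for every $Y \to B \in \pi_X({\cal FD})$'', which by the above is equivalent to $x \notin {\sf Inc}(\Delta_X)$, i.e. to $v_{\Delta_X}(x) = {\tt true}$. Hence the algorithm adds $x = t.X$ precisely when $x \in ans_\Delta(Q)$ (some admissible $t$ satisfies the schema and $\Gamma$ conditions) and $v_{\Delta_X}(x) = {\tt true}$, which is exactly $ans^+_\Delta(Q)$ by Definition~\ref{def:answer}. This proves Proposition~\ref{prop:cons-ans}.
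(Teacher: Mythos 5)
Your overall plan is the same as the paper's: prove both inclusions by matching the three tests of Algorithm~\ref{algo:answer} against Definition~\ref{def:answer}. The difference is one of depth. The paper's own proof simply asserts that the test on line~\ref{line:test0} (``$t.Y$ is not in $inc(Y\to B)$ for every $Y\to B$ in $\pi_X({\cal FD})$'') is equivalent to $v_{\Delta_X}(t.X)={\tt true}$, without ever relating the conflict sets produced by chasing $\Delta$ to those produced by chasing $\Delta_X=(\pi_X(D^*),\pi_X({\cal FD}))$. You correctly identify this as the real content of the proposition and reduce it, via Lemma~\ref{lemma:incons} and the schema argument showing that for a tuple $x$ of full schema $X$ one must take $q=t'=x$ in Algorithm~\ref{algo:incons}, to a comparison of $inc(Y\to B)$ with $inc_{\Delta_X}(Y\to B)$. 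That reduction is sound, and your remark that the test on line~\ref{line:test0} depends only on $x=t.X$ is a useful point the paper leaves implicit.

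One caveat: the intermediate claim you propose, the unrestricted equality $inc_{\Delta_X}(Y\to B)=inc(Y\to B)$, is too strong, and the justification you sketch for the reverse inclusion does not work as literally stated. If $y\in inc(Y\to B)$, the witnessing pair of tuples in $D^*$ agreeing on $Y$ and disagreeing on $B$ need not have schemas containing $X$, so their values do not automatically ``survive the projection $\pi_X$''; a $Y$-value that never extends to an $X$-complete tuple of $D^*$ can lie in $inc(Y\to B)$ while not even being derivable in $\Delta_X$. What you actually need, and what is true, is the restricted statement: for $t\in D^*$ with $X\subseteq sch(t)$ and $y=t.Y$, if $y\in inc(Y\to B)$ then $y\in inc_{\Delta_X}(Y\to B)$. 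This holds because $y\in inc(Y\to B)$ yields some $b'\ne t.B$ with $\Delta\vdash yb'$, and the completion steps of Algorithm~\ref{algo:chase} (lines~\ref{line:add-plus} and~\ref{line:add-plus-bis}) then place into $D^*$ a tuple with schema $sch(t)$, $Y$-value $y$ and $B$-value $b'$; both conflicting $B$-values therefore appear in $\pi_X(D^*)$ and the conflict is recorded when chasing $\Delta_X$. With the claim weakened in this way (your easy inclusion $inc_{\Delta_X}(Y\to B)\subseteq inc(Y\to B)$ is fine as is), the argument goes through and in fact supplies precisely the justification the paper's proof omits.
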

\begin{proof}
In this proof, denoting by $ans$ the output of Algorithm~\ref{algo:answer}, we prove that $ans=ans^+_\Delta(Q)$. To prove that $ans \subseteq ans^+_\Delta(Q)$, we notice that, by Algorithm~\ref{algo:answer}, every tuple $x$ in $ans$ $x$ is the projection over $X$ of a tuple $t$ in $D^*$ satisfying $\Gamma$. Thus, $x$ belongs to $\pi_X(\sigma_\Gamma(D^*))$, that is to $ans_\Delta(Q)$. Moreover, since for every $t$ in $D^*$ such that $t.X$ is in $ans$ and every $Y \to B$ in $\pi_X({\cal FD})$, $t.Y$ is not in $inc(Y \to B)$, it holds that $v_{\Delta_X}(x)={\tt true}$. It thus follows that $x$ is in $ans^+_\Delta(Q)$.

Conversely, assuming that $x$ is in $ans^+_\Delta(Q)$ implies that $x$ is in $ans_\Delta(Q)$. Hence, $D^*$ contains a tuple $t$ that satisfies $\Gamma$ and $t.X=x$, meaning that $sch(t)$ contains $X$ and that  $t$ satisfies the if-condition on line~\ref{line:test-Gamma} in Algorithm~\ref{algo:answer}. Moreover, since we also have $v_{\Delta_X}(x)={\tt true}$, for every $Y \to B$ in $\pi_X({\cal FD})$, $t.Y$ cannot be in $inc(Y \to B)$. This shows that the if-condition on line~\ref{line:test0} in Algorithm~\ref{algo:answer} is satisfied, and thus that $x$ belongs to $ans$, which completes the proof.\hfill$\Box$
\end{proof}
Regarding complexity, Proposition~\ref{prop:cons-ans} shows that, assuming that $D^*$ has been computed, the computation of the consistent answer is {\em linear} in the size of $D^*$.

If we assume moreover that ${\sf Inc}(\Delta)$ has also been computed, labelling each tuple in $ans^+_\Delta(Q)$ by its truth value in $\Delta$ is an option to investigate, because it has been seen from Definition~\ref{def:answer} that the truth value of a tuple $t$ in $\Delta$, {\em i.e.,} $v_\Delta(t)$, may be different than the truth value of $t$ in $ans_\Delta(Q)$, {\em i.e.,} $v_{\Delta_X}(t)$.

Knowing that a tuple in the consistent answer, thus having truth value {\tt true} in this answer, has truth value {\tt inc} in the database it comes from, may indeed be relevant in case the user is interested in data quality, as is the case when dealing with data lakes \cite{MEDES}.  Investigating further issues related to query answering in our approach, including issues related to data quality is the subject of future work.
\begin{example}\label{ex:new-ex-query-bis}
Running Algorithm~\ref{algo:answer} with the queries $Q_1$, $Q'_1$ and $Q_2$ as in Example~\ref{ex:new-ex-query}, returns $ans^+_\Delta(Q_1)=\{(i_1,k,c)\}$, $ans^+_\Delta(Q'_1)=\emptyset$ and $ans^+_\Delta(Q_2)=\{(i_1,k,m),$ $(i_1,k,m'),$ $(i_2,k',m'),$ $(i_2,k',m''),$ $(i_3, k',m)\}$, as expected.

As earlier noticed regarding $ans^+_\Delta(Q_2)$, for $x=(i_2,k',m')$ or $x=(i_2,k',m'')$, we have $v_\Delta(x) \ne v_{\Delta_{X_2}}(x)$. In this case, smart users could find it relevant to be informed of this situation, which can be done by labelling the two tuples $(i_2,k',m')$ and $(i_2,k',m'')$ by ${\tt inc}$, that is, their truth value in $\Delta$. We notice that this piece of information cannot be provided by any of the existing approaches.

\smallskip
Considering now the query $Q_3 : {\sf SELECT}$ $M, C$ {\sf WHERE} $K=k'$, Algorithm~\ref{algo:answer} discards the first two tuples of $D^*$ (because their $K$-value is not equal to $k'$), and also the last tuple of $D^*$ (as this tuple has no $C$-value). When processing the remaining four tuples in $D^*$, no functional dependency has to be taken care of, and so, we obtain $ans^+_\Delta(Q_3)=\{(m', c),$ $(m', c'),$ $(m'',c),$ $(m'',c')\}$.
\hfill$\Box$
\end{example}
\subsection{Comparison with Repair-Based Approaches} 
Comparing our approach with approaches to consistent query answering from the literature, we point out that when constraints are functional dependencies only, as  in our approach, repairs are defined using set-theoretic inclusion as follows.
\begin{definition}\label{def:repair}
Given $\Delta =(D, {\cal FD})$, denoting by $D^*$ the chased table associated with $D$, a {\em repair} of $\Delta$ is a table $R$ over $U$ such that:
(1) $R \subseteq D^*$,
(2) $R$ satisfies ${\cal FD}$, and
(3) $R$ is maximal among the sets satisfying $(1)$ and $(2)$.
\end{definition}

\noindent
We notice that in the above definition, inclusion is understood in its strict set-theoretic meaning, disregarding the presence of nulls in the tuples. For example $\{ab, a'bc\} \subseteq \{abc, a'bc\}$ does not hold whereas $\{ab, a'bc\} \subseteq \{ab, abc, a'bc\}$ does.

Repairs of $\Delta$ can be generated based on the tuples stored in $inc({\cal FD})$ according to the following algorithm:

\smallskip
{\small
$R := D^*$

{\bf for all} $X \to A$ in ${\cal FD}$ {\bf do}

\indent\indent
{\bf for all} $x$ in $inc(X \to A)$ {\bf do}

\hspace{1cm}
choose an $A$-value $a$ among all $\alpha$ such that  $x \alpha$ occurs in $D^*$

\hspace{1cm}
$R := R \setminus \{q~|~XA \subseteq sch(q), q.X=x, q.A \ne a\}$

{\bf return} $R$
}

\smallskip\noindent
Indeed, based on Definition~\ref{def:repair},  $R$ as computed above is a repair because: (1) $R \subseteq D^*$ clearly holds, (2) $R$ satisfies ${\cal FD}$ holds since for every  $X \to A$ in ${\cal FD}$, there exist $q$ and $q'$ in $R$ such that $q.X=q'.X$ and $q.A \ne q'.A$, and (3) $R$ is maximal because inserting any of the removed tuples leads to violation of a functional dependency.

\smallskip
Given a query $Q: {\sf SELECT}$ $X$ {\sf [WHERE $\Gamma$]}, denoting by $Rep(\Delta)$ the set of all repairs of $\Delta$, the {\em consistent answer to $Q$ based on repairs} can be formally defined in the following two ways:
\begin{enumerate}
\item
$ans^\downarrow_\Delta(Q)= \pi_X\left(\bigcap_{R \in Rep(\Delta)} \sigma_\Gamma (R)\right)$.
\item
$ans^\uparrow_\Delta(Q)= \bigcap_{R \in Rep(\Delta)} \pi_X(\sigma_\Gamma (R))$.
\end{enumerate}
Intuitively, $ans^\downarrow_\Delta(Q)$ is obtained by evaluating the query against the intersection of all repairs, whereas $ans^\uparrow_\Delta(Q)$ is obtained by evaluating the query against each repair and by taking the intersection of all these answers.

\begin{example}\label{ex:new-ex-query-rep}
Computing the repairs of $D^*$ as shown in Example~\ref{ex:new-ex-query} produces the tables $R_1$ and $R_2$ shown below.

\begin{center}
{\small
\begin{tabular}{c|llll}
$R_1$&$Id$&$K$&$M$&$C$\\
\hline
&$i_1$&$k$&$m$&$c$\\
&$i_1$&$k$&$m'$&$c$\\
&$i_2$&$k'$&$m'$&$c$\\
&$i_2$&$k'$&$m''$&$c$\\
&$i_3$&$k'$&$m$&\\
\end{tabular}
\qquad\qquad
\begin{tabular}{c|llll}
$R_2$&$Id$&$K$&$M$&$C$\\
\hline
&$i_1$&$k$&$m$&$c$\\
&$i_1$&$k$&$m'$&$c$\\
&$i_2$&$k'$&$m'$&$c'$\\
&$i_2$&$k'$&$m''$&$c'$\\
&$i_3$&$k'$&$m$&\\
\end{tabular}
}
\end{center}
Thus, regarding the queries $Q_1$, $Q'_1$, $Q_2$ and $Q_3$ of Example~\ref{ex:new-ex-query}, we have:
\begin{itemize}
\item $ans^\downarrow_\Delta(Q_1) = \{(i_1,k,c)\}$~;~$ans^\uparrow_\Delta(Q_1) = \{(i_1,k,c)\}$
\item $ans^\downarrow_\Delta(Q'_1) = \emptyset$~;~$ans^\uparrow_\Delta(Q'_1) = \emptyset$
\item $ans^\downarrow_\Delta(Q_2) =\{(i_1,k,m), (i_1, k, m'), (i_3,k',m)\}$~;\\
$ans^\uparrow_\Delta(Q_2) =\{(i_1,k,m), (i_1, k, m'), (i_2,k',m'), (i_2,k',m''), (i_3,k',m)\}$
\item $ans^\downarrow_\Delta(Q_3) = \emptyset$~;~$ans^\uparrow_\Delta(Q_3) = \emptyset$
\end{itemize}
It should be noticed that computing all repairs before computing the answers is not realistic in practice. In what follows, we provide an efficient algorithm to compute these answers and we prove that they are always `smaller' with respect to set theoretic inclusion than the answers as defined in Definition~\ref{def:answer}.
\hfill$\Box$
\end{example}

{\small
\algsetup{indent=1.5em}
\begin{algorithm}[t]
\caption{Repair-based consistent answers  $ans^\downarrow_\Delta(Q)$, $ans^\uparrow_\Delta(Q)$\label{algo:answer-repair}}
\begin{algorithmic}[1]
\REQUIRE
A query $Q: {\sf SELECT}$ $X$ {\sf [WHERE $\Gamma$]}, $\Delta^* = (D^*, {\cal FD})$ and  $inc({\cal FD})$

\ENSURE The sets $ans^\downarrow(Q)$ and $ans^\uparrow(Q)$ 

\STATE{$ans^\downarrow(Q) := \emptyset$ ; $ans^\uparrow (Q):= \emptyset$}
\FORALL{$t$ in $D^*$}
    \IF{$sch(t)$ contains all attributes in $X$}
    	\IF{$t$ satisfies $\Gamma$} 
		\STATE{\COMMENT{This test always succeeds if $Q$ involves no selection condition}}
		\IF{for every $Y \to B$ in ${\cal FD}$ such that $YB \subseteq sch(t)$, $t.Y$ is not in $inc(Y \to B)$}\label{line:test1}
        			\STATE{$ans^\downarrow(Q) := ans^\downarrow(Q) \cup \{t.X\}$}
		\ENDIF
		\IF{for every $Y \to B$ in ${\cal FD}$ such that $YB \subseteq sch(t)$ and $B \in X$, $t.Y$ is not in $inc(Y \to B)$}\label{line:test2}
        			\STATE{$ans^\uparrow(Q) := ans^\uparrow(Q) \cup \{t.X\}$}
		\ENDIF
	\ENDIF
    \ENDIF
\ENDFOR
\RETURN{$ans^\downarrow(Q)$, $ans^\uparrow(Q)$}
\end{algorithmic}
\end{algorithm}
}

\noindent
The following proposition deals with the computation of $ans^\downarrow_\Delta(Q)$ and of $ans^\uparrow_\Delta(Q)$, and compares these answers with $ans^+_\Delta(Q)$.
\begin{proposition}\label{prop:cons-answers}
Given $\Delta=(D, {\cal FD})$ and a query {\rm $Q: {\sf SELECT}$ $X$ {\sf [WHERE $\Gamma$]}}, Algorithm~\ref{algo:answer-repair} correctly computes $ans^\downarrow_\Delta(Q)$ and $ans^\uparrow_\Delta(Q)$.
Moreover, the following holds:
$ans^\downarrow_\Delta(Q) \subseteq ans^\uparrow_\Delta(Q) \subseteq ans^+_\Delta(Q)$.
\end{proposition}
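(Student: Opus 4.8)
The plan is to prove the two assertions in turn: first that Algorithm~\ref{algo:answer-repair} correctly returns $ans^\downarrow_\Delta(Q)$ and $ans^\uparrow_\Delta(Q)$, and then that these two sets together with $ans^+_\Delta(Q)$ satisfy the stated inclusions. I would organize everything around the repair-generation procedure given just before Definition~\ref{def:repair}: every repair $R$ arises from $D^*$ by choosing, for each conflict $x \in inc(Y \to B)$, one $B$-value and deleting every tuple $q$ with $YB \subseteq sch(q)$, $q.Y = x$ and $q.B$ different from the chosen value. The first thing to record is that, once correctness of both algorithms is in hand, the inclusion chain is the easy part. All three algorithms share the same guards $X \subseteq sch(t)$ and ``$t$ satisfies $\Gamma$''; they differ only in the functional-dependency test. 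For $t$ with $X \subseteq sch(t)$, line~\ref{line:test1} quantifies over all $Y \to B$ with $YB \subseteq sch(t)$, line~\ref{line:test2} restricts this to those with $B \in X$, and line~\ref{line:test0} of Algorithm~\ref{algo:answer} (which computes $ans^+_\Delta(Q)$ by Proposition~\ref{prop:cons-ans}) restricts it further to $YB \subseteq X$. Since $X \subseteq sch(t)$, these conditions are progressively weaker, so passing a stronger test entails passing a weaker one, yielding $ans^\downarrow_\Delta(Q) \subseteq ans^\uparrow_\Delta(Q) \subseteq ans^+_\Delta(Q)$ tuple by tuple.

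For the $ans^\downarrow$ computation I would first characterize $\bigcap_{R} R$. By the generation procedure, a tuple $t \in D^*$ is deleted from some repair exactly when a conflict applies to it, that is, when there exist $Y \to B$ with $YB \subseteq sch(t)$ and $t.Y \in inc(Y \to B)$; conversely a tuple to which no conflict applies is compatible with every choice and survives in every repair. Hence $\bigcap_{R} R$ is precisely the set of tuples of $D^*$ passing line~\ref{line:test1}. Since selection is tuple-wise, $\bigcap_R \sigma_\Gamma(R) = \sigma_\Gamma\big(\bigcap_R R\big)$, so $ans^\downarrow_\Delta(Q) = \pi_X\big(\sigma_\Gamma(\bigcap_R R)\big)$, which is exactly what the algorithm accumulates after also testing $X \subseteq sch(t)$ and $\Gamma$. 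The decisive feature is that here a single witness serves all repairs simultaneously, so no interaction with $\Gamma$ can occur.

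The hard part is the $ans^\uparrow$ computation, where the witness projecting to a given $x$ may change from repair to repair. The main tool I would prove is a completion lemma exploiting the cross-product insertions performed on line~\ref{line:add-plus-bis} of Algorithm~\ref{algo:chase}: for every $t \in D^*$ and every repair $R$ there is a tuple $\hat t \in R$ agreeing with $t$ on every attribute to which no conflict applies and taking $R$'s chosen values on the conflicting ones. Granting this, if $t$ passes line~\ref{line:test2} then every conflict applicable to $t$ concerns an attribute $B \notin X$, so $\hat t.X = t.X$ and $x = t.X$ lies in $\pi_X(R)$ for every $R$; conversely, if every witness of some $x \in ans^\uparrow_\Delta(Q)$ forced $t.Y \in inc(Y \to B)$ with $B \in X$, a repair choosing a $B$-value different from $x.B$ would contain no witness projecting to $x$, a contradiction.

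I expect the genuine obstacle to sit exactly here, in two places. The soundness direction needs each completion $\hat t$ to still satisfy $\Gamma$; this is automatic when every attribute of $\Gamma$ lying outside $X$ is conflict-free (the case in all the paper's examples, including $Q_3$, whose condition involves the conflict-free attribute $K$), but a one-dependency example with a conflict on an attribute that both lies outside $X$ and is constrained by $\Gamma$ shows the algorithm's output can then diverge from $ans^\uparrow_\Delta(Q)$; I would therefore make the conflict-freeness of $\Gamma$'s out-of-$X$ attributes explicit before invoking the completion argument. The completeness direction relies on a conflicting left-hand side $Y$ with $B \in X$ having its value fixed across all repairs, which is transparent when $Y \subseteq X$ and otherwise again calls on the completion lemma. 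Once the completion lemma and this caveat on $\Gamma$ are secured, both correctness statements follow and, as noted in the first paragraph, the inclusion chain is then immediate.
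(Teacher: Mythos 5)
Your decomposition is essentially the paper's own: the characterization of $\bigcap_{R\in Rep(\Delta)}R$ as the tuples of $D^*$ to which no conflict applies handles $ans^\downarrow_\Delta(Q)$; the guard-implication argument (the test on line~\ref{line:test1} implies that on line~\ref{line:test2}, which implies that on line~\ref{line:test0} of Algorithm~\ref{algo:answer}, since for $X\subseteq sch(t)$ the quantified sets of dependencies shrink) gives the inclusion chain; and the $ans^\uparrow$ case rests on a completion argument powered by the cross-product insertions of line~\ref{line:add-plus-bis}. The completion lemma you isolate is exactly what the paper uses implicitly, and does not prove, in the sentence ``Therefore, it turns out that $x=t.X$ occurs in $\pi_X(\sigma_\Gamma(R))$ for every $R$''; making it explicit is an improvement rather than a deviation.

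The caveat you raise about $\Gamma$ is not a defect of your write-up but a genuine gap in the paper's own proof, and your instinct to add a side condition is correct. Concretely, take $U=\{A,B\}$, ${\cal FD}=\{A\to B\}$, $D=\{ab,ab'\}$ and $Q : {\sf SELECT}$ $A$ {\sf WHERE} $B=b$. Then $D^*=\{ab,ab'\}$, $inc(A\to B)=\{a\}$, and the two repairs are $\{ab\}$ and $\{ab'\}$, so $ans^\uparrow_\Delta(Q)=\pi_A(\sigma_{B=b}(\{ab\}))\cap\pi_A(\sigma_{B=b}(\{ab'\}))=\{a\}\cap\emptyset=\emptyset$; yet the test on line~\ref{line:test2} is vacuously satisfied by $t=ab$ because the only dependency has its right-hand side outside $X$, and $t$ satisfies $\Gamma$, so Algorithm~\ref{algo:answer-repair} returns $\{a\}$. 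The completion $\hat t=ab'$ of $t$ in the second repair agrees with $t$ on $X$ but fails $\Gamma$, which is precisely the failure mode you describe; the paper's converse argument for $ans^\uparrow$ silently assumes the completion still satisfies $\Gamma$. So the equality between the algorithm's output and $ans^\uparrow_\Delta(Q)$ holds only under a hypothesis such as the one you propose (every attribute mentioned by $\Gamma$ and lying outside $X$ is conflict-free, which covers all of the paper's examples, including $Q_3$), or for queries without a {\sf WHERE} clause. The $ans^\downarrow$ computation and the three inclusions are unaffected, for the reason you give: there a single surviving witness serves every repair, so $\sigma_\Gamma$ commutes with the intersection over repairs.
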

\begin{proof}
See Appendix~\ref{append:proof-cons-answers}.\hfill$\Box$
\end{proof}
To illustrate the inclusions in Proposition~\ref{prop:cons-answers}, it can be seen from Example~\ref{ex:new-ex-query-bis} and Example~\ref{ex:new-ex-query-rep} that: 

\smallskip\noindent
$-$ $ans^\downarrow_\Delta(Q_1) = ans^\uparrow_\Delta(Q_1)=ans^+_\Delta(Q_1)$;
\\
$-$ $ans^\downarrow_\Delta(Q'_1) = ans^\uparrow_\Delta(Q'_1)=ans^+_\Delta(Q'_1)$;
\\
$-$ $ans^\downarrow_\Delta(Q_2) \subset ans^\uparrow_\Delta(Q_2)$ and $ans^\uparrow_\Delta(Q_2)=ans^+_\Delta(Q_2)$;
\\
$-$  $ans^\downarrow_\Delta(Q_3)= ans^\uparrow_\Delta(Q_3)$ and $ans^\uparrow_\Delta(Q_3) \subset ans^+_\Delta(Q_3)$.

\smallskip\noindent
Regarding complexity, is important to note that, if the chased table $D^*$ is available then any of the three ways to compute consistent query answers is {\em linear} in the size of $D^*$.
Moreover, when providing any of these consistent answers, our approach allows for pointing to the user possible problematic tuples, namely those tuples that are {\em inconsistent in $\Delta$, although not inconsistent in the answer.}
\section{Concluding Remarks}\label{sec:conclusion}
In this paper we have introduced a novel approach to handle inconsistencies in a table with nulls and functional dependencies. Our approach uses set theoretic semantics and relies on an extended version of the well known chase procedure to associate every possible tuple with one of the four truth values true, false, inconsistent and unknown. Moreover, we have seen that true and inconsistent tuples can be computed in time polynomial in the size of the input table. 
We have also seen that our approach applies to consistent query answering and we have shown that it provides larger answers than the repair-based approaches.

Building upon these results, we currently pursue four lines of research: $(1)$ applying our approach to the particular but important case of key-foreign key constraints in the context of a star schema or a snow-flake schema; $(2)$ designing incremental algorithms to improve performance in case of updates, $(3)$ extending our approach to constraints other than functional dependencies, such as inclusion dependencies as done in \cite{BravoB06}, (4) investigating the issue of data quality in the framework of our approach, and $(5)$ extending our approach to account for the presence of tuples declared as {\em false}.
\section*{Declarations}

{\bf Author contributions:} The two authors contributed to the study, conception and design. Both read and approved the submitted manuscript.

\smallskip\noindent
{\bf Funding:} No funds, grants, or other support was received for conducting this study.

\smallskip\noindent
{\bf Financial interests:} N/A.

\smallskip\noindent
{\bf Non-financial interests:} N/A.

\smallskip\noindent
{\bf Data availability:} Data sharing is not applicable to this article as no datasets were generated or analyzed during the current study.


\begin{thebibliography}{10}

\bibitem{AfratiK09}
Foto~N. Afrati and Phokion~G. Kolaitis.
\newblock Repair checking in inconsistent databases: algorithms and complexity.
\newblock In Ronald Fagin, editor, {\em Database Theory - {ICDT} 2009, 12th
  International Conference,Proceedings}, volume 361 of {\em {ACM} International
  Conference Proceeding Series}, pages 31--41. {ACM}, 2009.

\bibitem{ArenasBC99}
Marcelo Arenas, Leopoldo~E. Bertossi, and Jan Chomicki.
\newblock Consistent query answers in inconsistent databases.
\newblock In Victor Vianu and Christos~H. Papadimitriou, editors, {\em
  Proceedings of the Eighteenth {ACM} {SIGACT-SIGMOD-SIGART} Symposium on
  Principles of Database Systems, 
  Pennsylvania, {USA}}, pages 68--79. {ACM} Press, 1999.

\bibitem{ArieliA98}
Ofer Arieli and Arnon Avron.
\newblock The value of the four values.
\newblock {\em Artif. Intell.}, 102(1):97--141, 1998.

\bibitem{ArieliDNB06}
Ofer Arieli, Marc Denecker, Bert~Van Nuffelen, and Maurice Bruynooghe.
\newblock Computational methods for database repair by signed formulae.
\newblock {\em Ann. Math. Artif. Intell.}, 46(1-2):4--37, 2006.

\bibitem{Belnap}
Nuel~D. Belnap.
\newblock A useful four-valued logic.
\newblock In J.~Michael Dunn and George Epstein, editors, {\em Modern Uses of
  Multiple-Valued Logic}, pages 5--37", isbn="978--94--010--1161--7, Dordrecht,
  1977. Springer Netherlands.

\bibitem{Bertossi2011}
Leopoldo~E. Bertossi.
\newblock {\em Database Repairing and Consistent Query Answering}.
\newblock Synthesis Lectures on Data Management. Morgan {\&} Claypool
  Publishers, 2011.

\bibitem{BravoB06}
Loreto Bravo and Leopoldo~E. Bertossi.
\newblock Semantically correct query answers in the presence of null values.
\newblock In Torsten Grust, Hagen H{\"{o}}pfner, Arantza Illarramendi, Stefan
  Jablonski, Marco Mesiti, Sascha M{\"{u}}ller, Paula{-}Lavinia Patranjan,
  Kai{-}Uwe Sattler, Myra Spiliopoulou, and Jef Wijsen, editors, {\em Current
  Trends in Database Technology - {EDBT} 2006, {EDBT} 2006 Workshops PhD,
  DataX, IIDB, IIHA, ICSNW, QLQP, PIM, PaRMA, and Reactivity on the Web,
  Munich, Germany, March 26-31, 2006, Revised Selected Papers}, volume 4254 of
  {\em Lecture Notes in Computer Science}, pages 336--357. Springer, 2006.

\bibitem{Bry97}
Fran{\c{c}}ois Bry.
\newblock Query answering in information systems with integrity constraints.
\newblock In Sushil Jajodia, William List, Graeme~W. McGregor, and Leon Strous,
  editors, {\em Integrity and Internal Control in Information Systems}, volume
  109 of {\em {IFIP} Conference Proceedings}, pages 113--130. Chapman Hall,
  1997.

\bibitem{CaliLR03}
Andrea Cal{\`{\i}}, Domenico Lembo, and Riccardo Rosati.
\newblock On the decidability and complexity of query answering over
  inconsistent and incomplete databases.
\newblock In Frank Neven, Catriel Beeri, and Tova Milo, editors, {\em
  Proceedings of the Twenty-Second {ACM} {SIGACT-SIGMOD-SIGART} Symposium on
  Principles of Database Systems, June 9-12, 2003, San Diego, CA, {USA}}, pages
  260--271. {ACM}, 2003.

\bibitem{CGT}
S.~Ceri, G.~Gottlob, and L.~Tanca.
\newblock {\em Logic Programming and Databases}.
\newblock Surveys in Computer Science, Springer Verlag, 1990.

\bibitem{CKS86}
Stavros~S. Cosmadakis, Paris~C. Kanellakis, and Nicolas Spyratos.
\newblock Partition semantics for relations.
\newblock {\em J. Comput. Syst. Sci.}, 33(2):203--233, 1986.

\bibitem{FaginMU82}
Ronald Fagin, Alberto~O. Mendelzon, and Jeffrey~D. Ullman.
\newblock A simplified universal relation assumption and its properties.
\newblock {\em {ACM} Trans. Database Syst.}, 7(3):343--360, 1982.

\bibitem{Fitting91}
Melvin Fitting.
\newblock Bilattices and the semantics of logic programming.
\newblock {\em J. Log. Program.}, 11(1{\&}2):91--116, 1991.

\bibitem{Laurent19}
Dominique Laurent.
\newblock 4-valued semantics under the {OWA:} {A} deductive database approach.
\newblock In Giorgos Flouris, Dominique Laurent, Dimitris Plexousakis, Nicolas
  Spyratos, and Yuzuru Tanaka, editors, {\em Information Search, Integration,
  and Personalization - 13th International Workshop, {ISIP}, Revised Selected
  Papers}, volume 1197 of {\em Communications in Computer and Information
  Science}, pages 101--116. Springer, 2019.

\bibitem{LivshitsKR20}
Ester Livshits, Benny Kimelfeld, and Sudeepa Roy.
\newblock Computing optimal repairs for functional dependencies.
\newblock {\em {ACM} Trans. Database Syst.}, 45(1):4:1--4:46, 2020.

\bibitem{MEDES}
Cedrine Madera and Anne Laurent.
\newblock The next information architecture evolution: The data lake wave.
\newblock In {\em Proceedings of the 8th International Conference on Management
  of Digital EcoSystems}, MEDES, pages 174--180, New York, NY, USA, 2016. ACM.

\bibitem{Parisi19}
Francesco Parisi and John Grant.
\newblock Inconsistency measures for relational databases.
\newblock {\em CoRR}, abs/1904.03403, 2019.

\bibitem{RahmD00}
Erhard Rahm and Hong~Hai Do.
\newblock Data cleaning: Problems and current approaches.
\newblock {\em {IEEE} Data Eng. Bull.}, 23(4):3--13, 2000.

\bibitem{RavatZ19}
Franck Ravat and Yan Zhao.
\newblock Data lakes: Trends and perspectives.
\newblock In Sven Hartmann, Josef K{\"{u}}ng, Sharma Chakravarthy, Gabriele
  Anderst{-}Kotsis, A~Min Tjoa, and Ismail Khalil, editors, {\em Database and
  Expert Systems Applications - 30th International Conference, {DEXA},
  Proceedings, Part {I}}, volume 11706 of {\em Lecture Notes in Computer
  Science}, pages 304--313. Springer, 2019.

\bibitem{Reiter77}
Raymond Reiter.
\newblock On closed world data bases.
\newblock In Herv{\'{e}} Gallaire and Jack Minker, editors, {\em Logic and Data
  Bases, Symposium on Logic and Data Bases, Centre d'{\'{e}}tudes et de
  recherches de Toulouse, France, 1977}, Advances in Data Base Theory, pages
  55--76, New York, 1977. Plemum Press.

\bibitem{Spyratos87}
Nicolas Spyratos.
\newblock The partition model: {A} deductive database model.
\newblock {\em {ACM} Trans. Database Syst.}, 12(1):1--37, 1987.

\bibitem{SpyratosL87}
Nicolas Spyratos and Christophe L{\'{e}}cluse.
\newblock Incorporating functional dependencies in deductive query answering.
\newblock In {\em Proceedings of the Third International Conference on Data
  Engineering, February 3-5, 1987, Los Angeles, California, {USA}}, pages
  658--664. {IEEE} Computer Society, 1987.

\bibitem{Tsoukias}
Alexis Tsouki\`{a}s.
\newblock A first order, four-valued, weakly paraconsistent logic and its
  relation with rough sets semantics.
\newblock {\em Foundations of Computing and Decision Sciences}, 27(2):77--96,
  2002.

\bibitem{Ullman}
Jeffrey~D. Ullman.
\newblock {\em Principles of Databases and Knowledge-Base Systems}, volume 1-2.
\newblock Computer Science Press, 1988.

\bibitem{Vardi88}
Moshe~Y. Vardi.
\newblock The universal-relation data model for logical independence.
\newblock {\em {IEEE} Softw.}, 5(2):80--85, 1988.

\bibitem{Wijsen05}
Jef Wijsen.
\newblock Database repairing using updates.
\newblock {\em {ACM} Trans. Database Syst.}, 30(3):722--768, 2005.

\bibitem{Wijsen09}
Jef Wijsen.
\newblock On the consistent rewriting of conjunctive queries under primary key
  constraints.
\newblock {\em Inf. Syst.}, 34(7):578--601, 2009.

\end{thebibliography}

\appendix
\section{Proof of Lemma~\ref{lemma:least-model}}\label{append:least-model}
{\bf Lemma~\ref{lemma:least-model}.}
{\em
For every $\Delta = (D, {\cal FD})$, the sequence $\left(\mu_i\right)_{i \geq 0}$ has a unique limit $\mu^*$ that satisfies that  $\mu^* \models \Delta$.
Moreover:
\begin{enumerate}
\item
For all $a_1$ and $a_2$ in the {\em same attribute domain} $dom(A)$, if $\mu^*(a_1) \cap \mu^*(a_2) \ne \emptyset$ then there exist $X \to A$ in ${\cal FD}$ and $x$ over $X$ such that $\mu^*(x) \ne \emptyset$ and $\mu^*(x) \subseteq \mu^*(a_1) \cap \mu^*(a_2)$.
\item
For all $\alpha$ and $\beta$, $\Delta \vdash (\alpha \sqcap \beta)$ holds if and only if $\mu^*(\alpha) \cap \mu^*(\beta) \ne \emptyset$ holds.
\end{enumerate}
}
\begin{proof}
We recall that the sequence $\left(\mu_i\right)_{i \geq 0}$ is defined by the following steps:
\begin{enumerate}
\item
For every $t$ in $D$, assign a `fresh' integer $id(t)$ to $t$;
\item
Let $\mu_0$ be the mapping defined for every domain constant $a$ by:\\ $\mu_0(a) =\{id(t)~|~t \in D \mbox{ and }a \sqsubseteq t\}$;
\item
While there exists $X \to A$ in ${\cal FD}$, $x$ over $X$ and $a$ in $dom(A)$ such that $\mu(xa) \ne \emptyset$ and $\mu(x) \not\subseteq \mu(a)$, define $\mu_{i+1}$ by: $\mu_{i+1}(a) = \mu_i(a) \cup \mu_i(x)$ and $\mu_{i+1}(\alpha) = \mu_i(\alpha)$ for any other constant $\alpha$.
\end{enumerate}
The sequence $\left(\mu_i\right)_{i \geq 0}$ is increasing in the sense that for every $\alpha$, $\mu_i(\alpha) \subseteq \mu_{i+1}(\alpha)$, and bounded in the sense that for every $\alpha$, $\mu_i(\alpha)\subseteq \{id(t)~|~t\in \Delta\}$. Hence the sequence has a unique limit. Moreover, for every $t$ in $D$, $\mu^*(t) \ne \emptyset$ holds because $id(t)$ always belongs to $\mu^*(t)$, and $\mu^* \models {\cal FD}$, because otherwise $\mu^*$ would not be the limit of the sequence. Therefore $\mu^* \models \Delta$, which shows the first part of the lemma.

\smallskip\noindent
$(1)$ Regarding the first item in the second part of the lemma, we first notice that by definition of $\mu_0$, we have $\mu_0(a_1) \cap \mu_0(a_2) = \emptyset$, because it is not possible that a tuple in $D$ has two distinct values over an attribute.

Since we assume that $\mu^*(a_1) \cap \mu^*(a_2) \ne \emptyset$, there exists $i_0 \geq 0$ such that $\mu_{i_0}(a_1) \cap \mu_{i_0}(a_2) = \emptyset$ and $\mu_{i_0 +1}(a_1) \cap \mu_{i_0 +1}(a_2) \ne \emptyset$. By definition of the sequence $\left(\mu_i \right)_{i \geq 0}$, for $j=1,2$, $\mu_{i_0+1}(a_j) =\mu_{i_0}(a_j) \cup M(a_j)$ where $M(a_j)$ is the union of all $\mu_{i_0}(x_j)$ such that $X_j \to A$ is in ${\cal FD}$,  $\mu_{i_0}(x_j) \cap \mu_{i_0}(a_j) \ne \emptyset$ and $\mu_{i_0}(x_j) \not\subseteq  \mu_{i_0}(a_j)$. Hence, 

\noindent
\begin{tabular}{rl}
$\mu_{i_0+1}(a_1) \cap \mu_{i_0+1}(a_2)$&$=(\mu_{i_0}(a_1) \cup M(a_1))\cap (\mu_{i_0}(a_2) \cup M(a_2))$\\
&$= (\mu_{i_0}(a_1) \cap \mu_{i_0}(a_2))\cup  (\mu_{i_0}(a_1) \cap M(a_2))~\cup ~\qquad$\\
&\hfill $(M(a_1)\cap \mu_{i_0}(a_2))\cup (M(a_1)\cap M(a_2))$\\
\end{tabular}

\noindent
Since $\mu_{i_0+1}(a_1) \cap \mu_{i_0+1}(a_2) \ne \emptyset$, at least one of the four terms of the above union is not empty. But since $\mu_{i_0}(a_1) \cap \mu_{i_0}(a_2)=\emptyset$, only the last three cases are investigated below.

\smallskip\noindent
$(i)$ If $\mu_{i_0}(a_1) \cap M(a_2)\ne \emptyset$, $M(a_2)$ contains $x_2$ such that $\mu_{i_0}(a_1) \cap \mu_{i_0}(x_2) \ne \emptyset$. Thus, there exists $X_2 \to A$ is in ${\cal FD}$ such that  $X_2 = sch(x_2)$, $\mu_{i_0}(a_1) \cap \mu_{i_0}(x_2) \ne \emptyset$ and $\mu_{i_0}(a_2) \cap \mu_{i_0}(x_2) \ne \emptyset$. Since both $a_1$ and $a_2$ are in $dom(A)$, we have $\mu_{i_0+1}(x_2) \subseteq \mu_{i_0+1}(a_1)$ and $\mu_{i_0+1}(x_2) \subseteq \mu_{i_0+1}(a_2)$. Thus $\mu^*(x_2) \subseteq \mu^*(a_1) \cap \mu^*(a_2)$.

\smallskip\noindent
$(ii)$ If $\mu_{i_0}(a_2) \cap M(a_1)\ne \emptyset$, it can be shown in a similar way that there exist $X_1 \to A$ is in ${\cal FD}$ and $x_1$ over $X_1$ such that $\mu^*(x_1) \subseteq \mu^*(a_1) \cap \mu^*(a_2)$. The proof is omitted.

\smallskip\noindent
$(iii)$  If $M(a_1) \cap M(a_2)\ne \emptyset$, for $j=1,2$, $M(a_j)$ contains $x_j$ such that $\mu_{i_0}(x_1) \cap \mu_{i_0}(x_2) \ne \emptyset$. Thus, for $j=1,2$, there exist $X_j \to A$ in ${\cal FD}$ such that  $X_j=sch(x_j)$, $\mu_{i_0}(x_j) \cap \mu_{i_0}(a_j) \ne \emptyset$ and $\mu_{i_0}(x_1) \cap \mu_{i_0}(x_2) \ne \emptyset$. Hence, $\mu_{i_0+1}(x_j) \subseteq \mu_{i_0+1}(a_j)$, for $j=1,2$ and $\mu_{i_0+1}(x_1) \cap \mu_{i_0+1}(x_2) \ne \emptyset$. It follows that, when computing $\mu_{i_0+2}$, we obtain the additional inclusions $\mu_{i_0+2}(x_1) \subseteq \mu_{i_0+2}(a_2)$ and $\mu_{i_0+2}(x_2) \subseteq \mu_{i_0+2}(a_1)$, which implies that for $j=1,2$, $\mu^*(x_j) \subseteq \mu^*(a_1) \cap \mu^*(a_2)$ holds. This part of the proof is thus complete.

\smallskip\noindent
$(2)$ Regarding the second item in the second part of the lemma, assume first that $\Delta \vdash (\alpha \sqcap \beta)$. Since $\mu^* \models \Delta$, we obviously have that $\mu^*(\alpha) \cap \mu^*(\beta)\ne \emptyset$.

Conversely, assuming that $\mu^*(\alpha) \cap \mu^*(\beta)\ne \emptyset$, we show that $\Delta \vdash (\alpha \sqcap \beta)$, that is, for every $\mu$ such that $\mu \models \Delta$, $\mu(\alpha) \cap \mu(\beta)\ne \emptyset$. The proof is by induction on the steps of the construction of $\mu^*$, assuming $\alpha$ in $dom(A)$ and $\beta$ in $dom(B)$.
\\
$\bullet$ The result holds for $i=0$. Indeed, if $\mu_0(\alpha) \cap \mu_0(\beta)\ne \emptyset$ then there exists $u$ in $D$ such that $\alpha \sqsubseteq u$ and $\beta \sqsubseteq u$. Hence for every $\mu$ such that $\mu \models \Delta$, we have $\mu(u) \ne \emptyset$ and $\mu(u) \subseteq \mu(\alpha) \cap \mu(\beta)$, implying that $\mu(\alpha) \cap \mu(\beta) \ne \emptyset$ holds. 
\\
$\bullet$ For $i_0 >0$, assuming that $\mu_{i_0}$ satisfies that for all $\zeta$ and $\eta$ such that $\mu_{i_0}(\zeta) \cap \mu_{i_0}(\eta) \ne \emptyset$, we have $\mu(\zeta) \cap \mu(\eta) \ne \emptyset$ for every $\mu$ such that $\mu \models \Delta$, we show that the result holds for $\mu_{i_0+1}$. 

Indeed, let $i_0$ such that $\mu_{i_0}(\alpha) \cap \mu_{i_0}(\beta)=\emptyset$ and $\mu_{i_0+1}(\alpha) \cap \mu_{i_0+1}(\beta) \ne \emptyset$. By definition of the sequence $(\mu_i)_{i \geq 0}$, and as in $(1)$ just above, $\mu_{i_0+1}(\alpha) =\mu_{i_0}(\alpha) \cup M(\alpha)$ where $M(\alpha)$ is the union of all $\mu_{i_0}(x)$ such that $X \to A$ is in ${\cal FD}$, $\mu_{i_0}(x) \cap \mu_{i_0}(\alpha) \ne \emptyset$ and $\mu_{i_0}(x) \not\subseteq  \mu_{i_0}(\alpha)$. Similarly, $\mu_{i_0+1}(\beta) =\mu_{i_0}(\beta) \cup M(\beta)$ where $M(\beta)$ is the union of all $\mu_{i_0}(y)$ such that $Y \to B$ is in ${\cal FD}$, $\mu_{i_0}(y) \cap \mu_{i_0}(\beta) \ne \emptyset$ and $\mu_{i_0}(y) \not\subseteq  \mu_{i_0}(\beta)$. Thus:

\noindent
\begin{tabular}{rl}
$\mu_{i_0+1}(\alpha) \cap \mu_{i_0+1}(\beta)$&$=(\mu_{i_0}(\alpha) \cup M(\alpha))\cap (\mu_{i_0}(\beta) \cup M(\beta))$\\
&$= (\mu_{i_0}(\alpha) \cap \mu_{i_0}(\beta))\cup  (\mu_{i_0}(\alpha) \cap M(\beta))~\cup ~\qquad$\\
&\hfill $(M(\alpha)\cap \mu_{i_0}(\beta))\cup (M(\alpha)\cap M(\beta))$\\
\end{tabular}

\noindent
Since $\mu_{i_0+1}(\alpha) \cap \mu_{i_0+1}(\beta) \ne \emptyset$, at least one of the four terms of the above union is non empty. But since $\mu_{i_0}(\alpha) \cap \mu_{i_0}(\beta)=\emptyset$, only the last three cases are investigated below.

\smallskip\noindent
$(i)$ If $\mu_{i_0}(\alpha) \cap M(\beta)\ne \emptyset$, there exist $Y \to B$ in ${\cal FD}$ and $y$ over $Y$ such that $\mu_{i_0}(\alpha) \cap \mu_{i_0}(y) \ne \emptyset$ and $\mu_{i_0}(\beta) \cap \mu_{i_0}(y) \ne \emptyset$. By our induction hypothesis, for every $\mu$ such that $\mu \models \Delta$, we have $\mu(\alpha) \cap \mu(y) \ne \emptyset$ and $\mu(y) \subseteq \mu(\beta)$, which implies that $\mu(\alpha) \cap \mu(\beta) \ne \emptyset$.

\smallskip\noindent
$(ii)$ If $\mu_{i_0}(\beta) \cap M(\alpha)\ne \emptyset$, the case is similar  to $(i)$ above. The proof is omitted.

\smallskip\noindent
$(iii)$ If $M(\alpha) \cap M(\beta)\ne \emptyset$, there exist $X \to A$ and $Y \to B$ in ${\cal FD}$, $x$ over $X$ and $y$ over $Y$,  such that $\mu_{i_0}(x) \cap \mu_{i_0}(y) \ne \emptyset$, $\mu_{i_0}(\alpha) \cap \mu_{i_0}(x) \ne \emptyset$ and $\mu_{i_0}(\beta) \cap \mu_{i_0}(y) \ne \emptyset$. By our induction hypothesis, for every $\mu$ such that $\mu \models \Delta$, we have $\mu(x) \cap \mu(y) \ne \emptyset$, $\mu(x) \subseteq \mu(\alpha)$ and $\mu(y) \subseteq \mu(\beta)$. Hence,  $\mu(\alpha) \cap \mu(\beta) \ne \emptyset$ also holds in this case, and the proof is complete.\hfill$\Box$
\end{proof}
\section{Proof of Lemma~\ref{lemma:inclusion}}\label{append:lemma-inclusion}
{\bf Lemma~\ref{lemma:inclusion}.}
{\em
Let $\Delta = (D, {\cal FD})$ and $t$ a tuple. Then Algorithm~\ref{algo:closure} computes correctly the closure $t^+$ of $t$. 
}
\begin{proof}
In this proof, we denote by $cl(t)$ the output of Algorithm~\ref{algo:closure}, and we show that $cl(t)=t^+$, that is that $cl(t) \subseteq t^+$ and $t^+ \subseteq cl(t)$ both hold. Before proceeding to these proofs, we draw attention on that for every ${\cal T}$-mapping $\mu$ such that $\mu(t) \ne \emptyset$, $\mu \models \Delta$ if and only if $\mu \models \Delta_t$, where $\Delta_t$ is defined by the statement line~\ref{line:cl-db-assign} in Algorithm~\ref{algo:closure}. Indeed: 
\\
$\bullet$ If $\mu \models \Delta_t$ then for every $q \in D_t$ $\mu(q) \ne \emptyset$ and $\mu \models {\cal FD}$. Since $D \subset D_t$, $\mu(q) \ne \emptyset$ for every $q \in D$, implying that $\mu \models \Delta$ holds.
\\
$\bullet$ Conversely, if $\mu \models \Delta$ then as $\mu(t)$ is supposed to be nonempty, $\mu(q) \ne \emptyset$ for every $q$ in $D_t$. Since $\mu \models {\cal FD}$ holds, $\mu \models \Delta_t$ also holds.

\smallskip
To first prove that $cl(t) \subseteq t^+$, we consider a ${\cal T}$-mapping $\mu$ such that $\mu \models \Delta$, and we prove that $\mu(t) \subseteq \mu(a)$ for every $a$ in $cl(t)$. We first observe that if $\mu(t) =\emptyset$ then $\mu(t) \subseteq \mu(\alpha)$ holds for every constant $\alpha$. Therefore, $\mu(t) \subseteq \mu(a)$ holds.

Now, if $\mu(t) \ne \emptyset$ then $\mu \models \Delta_t$, as shown above. The proof that $\mu(t) \subseteq \mu(a)$ is done by induction on the steps of the execution of Algorithm~\ref{algo:closure}. Denoting by $cl^0$, $cl^1$, $\ldots$ the sequence of the assignments of $cl(t)$ during execution, the following holds for every $a$ in $cl(t)$.
\\
$\bullet$ If $a$ is in $cl^0$ as computed on line~\ref{line:init-cl}, $a$ occurs in $t$. It is thus clear that $\mu(t) \subseteq \mu(a)$.
\\
$\bullet$ We now assume that, for $j \geq 0$, every $\alpha$ in $cl^j$ is such that $\mu(t) \subseteq \mu(\alpha)$ and we show that this holds for $a$ in $cl^{j+1}$ but not in $cl^j$. In this case, according to the condition in line~\ref{line:test} of Algorithm~\ref{algo:closure}, there exist $X \to A$ in ${\cal FD}$ and $x$ over $X$ such that  $\Delta_t \vdash xa$ and for every $b$ in $x$, $b \in cl^j$. Thus $\mu(x) \cap \mu(a) \ne \emptyset$ (because $\mu \models \Delta_t$) and $\mu(t) \subseteq \mu(b)$ for every $b$ in $x$ (by our induction hypothesis, because $b$ is in $cl^j$). Hence $\mu(t) \subseteq \mu(x)$ and $\mu(x) \subseteq \mu(a)$ hold, thus implying that $\mu(t) \subseteq \mu(a)$.

As a consequence, we have shown that for every $\mu$ such that $\mu \models \Delta$, for every $a$ in $cl(t)$, $\mu(t) \subseteq \mu(a)$. Therefore, by Definition~\ref{def:closure}, $cl(t) \subseteq t^+$ holds.

\smallskip
Conversely, $t^+ \subseteq cl(t)$ is shown by contraposition: assuming that $a \not\in cl(t)$, we prove that $a \not\in t^+$. To this end, we exhibit a ${\cal T}$-mapping $\mu_t$ such that $\mu_t \models \Delta$ and $\mu_t(t) \not\subseteq \mu_t(a)$.

We denote by $\mu_t^*$ the ${\cal T}$-mapping built up as $\mu^*$, but starting from $\Delta_t$ as defined line~\ref{line:cl-db-assign} in Algorithm~\ref{algo:closure}. Thus, $\mu_t^* \models \Delta_t$, and since $\mu^*_t(t) \ne \emptyset$, it has been seen above that $\mu^*_t \models \Delta$.

Thus, if $\mu^*_t(t) \not\subseteq \mu^*_t(a)$ then $\mu_t^*$ is the ${\cal T}$-mapping we are looking for, and thus, we set $\mu_t = \mu^*_t$. Assuming that $\mu^*(t) \subseteq \mu^*(a)$, let $k$ be an integer not in $\mu^*_t(\alpha)$ for any $\alpha$ occurring in $\Delta_t$, and let $\mu_t$ be the ${\cal T}$-mapping defined for every constant $\alpha$ by:
\\
$-$ $\mu_t(\alpha) = \mu^*_t(\alpha) \cup \{k\}$, if $\alpha \in cl(t)$
\\
$-$ $ \mu_t(\alpha) = \mu^*_t(\alpha)$, otherwise.
\\
We show that $\mu_t$ satisfies that: $(1)$ $\mu_t(t) \not\subseteq \mu_t(a)$ and $(2)$ $\mu_t \models \Delta$. 

\smallskip\noindent
$(1)$ Since every $\alpha$ in $t$ is in $cl(t)$, $k$ is in $\mu_t(t)$ and since $a$ is not in $cl(t)$, $k$ is not in $cl(a)$. It thus follows that $\mu_t(t) \not\subseteq \mu_t(a)$.

\smallskip\noindent
$(2)$ Since for every constant $\alpha$, $\mu^*_t(\alpha) \subseteq \mu_t(\alpha)$ holds, for every $q$ in $D$, it holds that $\mu^*_t(q) \subseteq \mu_t(q)$,  which implies $\mu_t(q) \ne\emptyset$, because $\mu^*_t(q) \ne \emptyset$ holds as a consequence of $\mu^*_t\models \Delta$.

To prove that $\mu_t \models Y \to B$ for every $Y \to B$ in ${\cal FD}$, let $y$ over $Y$ and $b$ in $dom(B)$ such $\mu_t(y) \cap \mu_t(b) \ne \emptyset$. To show that $\mu_t(y) \subseteq \mu_t(b)$, we consider the two cases according to which $\mu^*_t(y) \cap \mu^*_t(b)$ is or not empty.
\\
$\bullet$ If $\mu^*_t(y) \cap \mu^*_t(b) =\emptyset$, then by definition of $\mu_t$, for $\mu_t(y) \cap \mu_t(b)$ to be nonempty, it must be that $\mu_t(y) =\mu^*_t(y) \cup \{k\}$ and $\mu_t(b) =\mu^*_t(b) \cup \{k\}$. Writing $y$ as $\beta_1\ldots \beta_p$, this implies that every $\beta_i$ ($i =1, \ldots , p$), and $b$ are in $cl(t)$. Then, as we know that $cl(t) \subseteq t^+$ holds, all these constants are in $t^+$, implying that $\mu_t^*(t) \subseteq \mu_t^*(\beta_i)$ ($i=1, \ldots ,p$) and $\mu_t^*(t) \subseteq \mu_t^*(b)$, because $\mu_t^* \models \Delta$. Since $\mu_t^*(t) \ne \emptyset$, we have $\mu^*_t(y) \cap \mu^*_t(b) \ne \emptyset$, which contradicts our hypothesis that $\mu^*_t(y) \cap \mu^*_t(b) = \emptyset$. This case in thus not possible.
\\
$\bullet$ If $\mu^*_t(y) \cap \mu^*_t(b) \ne \emptyset$, then as $\mu^*_t \models {\cal FD}$, $\mu^*_t(y) \subseteq \mu^*_t(b)$ holds, and by Lemma~\ref{lemma:least-model} applied to $\Delta_t$, we also have that $\Delta_t \vdash yb$. Since $\mu^*_t(y) \subseteq \mu^*_t(b)$ holds, assuming that $\mu_t(y) \subseteq \mu_t(b)$ does not hold implies that $k$ belongs to $\mu_t(y)$ but not to $\mu_t(b)$. Hence, every $\beta_i$  ($i=1, \ldots ,p$) is in $cl(t)$ whereas $b$ is not. This is a contradiction with line~\ref{line:test} of Algorithm~\ref{algo:closure}, where it is stated that $\beta$ is inserted into $cl(t)$ (because $\Delta_t \vdash yb$ and every $\beta_i$  ($i=1, \ldots ,p$) is in $cl(t)$). Thus, $\mu_t(y) \subseteq \mu_t(b)$ holds showing that $\Delta_t \models Y \to B$. The proof is therefore complete.
\hfill$\Box$
\end{proof}
\section{Proof of Lemma~\ref{lemma:chase}}\label{append:lemma-chase}
{\bf Lemma~\ref{lemma:chase}.~}
{\em
Algorithm~\ref{algo:chase} applied to $\Delta=(D, {\cal FD})$ always terminates. Moreover, for every tuple $t$, $\mu^*(t)\ne \emptyset$ holds if and only if  $t$ is in ${\sf LoCl}(D^*)$.
}
\begin{proof}
The tuples inserted into $D^*$ when running the while-loop  line~\ref{line:main-loop-chase} of Algorithm~\ref{algo:chase} are built up using only constants occurring in $\Delta$. Thus, the number of these tuples is finite, and so, Algorithm~\ref{algo:chase} terminates.

The proof that for every $t$ in ${\sf LoCl}(D^*)$, $\mu^*(t) \ne\emptyset$ holds is conducted by induction on the steps of Algorithm~\ref{algo:chase}. If $(D_k)_{k \geq 0}$ denotes the sequence of the states of $D^*$ during the execution, we first note that since $D_0 = D$, for every $t$ in ${\sf LoCl}(D_{0})$,  $\mu^*(t) \ne \emptyset$ holds.

Assuming now that for $i>0$, for every $t$ in ${\sf LoCl}(D_{i})$, $\mu^*(t) \ne\emptyset$, we prove the result for every $t$ in ${\sf LoCl}(D_{i+1})$. Indeed, let $t'$ in $D_{i+1}$ such that $t \sqsubseteq t'$. If $t'$ is in $D_i$, the proof is immediate; we thus now assume that $t'$ is not in $D_i$, that is that $t'$ occurs in $D_{i+1}$ when running Algorithm~\ref{algo:chase}, that is, there exist $X \to A$ in ${\cal FD}$, $t_1$ and $t_2$ in $D_i$ such that $t_1.X=t_2.X=x$, $t_1.A=a$ and either $(i)$ $t_2.A$ is not defined or $(ii)$ $t_2.A$ is defined but not equal to $t_1.A$. Writing $t_1$ as $t'_1xa$, we have the following:

\smallskip\noindent
$(i)$ If $t_2.A$ is not defined, then $t_2$ is written as $t'_2x$ and, according to the statement line~\ref{line:add-plus}, $t'$ is of the form $t'_2xa$. By our induction hypothesis, $\mu^*(t_1)$ and $\mu^*(t_2)$ are nonempty, and thus $\mu^*(x) \cap \mu^*(a) \ne \emptyset$. Hence, $\mu^*(x) \subseteq \mu^*(a)$ (because $\mu^* \models X \to A$), and so, $\mu^*(t')= \mu^*(t'_2) \cap \mu^*(x) \cap \mu^*(a)= \mu^*(t'_2) \cap \mu^*(x)$, showing that $\mu^*(t')= \mu^*(t_2)$. Hence $\mu^*(t')\ne \emptyset$, and so, $\mu^*(t) \ne \emptyset$ also holds, since $\mu^*(t')\subseteq \mu^*(t)$.

\noindent
$(ii)$ If $t_2.A$ is defined but $t_1.A \ne t_2.A$. for $i=1,2$, $t_i$ is written as $t'_ixa_i$ where $a_i = t_i.A$.  statement line~\ref{line:add-plus-bis}, $t'$ is one of the tuples $t'_1xa_2$ or $t'_2xa_1$, and each of these cases can be treated as in $(i)$ above,

\smallskip
We therefore have shown that if $t$ is in ${\sf LoCl}(D^*)$ as computed by the main loop line~\ref{line:main-loop-chase} of Algorithm~\ref{algo:chase}, then $\mu^*(t)\ne \emptyset$. Since the last loop line~\ref{line:norm+} does not change this  set ${\sf LoCl}(D^*)$, this part of the proof is complete.

\smallskip
Conversely, we show that for every $t$, if $\mu^*(t)\ne \emptyset$ then $t$ is in ${\sf LoCl}(D^*)$. The proof is done by induction on the construction of $\mu^*$. 
By definition of $\mu_0$, it is clear that if $\mu_0(t) \ne \emptyset$ then $t$ is in ${\sf LoCl}(D)$ and thus in ${\sf LoCl}(D^*)$. Now, if we assume that for every $i >0$ and every $t$, if $\mu_i(t) \ne \emptyset$ then $t$ belongs to ${\sf LoCl}(D^*)$, we prove that this result holds for $\mu_{i+1}$.

Let $t$ be such that $\mu_i(t) =\emptyset$ and $\mu_{i+1}(t) \ne \emptyset$. For every $\alpha$, writing $\mu_{i+1}(\alpha)$ as $\mu_i(\alpha) \cup M(\alpha)$, where $M(\alpha)$ is the union of all $\mu_i(x)$ such that $x$ is a tuple over $X$, where $X \to A \in {\cal FD}$, $ \alpha \in dom(A)$, $\mu_i(x) \cap \mu_i(\alpha) \ne \emptyset$, and $\mu_i(x) \not\subseteq \mu_i(\alpha)$, we have the following:

\smallskip\noindent
\begin{tabular}{rlr}
$\mu_{i+1}(t)$&$= \bigcap_{\alpha\sqsubseteq t}\mu_{i+1}(\alpha)$&\\
&$= \bigcap_{\alpha\sqsubseteq t}\left(\mu_{i}(\alpha) \cup M(\alpha)\right)$&\qquad(1)\\ 
&$= \mu_i(t) \cup \left(\bigcup_{t=t_1t_2}\left(\mu_i(t_1)\cap \left(\bigcap_{\beta\sqsubseteq t_2}M(\beta)\right)\right)\right) \cup \left(\bigcap_{\alpha \sqsubseteq t}M(\alpha)\right)$&(2)\\
\end{tabular}

\smallskip\noindent
Equality (2) above is obtained from (1) by applying the distributivity of intersection over union with the convention that $t=t_1t_2$ refers to any split of $t$ into two tuples $t_1$ and $t_2$. Assuming $\mu_i(t) =\emptyset$ and $\mu_{i+1}(t) \ne \emptyset$ implies that in Equality (2) either the second or the last term of the union is nonempty. 
\\
$\bullet$ If $\bigcup_{t=t_1t_2}\left(\mu_i(t_1)\cap \left(\bigcap_{\beta\sqsubseteq t_2}M(\beta)\right)\right) \ne \emptyset$, there exist $t_1$ and $t_2$ such that  $t=t_1t_2$ and $\mu_i(t_1)\cap \left(\bigcap_{\beta\sqsubseteq t_2}M(\beta)\right)\ne \emptyset$. Given such a split of $t$, writing $t_2$ as $\beta_1\ldots \beta_p$ implies that, for $k=1, \ldots , p$, $M(\beta_k)$ contains $y_k$ such that $Y_k \to B_k$ is in ${\cal FD}$ and $\mu_i(y_k) \cap \mu_i(\beta_k) \ne \emptyset$. Moreover, we have that $\mu_i(t_1)\cap \left(\bigcap_{k=1}^{k=p}\mu_i(y_k)\right) \ne \emptyset$.
Thus  by our induction hypothesis, ${\sf LoCl}(D^*)$ contains a tuple of the form $q_1t_1y_1\ldots y_p$ and $p$ tuples of the form $q'_ky_k\beta_k$ ($k=1, \ldots ,p$).

Now, given $k=1, \ldots, p$, if $q_1t_1y_1\ldots y_p$ is not defined over $B_k$, $q_1t_1y_1\ldots y_p\beta_k$ appears in $D^*$ due to the statement line~\ref{line:add-plus} of Algorithm~\ref{algo:chase}. Assume now that $q_1t_1y_1\ldots y_p$ is defined over $B_k$ but with a value different than $\beta_k$, say $\beta'_k$.

By construction of $t_1$ and $t_2$, $B_k$ is not in $sch(t_1)$, and so, $B_k$ is either in $sch(q_1)$ or in $Y_i$ for some $i=1, \ldots ,p$. In any case, denoting $sch(q_1y_1\ldots y_p)$ by $Q$, we write $q_1t_1y_1\ldots y_p$ as $r^kt_1b'_k$ where $r^k = (q_1y_1\ldots y_p).(Q \setminus B_k)$. Considering that $r^kt_1b'_k$ and $q'_ky_k\beta_k$ have the same $Y_k$-value $y_k$, the statement line~\ref{line:add-plus-bis} of Algorithm~\ref{algo:chase} applies and $r^kt_1\beta_k$ is inserted in $D^*$. During the subsequent iterations, a similar argument shows that $D^*$ contains a tuple of the form $rt_1\beta_1\ldots \beta_p$, that is $rt_1t_2$ or $rt$. It thus follows that $t$ is in ${\sf LoCl}(D^*)$. 
\\
$\bullet$ If $\bigcap_{\alpha \sqsubseteq t}M(\alpha) \ne \emptyset$, the same reasoning as above applies considering that $t_1$ is empty and $t_2=t$. After the iterations, $D^*$ contains a tuple of the form $r\beta_1\ldots \beta_p$, that is $rt$. Thus,  in this case again, $t$ is in ${\sf LoCl}(D^*)$, and the proof is complete. \hfill$\Box$
\end{proof}
\section{Proof of Proposition~\ref{prop:closure}}\label{append:proof-prop-closure}
{\bf Proposition~\ref{prop:closure}.}~
Let $\Delta =(D, {\cal FD})$ and $t$ be such that $\Delta \vdash t$. For every tuple $q$ and every  $a$ in $dom(A)$ such that $q \sqsubseteq t$ and $a \sqsubseteq t$, we have: $a$ belongs to $q^+$ if and only if $A$ belongs to $Q^+$.
\begin{proof}
Assuming first $a$ in $q^+$, we show by induction on the steps of Algorithm~\ref{algo:closure} that $A$ is in $Q^+$. It is important to notice that since $q \sqsubseteq t$ and $\Delta \vdash t$, $\Delta \vdash q$ holds. Hence when running Algorithm~\ref{algo:closure} with $\Delta$ and $q$ as input, as shown in the proof of Lemma~\ref{lemma:inclusion}, $\mu \models \Delta$ holds if and only if $\mu \models \Delta_q$ holds. Thus, for every tuple $\tau$, $\Delta \vdash \tau$ holds if and only if $\Delta_q \vdash \tau$ holds.

\smallskip
If $a$ is in $q^+$ because of line~\ref{line:init-cl} in Algorithm~\ref{algo:closure}, then $A$ is in $Q$, showing that $A$ is in $Q^+$. If $a$ is inserted in $q^+$ because of line~\ref{line:test}, then there exist $X \to A$ in ${\cal FD}$ and $x$ over $X$ such that every $b$ in $x$ belongs to $q^+$ and $\Delta_q \vdash xa$, that is $\Delta \vdash xa$. Assuming that the proposition holds for every $b$ in $x$ implies that every $B$ in $X$ is in $Q^+$. Thus, $X \subseteq Q^+$ holds, and so $A$ is in $Q^+$.

Conversely, let $A$ be in $Q^+$. If $A$ is in $Q$, then $q.A=a$, and so, $a$ is in $q^+$. Let us now assume that $A$ is not in $Q$, and let us show by induction on the execution of the loop computing $Q^+$ that $a$ belongs to $q^+$. Indeed,  denoting by $Q'$ the current value of $Q^+$ when $A$ is inserted in $Q^+$, there exists $X \to A$ in ${\cal FD}$ such that  $X \subseteq Q'$. Thus, by our induction hypothesis,  every $\alpha$ in $q.X$ is in $q^+$. Moreover, since $\Delta \vdash t$ and $xa=t.XA$, $\Delta \vdash xa$. Hence, $\Delta_q \vdash xa$, and by the statement line~\ref{line:test} of Algorithm~\ref{algo:closure}, $a$ belongs to $q^+$. The proof is therefore complete.\hfill$\Box$
\end{proof}
\section{Proof of Lemma~\ref{lemma:incons}}\label{append:proof-prop-chase}
{\bf Lemma~\ref{lemma:incons}.}~
{\em
Given  $\Delta=(D, {\cal FD})$, a tuple $t$ is inconsistent in $\Delta$ if and only if $t \in  {\sf Inc}(\Delta)$.
}
\begin{proof}
We note first  that for every $x$ in $inc(X \to A)$ there exist $a_1, \ldots ,a_k$  ($k \geq 2$) in $dom(A)$ such that for every $i=1, \ldots , k$, $xa_i \in {\sf LoCl}(D^*)$, thus such that $\Delta \vdash xa_i$. Therefore, for every $i=1, \ldots , k$, $a_i$ belongs to $x^+$, and so, $\Delta \vdash (x \preceq a_1 \sqcap \ldots \sqcap a_k)$ holds, showing that $x$ is inconsistent in $\Delta$.

We now prove that if $q$ belongs to ${\sf Inc}(\Delta)$  then $q$ is inconsistent in $\Delta$. Indeed, by Algorithm~\ref{algo:incons}, there exist $t$ in $D^*$, $X \to A$ in ${\cal FD}$,  such that $Q \subseteq T$, $t.Q=q$, $t.X \in inc(X \to A)$, and $X \subseteq Q^+$. Since $\Delta \vdash t$, Proposition~\ref{prop:closure} applies, showing that for every $\alpha$ in $x$, $\alpha$ belongs to $q^+$, where $q=t.Q$. Hence,  every $a_i$ in $x^+$ is also in $q^+$, and thus for every $i=1, \ldots , k$, $\Delta \vdash (q \preceq a_i)$, implying that $q$ is inconsistent in $\Delta$.

Conversely, if $q$ is inconsistent in $\Delta$, then $\Delta \vdash q$ and $\Delta \dashapprox q$. Thus, there exist $A$ in $U$ and $a$ and $a'$ in $dom(A)$ such that $\Delta \vdash (q \preceq a \sqcap a')$, implying that $\Delta \vdash qa$ and $\Delta \vdash qa'$. By Lemma~\ref{lemma:chase}, $D^*$ contains two rows $t$ and $t'$ such that $qa \sqsubseteq t$ and $qa' \sqsubseteq t'$. This implies that $A$ can not be in $Q$ because otherwise, we would for instance have $qa=q$ and thus $qa'=qaa'$, which does not define a tuple. Since,  by Definition~\ref{def:closure}, $\Delta \vdash (q \preceq a \sqcap a')$ implies  that $a$ and $a'$ are in $q^+$, by Proposition~\ref{prop:closure}, $A$ is in $Q^+$. Since $A$ is not in $Q$, ${\cal FD}$ contains $X \to A$ such that $X \subseteq Q^+$. It follows that $A$ is in $X^+$, $t.XA=xa$ and $t'.XA=xa'$. Therefore $x$ belongs to $inc(X \to A)$.

Summing up, we have found a tuple $t$ in $D^*$ and $X \to A$ in ${\cal FD}$ such that $t.X$ belongs to $inc(X \to A)$, $q \sqsubseteq t$ and $X \subseteq Q^+$. It thus follows from line~\ref{line:ins-inc} of Algorithm~\ref{algo:incons} that $q$ belongs to ${\sf Inc}(\Delta)$, which completes the proof.\hfill$\Box$
\end{proof}
\section{Proof of Proposition~\ref{prop:cons-answers}}\label{append:proof-cons-answers}
{\bf Proposition~\ref{prop:cons-answers}.}~
{\em
Given $\Delta=(D, {\cal FD})$ and a query {\rm $Q: {\sf SELECT}$ $X$ {\sf [WHERE $\Gamma$]}}, Algorithm~\ref{algo:answer-repair} correctly computes $ans^\downarrow_\Delta(Q)$ and $ans^\uparrow_\Delta(Q)$.
Moreover, the following holds:
$ans^\downarrow_\Delta(Q) \subseteq ans^\uparrow_\Delta(Q) \subseteq ans^+_\Delta(Q)$.
}
\begin{proof}
In this proof we respectively denote by $ans^\downarrow$ and $ans^\uparrow$ the two sets returned by Algorithm~\ref{algo:answer-repair} and we successively show that $ans^\downarrow =ans^\downarrow (Q)$ and $ans^\uparrow =ans^\uparrow (Q)$.

First it is clear that all selected tuples are defined over $X$ and that they satisfy $\Gamma$. Moreover, assuming that the previous two conditions are satisfied, a tuple $t$ generates an $X$-value in $ans^\downarrow_\Delta(Q)$, if and only if $t$ is in every repair $R$ of $\Delta$, that is if and only if $t$ contains no conflicting value with respect to some dependency in ${\cal FD}$. This condition being precisely that on line~\ref{line:test1} of Algorithm~\ref{algo:answer-repair}, we obtain that $ans^\downarrow = ans^\downarrow_\Delta(Q)$. 

\smallskip
Now, given $x$ in  $ans^\uparrow_\Delta(Q)$,  assume that the condition on line~\ref{line:test2} is not satisfied. In this case there exist $t$ in $D^*$ and $Y \to B$  in ${\cal FD}$ such that $t$ satisfies $\Gamma$, $t.Y$ is in $inc(Y \to B)$, $t.X=x$ and $t.B$ occurs in $x$. Thus, by the statement on line~\ref{line:add-plus} in Algorithm~\ref{algo:chase}, $D^*$ contains a tuple $t'$ such that $sch(t) \subseteq sch(t')$, $t'.Y=t.Y$ and $t'.B \ne t.B$. Hence, writing $x$ as $x'b$, $Rep(\Delta)$ contains a repair $R$ where $x'b$ occurs and a repair $R'$ where $x'b$ does not occur, showing that $x$ cannot belong to $ans^\uparrow_\Delta(Q)$. This is a contradiction showing that $ans^\uparrow_\Delta(Q)\subseteq ans^\uparrow$ holds.

Conversely, we first notice that for every tuple $q$ occurring in a repair $R$ but not in another repair $R'$, there exist $q'$ in $R'$, $B$ in $sch(q)$ and $Y \to B$ in ${\cal FD}$ such that $q.Y=q'.Y=y$, $y \in inc(Y \to B)$ and $q.B\ne q'.B$. Now, if $x$ is a tuple over $X$ for which the condition on line~\ref{line:test2} is satisfied, then there exists $t$ in $D^*$ such that $t$ satisfies $\Gamma$, $t.X=x$ and for every $Y \to B$ in ${\cal FD}$ such that $y$ is in $inc(Y \to B)$, $B$ is not in $X$. Therefore, it turns out that $x=t.X$ occurs in $\pi_X(\sigma_\Gamma(R))$ for every $R$ in $Rep(\Delta)$, which shows that $x$ is in $ans^\uparrow_\Delta(Q)$.

\smallskip
As for the inclusions, in Algorithm~\ref{algo:answer-repair}, the condition on line~\ref{line:test1} implies that on line~\ref{line:test2}, showing the first inclusion. Moreover, this second condition implies the one on line~\ref{line:test0} of Algorithm~\ref{algo:answer}, showing the second inclusion. The proof is therefore complete.\hfill$\Box$
\end{proof}

\end{document}